\documentclass{article}[12pt]
\usepackage{authblk}
\usepackage{tabto} 
\usepackage{relsize} 
\usepackage{graphicx} 
\usepackage{tikz} 
\usepackage{float} 
\usetikzlibrary{calc,arrows}
\usepackage{amsmath,amssymb,amsthm} 
\theoremstyle{Theorem}
\newtheorem{theorem}{Theorem}
\newtheorem{lemma}{Lemma}
\newtheorem{remark}{Remark}

\newcommand{\tarc}{\mbox{\large$\frown$}}
\newcommand{\arc}[2][-3ex]{{#2}{\kern #1{\raisebox{1.5ex}{\tarc}}}}
\usepackage{amsfonts} 
\usepackage[export]{adjustbox} 
\usepackage{algpseudocode}
\usepackage{algorithm}
\usetikzlibrary{calc,arrows}
\usepackage{comment}
\usepackage{wrapfig}
\usepackage{hyperref} 
\hypersetup{
    colorlinks,
    linkcolor={red!50!black},
    citecolor={green!50!black},
    urlcolor={blue!80!black}
}
\usepackage{siunitx} 
\usepackage{appendix}
\usepackage{multirow} 
\usepackage[
backend=biber,
style=alphabetic,
sorting=ynt
]{biblatex}
\usepackage{geometry}
\usepackage{array}
\usepackage{longtable}

\usepackage{todonotes}

\def\Xint#1{\mathchoice
   {\XXint\displaystyle\textstyle{#1}}%
   {\XXint\textstyle\scriptstyle{#1}}%
   {\XXint\scriptstyle\scriptscriptstyle{#1}}%
   {\XXint\scriptscriptstyle\scriptscriptstyle{#1}}%
   \!\int}
\def\XXint#1#2#3{{\setbox0=\hbox{$#1{#2#3}{\int}$}
     \vcenter{\hbox{$#2#3$}}\kern-.5\wd0}}

\def\dashint{\Xint-}

\title{Extensible membranes in inviscid flow: aerodynamics and singular limits}
\author[1]{Yu Jun Loo \thanks{looyujun@umich.edu}}
\author[1]{Silas Alben \thanks{alben@umich.edu}}
\affil[1]{Department of Mathematics, University of Michigan, Ann Arbor, MI 48109, USA}
\date{}

\begin{document}

\maketitle
\begin{abstract}
We develop a two-dimensional inviscid model for extensible membranes with continuous vortex shedding from both edges, coupled to a weak spectral Galerkin discretization of the membrane equation that permits direct computation at zero bending rigidity. Viewing the pressure jump as a flux for bound vorticity explains how membranes respond to concentrated fluid loading. For fixed-fixed membranes, compliance draws the body toward the leading-edge vortex, delaying its detachment and enhancing lift primarily during start-up; after periodic shedding develops, compliance mainly amplifies the force oscillations rather than mean lift. Vortex detachment can unload heavy membranes into compression, producing bending-regularized wrinkles. For $R_2>0$, the membrane equation remains well-posed independently of the sign of the tension, provided the membrane remains an immersion. Releasing the trailing edge creates an intrinsic degeneracy: the tension vanishes at the free endpoint, making the $R_2=0$ problem formally underdetermined, while the weak formulation selects its bounded-energy branch. This vanishing-tension endpoint traps pressure disturbances and drives rapid tip-snapping; the dominant frequency and effective wave number approach their zero-rigidity limits with an inverse-logarithmic correction arising from a singular Bessel branch. Finally, decreasing the stretching modulus destabilizes periodic flutter, and the repeated fluid-driven departure and elastic return suggest a homoclinic tangle underlying the transition from periodic to chaotic flutter.

\end{abstract}

\section{Introduction}
The vortex wake behind an elastic body in a high-Reynolds-number flow is shaped by the body's compliance. As the body bends and stretches, it transfers momentum to the fluid through shed vortices whose strength and topology depend sensitively on its elastic properties. This coupling has attracted interest because compliance can enhance lift \cite{gordnier2009high, gehrke2025highly} and reduce drag \cite{alben2002drag}. For filaments and plates with moderate-to-large bending rigidity ($R_2$), the coupled body--wake problem has been studied extensively in simulation \cite{shelley2011flapping, alben2009simulating, alben2022dynamics, connell2007flapping, argentina2005fluid, huang2010three, tian2013role, hwa2014flapping} and experiment \cite{jia2018experimental, eloy2008aeroelastic, eloy2012origin}. The present work concerns the softer limit: extensible membranes with small bending rigidity, which can stretch substantially and exhibit fine wrinkles under fluid loading \cite{attar2012experimental, tzezana2019thrust, mavroyiakoumou2020large, mavroyiakoumou2025sail}. A membrane pinned at both ends can model a sail, while one pinned at one end and free at the other models a flag. Both configurations arise in engineering and biology, e.g. textile sails, the compliant wings of bats, flexible flow sensors, and energy harvesters \cite{konow2015spring, wang2016stability, yu2019review,yeh2025influence}, and both configurations are studied here.
\newline

In this softer regime, the membrane deformation actively reorganizes the separated flow. In the hovering experiments of \cite{gehrke2025highly, gehrke2022aeroelastic}, for example, an elastic membrane cambers into a C-shape and stretches toward the core of the leading-edge vortex. This delays vortex detachment, concentrates circulation near the leading edge, and substantially amplifies lift. Capturing this and related mechanisms requires continuous, stable leading-edge vortex shedding, yet previous vortex sheet treatments of both inextensible and extensible bodies have suppressed leading-edge shedding for numerical stability \cite{alben2009simulating, mavroyiakoumou2020large, mavroyiakoumou2025sail}.
\newline

In this work, we develop an inviscid model of an extensible membrane that resolves continuous vortex shedding from both the leading and trailing edges. The fluid model extends the continuous leading-edge shedding framework for thin rigid bodies developed in \cite{loo2025falling, loo2026comparisoninviscidviscousvortex} to deformable bodies. In the rigid case, this framework produces average relative force agreement to within \(10\)--\(20\%\) of direct Navier--Stokes simulations across roughly seventy distinct body motions \cite{loo2026comparisoninviscidviscousvortex}. The membrane equation is discretized using a spectral Galerkin approximation of its weak form (section \ref{membrane section} and appendix \ref{solving the membrane equation appendix}). This formulation naturally incorporates the variational boundary conditions and allows the fixed-free problem to be computed directly in the singular limit of zero bending rigidity. In that limit the strong system is formally underdetermined because the vanishing tension at the free edge removes one independent boundary condition. Previous finite-difference and collocation treatments have resolved this either by introducing an additional boundary condition \cite{mavroyiakoumou2020large} or by retaining a small positive bending rigidity \cite{connell2007flapping}. In the weak formulation used here, the finite-energy constraint excludes the singular branch, therefore selecting the bounded-energy solution.
\newline

We interpret the resulting dynamics through a physical picture afforded by the inviscid formulation. As shown in section~\ref{pressure jump equation as a conservation law section}, the pressure-jump equation is obtained by projecting the Euler momentum conservation law onto the membrane and taking its jump across the surface. The pressure jump then appears as a flux for bound vorticity. In the same way that pressure gradients transport fluid momentum, pressure-jump gradients transport bound vortex sheet strength along the membrane. Vorticity therefore tends to accumulate near extrema of the pressure jump before being shed from the membrane edges into the wake. This picture makes the coupling between membrane geometry, surface pressure, and vortex shedding transparent and provides a common explanation for the principal phenomena observed in the simulations. Together with Kelvin's theorem, this conservation-law viewpoint also explains the alternating signs of the shed vortices: an attached vortex requires compensating bound circulation of the opposite sign on the body, which is transported to an edge by the conservation law and shed as the next vortex.
\newline

For fixed-fixed membranes (section \ref{fixed fixed membranes section}), the low-pressure core of the leading-edge vortex draws the membrane toward it at moderate angles of attack. The resulting camber keeps the vortex close to the surface, delays its detachment, and increases lift and aerodynamic efficiency during start-up. Once alternating shedding develops, however, compliance amplifies both the force peaks and troughs and does not produce a universal increase in mean lift. At normal incidence, where symmetry precludes lift, the same attraction toward the symmetric vortex pair instead increases drag. Vortex detachment can also unload sufficiently heavy membranes so rapidly that their tension becomes locally negative and fine wrinkles develop. The zero-rigidity equation is then ill-posed because arbitrarily short waves grow arbitrarily rapidly \cite{triantafyllou1994dynamic}, whereas positive bending rigidity selects a finite wrinkling scale. In appendix \ref{well-posedness of the membrane equation appendix}, we prove that under sufficiently regular prescribed loading, the membrane equation with $R_2>0$ is locally well-posed regardless of the sign of the tension, provided that the membrane remains an immersion.
\newline

For fixed-free membranes (section \ref{fixed free membranes section}), decreasing the bending rigidity produces rapid snapping at the free trailing edge. When $R_2=0$, the tension and hence the membrane wave speed vanish at this edge, while the Kutta condition forces the pressure jump to vanish at the endpoint itself. Pressure disturbances consequently accumulate immediately upstream of the edge, forming an alternating pressure extremum that drives the tip back and forth and produces small hooks. Positive bending rigidity smooths these hooks, but its influence persists to remarkably small values of \(R_2\). We show numerically that the dominant tip frequency approaches its zero-rigidity limit with an approximately inverse-logarithmic correction before saturating at a constant value. That is, 
$$
    f(R_2)-f_0,
    \;
    k_{\mathrm{eff}}(R_2)-k_{\mathrm{eff},0}
    =
    O\left(\frac{1}{|\log R_2|}\right).
$$

Linear analysis traces this slow convergence to the logarithmically-singular \(Y_0\) branch of the zero-rigidity problem, whose coefficient must vanish inverse-logarithmically as the bending layer contracts toward the free edge. More generally, an energy estimate for uniformly-tensioned membranes with Dirichlet boundary data gives convergence of the positive-rigidity solutions to the \(R_2=0\) solution at a rate of at worst \(O(R_2^{1/2})\), with an improved \(O(R_2)\) rate under additional regularity and boundary assumptions. Although the uniform-tension hypothesis fails at the free endpoint, the numerical results show the same slow approach to the zero-rigidity flag dynamics, consistent with the linearized analysis of \cite{manela2017hanging}.
\newline

Finally, in section \ref{effect of R3 section}, we examine how the stretching modulus ($R_3$) alters the periodic fixed-free flag dynamics observed in \cite{alben2008flapping} at $R_2=0.25$. For sufficiently large $R_3$, stretching remains negligible and the instability of the undeformed state saturates into a stable periodic flapping orbit. When $R_3$ decreases below $\mathcal{O}(1)$, appreciable stretching destabilizes this orbit and the motion becomes chaotic. During each excursion, the flow drives the flag away from its undeformed state before bending and stretching return it toward that state, where the flutter instability initiates the next excursion. This repeated departure and return is consistent with a global return from the unstable manifold of the undeformed state toward its stable manifold. A transverse intersection between these manifolds would produce a homoclinic tangle and provide a dynamical-systems mechanism for the observed transition from periodic to chaotic flutter.
\newline

The remainder of the paper is organized as follows. Section~\ref{model section} develops the coupled membrane--fluid model, including the membrane equations in section~\ref{membrane section} and the inviscid fluid equations in section~\ref{fluid equations section}. Section~\ref{pressure jump equation as a conservation law section} then derives the pressure-jump equation directly from the Euler momentum conservation law. Section~\ref{numerical results section} presents the numerical results for fixed-fixed (section~\ref{fixed fixed membranes section}) and fixed-free (section~\ref{fixed free membranes section}) membranes, and section~\ref{conclusion section} summarizes the main findings. Complete derivations and supplementary analysis of the fluid and elastic models, numerical method, and results are collected in appendices A--F, which are arranged as a self-contained account rather than in order of first reference from the main text.
\newline

\section{Model}
\label{model section}
The inviscid model we employ builds on prior work by the present authors \cite{loo2025falling, loo2026comparisoninviscidviscousvortex} while incorporating several key improvements that extend it from rigid plates to elastic membranes. The coupled membrane model that we employ is similar to that described in \cite{mavroyiakoumou2020large,mavroyiakoumou2025sail}. The critical distinction lies in our numerical schemes for both the ambient inviscid fluid and the elastic membrane. The inviscid model is based on recent advances that enable continuous, stable leading-edge vortex shedding \cite{loo2025falling, loo2026comparisoninviscidviscousvortex}. The numerical scheme for the membrane model is based on a spectral Galerkin approximation of the weak form of the equations.

\subsection{Membrane equation}
\label{membrane section}
A membrane is a thin elastic sheet with small bending rigidity. In this section we will derive the weak form of the membrane equation from a variational principle. We derive the constitutive relations for the membrane by assuming that the body is a pencil of parallel Hookean fibers, and linearizing the bending energy about a small stretch assumption. A direct derivation (without bending rigidity) can be found in \cite{mavroyiakoumou2020large}.
\newline

Consider the motion of an extensible membrane that is fixed at two endpoints and is parameterized by the complex-valued function $$\zeta(\alpha, t) = x(\alpha, t) + i y(\alpha, t).$$
Here $-L\leq\alpha \leq L$ is the material coordinate of the membrane, and represents its signed arc length coordinate when unstretched, and $i = \sqrt{-1}$ is the unit imaginary number. By abuse of notation, we let $\zeta$ also refer to the vector $[x,y]^T$.
\newline

First, we fix our notation. Let $s$ denote the arc length parameter of the deflected membrane, so that its stretch factor is given by $\lambda = \partial_\alpha s$. The symbols $h$, $W$ and $\rho_s$ denote the membrane thickness, in-plane width, and mass density per unit volume respectively. Let $E$ denote the Young's modulus of the membrane and $B=EI$ its flexural rigidity. Let $\hat{\boldsymbol{s}} = \partial_{\alpha}\zeta/\partial_{\alpha}s=\lambda^{-1}\partial_\alpha\zeta$ and $\hat{\boldsymbol{n}} = \hat{\boldsymbol{s}}^\perp = i\hat{\boldsymbol{s}}$ represent the unit tangent and normal vectors of the membrane respectively. We refer to the side of the body that $\pm\hat{\boldsymbol{n}}$ points to as the $\pm$ side of the membrane. Let $\kappa_0 = \partial_\alpha\theta$ denote the material curvature, where $\theta$ is the unit tangent angle on the membrane (i.e.~$\exp(i\theta) = \hat{\boldsymbol{s}}$) \cite{gerstmayr2008correct}. Note that the geometric curvature of the membrane $\kappa$ is related to the material curvature by
$$\kappa = \partial_s\theta = (\partial_\alpha s)^{-1} \partial_\alpha\theta = \lambda^{-1}\kappa_0.$$ 

In figure \ref{model figure} we illustrate these quantities, showing the membrane ($\zeta$) in dark gray. The shed vorticity $\omega$ is shown on a blue-red scale, and is parameterized by $\zeta_{\pm}$ which we describe carefully in the subsequent subsection \ref{fluid equations section}.

\begin{figure}[H]
    \centering
    \includegraphics[width=1\linewidth, trim = 1.5cm 1cm 0.5cm 2cm, clip]{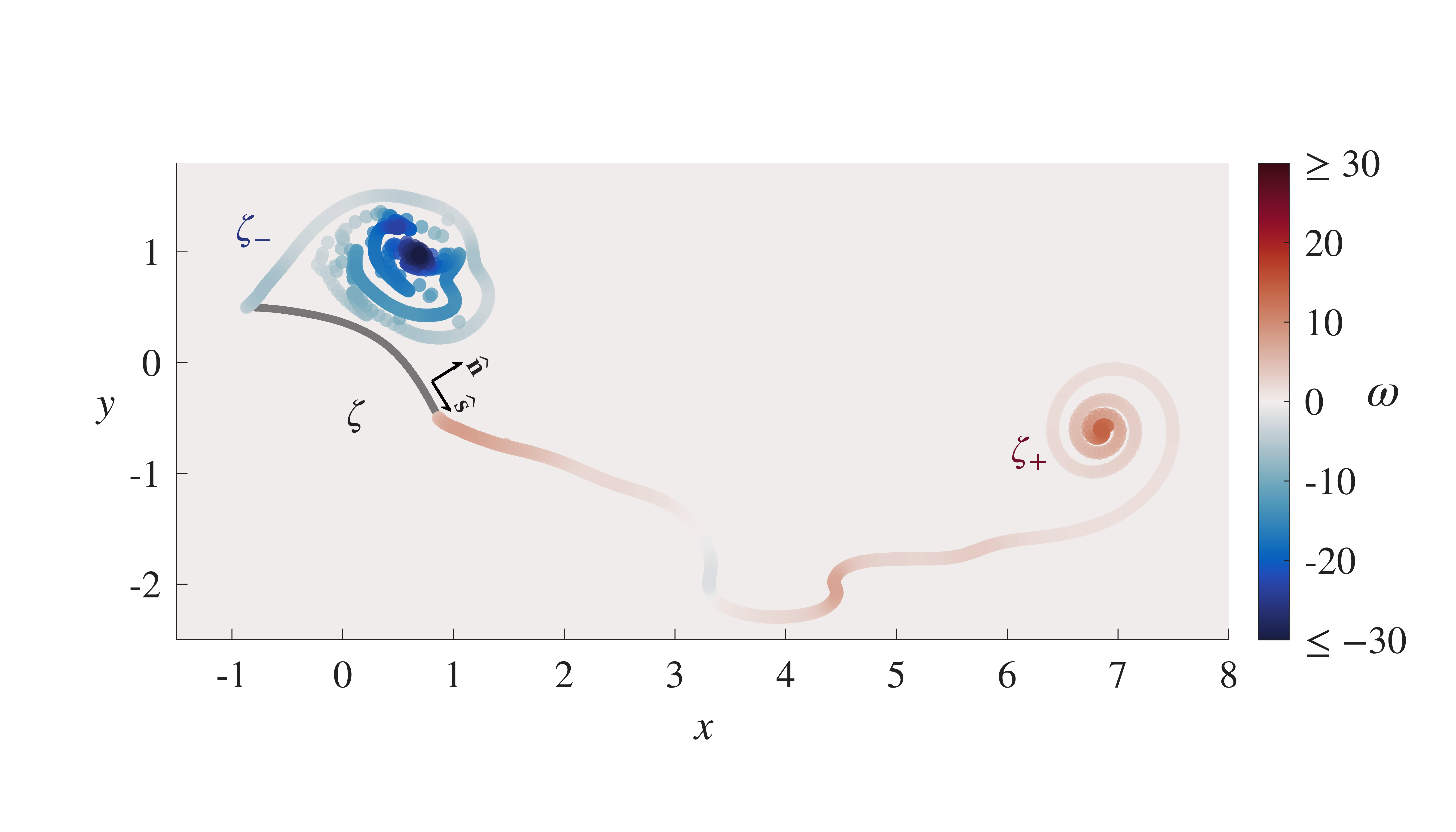}
    \caption{Diagram showing the membrane $\zeta$ in gray in a uniform flow. The shed vorticity $\omega$ in the ambient fluid is shown in blue-red and is parameterized by $\zeta_-,\zeta_+$. The local tangent frame spanned by $\hat{\boldsymbol{s}},\hat{\boldsymbol{n}}$ is shown in black.  The vorticity $\omega$ is obtained using equation \eqref{vorticity formula}.}
    \label{model figure}
\end{figure}
In this notation, the Lagrangian $\mathcal{L}$ is given by 

\begin{align}
    \mathcal{L} &= \int_{-L}^L\frac{1}{2}\rho_shW|\partial_t\zeta|^2d\alpha - \bigg(\int_{-L}^{L} \frac{1}{2}EhW(\lambda - 1)^2d\alpha + \int_{-L}^{L} \frac{1}{2}B|\kappa_0|^2d\alpha \bigg) \label{lagrangian equation} \\
    &=: \mathcal{K} - \bigg(\mathcal{U}_{stretch} + \mathcal{U}_{bend}\bigg) \nonumber
\end{align}

Here $\mathcal{K}$ is the kinetic energy and the terms $\mathcal{U}_{stretch}, \mathcal{U}_{bend}$ represent the stored stretching and bending energies of the membrane. The forms that these terms take depend on the constitutive relations that the membrane respects. The constitutive relations that we use to determine the bending and stretching energies of the body in \eqref{lagrangian equation} are consistent with an elastic body that is formed by stacking Hookean fibers with the same elastic coefficient $E$ into a sheet. In appendix \ref{bending and stretching energy derivation appendix} we show how the constitutive relations can be derived from that assumption.
\newline

Let $U$ denote the velocity of the background flow, which we use as our characteristic velocity. 
Similarly, let $L$ (half the chord length) denote the characteristic length scale. Hence, if we let $\tilde{(\cdot)}$ denote the dimensionless variable corresponding to $(\cdot)$ so that  $$t = \tilde{t}\frac{L}{U},\ \alpha = \tilde{\alpha}L,\ \zeta = \tilde{\zeta}L,\ \kappa_0 = \tilde{\kappa}_0\frac{1}{L},$$

then the Lagrangian may be written in dimensionless form as 

\begin{equation}
\tilde{\mathcal{L}} = \int_{-1}^1 \frac{1}{2}R_1 |\partial_{\tilde{t}} \tilde{\zeta}|^2d\tilde{\alpha} - \bigg(\int_{-1}^1\frac{1}{2}R_3(\lambda-1)^2d\tilde{\alpha} + \int_{-1}^{1}\frac{1}{2}R_2|\tilde{\kappa}_0|^2d\tilde{\alpha} \bigg)
\end{equation}
where we have $$\tilde{\mathcal{L}} = \frac{\mathcal{L}}{\rho_fU^2L^2W},\  R_1 = \frac{\rho_sh}{\rho_fL},\ R_2 = \frac{B}{\rho_fU^2WL^3},\ R_3 = \frac{Eh}{\rho_fU^2L},$$

Henceforth we suppress all $\tilde{(\cdot)}$ to obtain our final dimensionless Lagrangian 

\begin{align}
\mathcal{L}&= \int_{-1}^1 \frac{1}{2}R_1 |\partial_{t} \zeta|^2d\alpha - \bigg(\int_{-1}^1\frac{1}{2}R_3(\lambda-1)^2d\alpha + \int_{-1}^{1}\frac{1}{2}R_2|\kappa_0|^2d\alpha \bigg)
\\
    &=: \mathcal{K} - \bigg(\mathcal{U}_{stretch} + \mathcal{U}_{bend} \bigg).
\end{align}

Let $\boldsymbol{f}$ denote the dimensionless force density exerted by the fluid on the membrane. At any time instant, its virtual work under an admissible displacement $\delta\zeta$ is
\begin{equation}
\delta\mathcal{W}_{\mathrm{ext}}
= \int_{-1}^1\boldsymbol{f}\cdot\delta\zeta,d\alpha .
\end{equation}
The extended Hamilton's principle \cite[pp.~268--272]{meirovitch2001fundamentals} therefore gives
\begin{equation}
\delta\int_0^{\tau_f}
\mathcal{L}\,dt
+
\int_0^{\tau_f}\delta\mathcal{W}_{ext}\,dt = 
\int_0^{\tau_f}
\delta\left(
\mathcal{K}
-\mathcal{U}_{stretch}
-\mathcal{U}_{bend}
+ \mathcal{W}_{ext}
\right)\,dt
=:\int_0^{\tau_f}\delta\mathcal{S}
=0,
\end{equation}
where $\tau_f$ is the final time and $\mathcal{S}$ is the action. Hence
\begin{equation}
\int_0^{\tau_f}\delta\mathcal{S}\,dt =\int_0^{\tau_f}\int_{-1}^1
\bigg(
-R_1\partial_{tt}\zeta
+R_3\partial_\alpha\big((\lambda-1)\hat{\boldsymbol{s}}\big)
-R_2\partial_\alpha(\partial_s\kappa_0\hat{\boldsymbol{n}})
+\boldsymbol{f}
\bigg)\cdot\delta\zeta\,d\alpha\,dt
=0.
\label{weak form membrane equation}
\end{equation}

In appendix \ref{deriving membrane equation appendix} we give a complete derivation of \eqref{weak form membrane equation}. Since this integral equation must hold for all virtual displacements $\delta\zeta$, we have 

\begin{equation}
     R_1\partial_{tt}\zeta  = R_3\partial_\alpha\big( (\lambda-1)\hat{\boldsymbol{s}}\big)- R_2\partial_\alpha( \partial_s\kappa_0\hat{\boldsymbol{n}}) \\ +\boldsymbol{f}
     \label{dimensionless membrane equation with arbitrary force}
\end{equation}

The bending rigidity term  $- R_2\partial_\alpha( \partial_s\kappa_0\hat{\boldsymbol{n}})$ in equation \eqref{dimensionless membrane equation with arbitrary force} often appears in a linearized form $-R_2\partial_\alpha^4\zeta$ in the literature \cite{zhu2002simulation, zhang2026flapping, connell2007flapping}. 
Both terms are equivalent upon applying a ``small stretching" assumption to the bending energy. Indeed, as discussed above the membrane potential energy incurred from bending is given by 
 \begin{align}
     \int_{-1}^1\frac{1}{2}R_2(\kappa_0)^2d\alpha = \int_{-1}^1\frac{1}{2}R_2(\partial_\alpha\theta)^2d\alpha 
     = \int_{-1}^1\frac{1}{2}R_2(\lambda\kappa)^2d\alpha 
     = \int_{-1}^1\frac{1}{2}R_2|\lambda\partial_s\partial_s\zeta|^2d\alpha &= \int_{-1}^1\frac{1}{2}R_2|\partial_\alpha(\lambda^{-1}\partial_\alpha\zeta)|^2d\alpha \notag \\
     &\approx\int_{-1}^1\frac{1}{2}R_2|\partial_\alpha^2\zeta|^2d\alpha. 
 \end{align}
 
The last approximation follows from the assumption that $\lambda =\partial_\alpha s \approx 1$ (i.e.~membrane stretching is small).  With this small-stretch approximation of the energy,
we may recompute the resulting force by taking its first variation:
\begin{equation}
    \delta\mathcal{U}_{bend} =\int_0^{\tau_f}\int_{-1}^1\frac{1}{2}R_2\delta|\partial_\alpha^2\zeta|^2d\alpha\,dt =\int_0^{\tau_f}\int_{-1}^1R_2\partial_\alpha^2\zeta\cdot \partial_{\alpha}^2(\delta\zeta)d\alpha\,dt 
    =\int_0^{\tau_f}\int_{-1}^1R_2\partial_\alpha^4\zeta\cdot (\delta\zeta)d\alpha\,dt 
\end{equation}
Hence substituting this into equation \eqref{weak form membrane equation} we see that the new bending rigidity term is $-R_2\partial_\alpha^4\zeta$ under this assumption.  This bending force results in the following membrane equation, which we use in this study:

\begin{equation}
     R_1\partial_{tt}\zeta  = R_3\partial_\alpha\big( (\lambda-1)\hat{\boldsymbol{s}}\big)- R_2\partial_{\alpha}^4\zeta \\ +\boldsymbol{f}
     \label{bending linearized membrane equation with arbitrary force}
\end{equation}

This bending force also has a physical motivation. Indeed, the constitutive law for the bending energy that we consider in this study $$\int_{-1}^1\frac{1}{2}R_2|\partial_\alpha^2\zeta|^2d\alpha$$ can be obtained as the continuum limit of a discrete extensible worm-like chain model \cite{PhysRevE.92.032603,barkema2012model,barkema2017efficient}, and the resulting continuous model has been considered in \cite{obermayer2009tension}. The discrete worm-like chain model represents a polymer chain with beads connected by springs; experimental studies support this worm-like chain description of individual biological polymers, including for example the stretching of tropoelastin molecules \cite{baldock2011shape} and the bending of collagen molecules \cite{rezaei2018environmentally}. Intuitively, this suggests that constitutive laws for gels of such polymers can likewise be constructed by taking the worm-like chain model to its continuum limit. Indeed, constitutive laws based on continuous worm-like chains reproduce the measured strain-stiffening of collagen gels at modest strains \cite{storm2005nonlinear} and of fibrin gels over a range that includes stretching of the constituent fibers \cite{piechocka2016multi}. Bat wing membranes provide a biological example in which collagen and elastin fibers contribute directly to aerodynamic function: elastin fibers promote wrinkling and extensibility, while the surrounding collagen-containing tissue bears increasing loads as the membrane becomes taut \cite{cheney2015wrinkle}. These connections motivate the use of equation \eqref{bending linearized membrane equation with arbitrary force} to investigate how the elasticity of extensible fibrous membranes influences their deformation under aerodynamic loading.
\newline

In this paper, we focus on the membrane limit in which $R_2$ is either zero or nonzero but small. Some of the aforementioned studies have fixed the bending rigidity at a small nonzero value, e.g. to study a very flexible but nearly inextensible flag with fixed-free boundary conditions \cite{connell2007flapping} or to study membranes fixed at both ends that undergo significant stretching \cite{sun2022dynamics, zhang2026flapping}. A small but nonzero bending rigidity has been found to improve the robustness of numerical simulations \cite{connell2007flapping,zhang2026flapping}. We find a similar effect in our simulations, i.e. a small nonzero $R_2$ stabilizes the membrane dynamics by inhibiting the growth of spurious small-scale features while having little effect on the large-scale features. In appendix \ref{convergence in the limit of zero bending rigidity section} we clarify the effect of small but nonzero $R_2$. We show that for the boundary conditions considered in this study, solutions to equation \eqref{bending linearized membrane equation with arbitrary force} with $R_2 >0$ converge to the solution with $R_2 = 0$ in both position and velocity with respect to the energy norm, at a rate of at worst $\mathcal{O}(R_2^{1/2})$.
\newline

In the present study, the force density $\boldsymbol{f}$ is applied by an ambient inviscid fluid. In the absence of viscosity, only pressure forces remain. Let $p^\pm$ denote the limiting values of the pressure on the $\pm$ side of the membrane. The pressure force applied to any infinitesimal membrane segment of length $\Delta s$ by the fluid is thus  $ p^-\hat{\boldsymbol{n}}\Delta s + p^+(-\hat{\boldsymbol{n}})\Delta s=: [p]^-_+\hat{\boldsymbol{n}}\Delta s$. Hence the (material) force density is given by $$\boldsymbol{f} = [p]^-_+\hat{\boldsymbol{n}}\frac{\Delta s}{\Delta\alpha} =  [p]^-_+\hat{\boldsymbol{n}}\partial_\alpha s.$$ 

 The dimensionless membrane equation is hence
 \begin{equation}
     R_1\partial_{tt}\zeta  = R_3\partial_\alpha\big( (\lambda-1)\hat{\boldsymbol{s}}\big)- R_2\partial_\alpha^4\zeta + [p]^-_+\hat{\boldsymbol{n}}\partial_\alpha s.
     \label{dimensionless membrane equation}
\end{equation}

When $R_2 = 0$, this is exactly the equation in \cite{mavroyiakoumou2020large} in the absence of pretension. When $\partial_\alpha s = 1$, the membrane is inextensible and this equation agrees with that in \cite{alben2009simulating}. Similar to \cite{sun2022dynamics, zhang2026flapping}, we solve the weak formulation of \eqref{dimensionless membrane equation}, given by 

\begin{equation}
    \delta\mathcal{S} =\int_0^{\tau_f}\int_{-1}^1\bigg(-R_1\partial_{tt}\zeta +R_3\partial_\alpha\big( (\lambda-1)\hat{\boldsymbol{s}}\big) - R_2\partial_\alpha^4\zeta +\boldsymbol{f} \bigg)\cdot\delta\zeta\,d\alpha\, dt = 0
    \label{linearized weak form membrane equation}
\end{equation}
This integral equation is discretized on the Chebyshev-Lobatto grid using Clenshaw-Curtis weights via the spectral Galerkin method with numerical integration (G-NI) \cite{canuto2006spectral}. Solving the weak formulation has the advantage that the discrete energy of the membrane can only change due to work done by the fluid on the membrane through the pressure term. Moreover, the eigenvalues of the discretized operators have the same signs as the eigenvalues of the actual operators in the continuous equations they represent. We find that this improves the stability of the numerical model. In appendix \ref{solving the membrane equation appendix} we describe how this is done.
\subsubsection{The effect of $R_2 > 0$}
\label{R2 > 0 section}

In the physical world, membranes have small but nonzero thicknesses and therefore nonzero bending rigidities ($R_2 > 0$). Such membranes sometimes lose tension locally, and so may buckle, wrinkle, and crumple. This is not surprising to anyone who has tried to lay out a bed sheet neatly.
\newline

When $R_2 = 0$, equation \eqref{bending linearized membrane equation with arbitrary force} cannot model such phenomena. Let $\boldsymbol{f} = \boldsymbol{0}$. It is easy to show that in the absence of bending rigidity ($R_2 = 0$), equation \eqref{bending linearized membrane equation with arbitrary force} becomes ill-posed whenever $T = R_3(\lambda - 1) < 0$ on any open material interval $\alpha_1 < \alpha < \alpha_2$ \cite{triantafyllou1994dynamic}. To illustrate the mechanism, consider a horizontal, uniformly compressed base state with constant stretch $\lambda_0<1$, and linearize the transverse dynamics about this state. Since a transverse perturbation changes $\lambda$ only at second order, the base tension is constant to first order. The resulting linearized transverse equation is
\begin{equation}
R_1\partial_{tt}y = -R_3|e|\partial_{\alpha\alpha}y,
\label{linearized membrane equation with negative tension}
\end{equation}
where
\[
e=\frac{\lambda_0-1}{\lambda_0}<0.
\]
Consider the bounded sinusoidal perturbation $y = \sin({ik\alpha + \sigma t})$ with wave number (spatial frequency) $k$ and growth exponent $\sigma$. Substituting this ansatz into equation \eqref{linearized membrane equation with negative tension}, we see that the dispersion relation that $(\sigma,k)$ must satisfy is
\begin{equation}
R_1\sigma^2 = R_3|e|k^2
\implies
|\sigma| = \sqrt{\frac{R_3|e|}{R_1}}\,|k|.
\end{equation}

Hence every sinusoidal perturbation with wave number $k$ has an exponentially growing branch with growth rate
\[
\sigma = \sqrt{\frac{R_3|e|}{R_1}}\,|k|.
\]
In particular, the growth rate is unbounded as $|k|\rightarrow\infty$, so arbitrarily short-wavelength perturbations grow on arbitrarily short time scales. As the growth rate $\sigma$ is unbounded, the solution operator is thus discontinuous, and the idealized membrane with $R_2 = 0$ is therefore ill-posed with negative tension. In plain language, this states that equation \eqref{dimensionless membrane equation with arbitrary force} cannot model membrane compression. Intuitively this makes sense. Suppose a membrane with zero bending rigidity is under compression. At any point on the membrane, without bending rigidity, the compressed membrane cannot resist buckling as it can reach a lower energy state by bending sharply there. The membrane thus buckles at every point, wrinkling with arbitrarily large spatial frequency. Since this is impossible, the problem is ill-posed. In the language of PDEs, this can be explained concisely as follows: when $R_2 = 0$ and $T<0$ the equation transforms from hyperbolic to elliptic in time and hence becomes ill-posed as an initial value problem. When $R_2 > 0$, the PDE remains hyperbolic, independent of the sign of $T$, and so is well-posed, independent of the sign of $T$.
\newline

Membranes in the real world, however, do have bending rigidity, and often crumple and wrinkle in ways that equation \eqref{dimensionless membrane equation with arbitrary force} with $R_2 = 0$ cannot model (see section \ref{compression example section} for one such example). On the other hand, for any $R_2 >0$ the linearized problem becomes
\begin{equation}
R_1\partial_{tt}y = -R_3|e|\partial_{\alpha\alpha}y - R_2\partial_{\alpha\alpha\alpha\alpha}y
\label{linearized membrane equation with negative tension and R2}
\end{equation}
so that the dispersion relation is given by
\begin{equation}
R_1\sigma^2 = R_3|e|k^2 - R_2k^4.
\label{dispersion relation equation}
\end{equation}
The unstable modes therefore satisfy

$$
    |k| < \sqrt{\frac{R_3|e|}{R_2}},
$$

and their growth exponent is bounded by

$$
    0\leq \sigma \leq \frac{R_3|e|}{2\sqrt{R_1R_2}}.
$$

Hence, the equation is unstable only to sinusoidal perturbations with a limited range of wave numbers, and their growth rates are bounded by $\sigma \leq \frac{R_3|e|}{2\sqrt{R_1R_2}}$.
Hence, its linearized solution operator remains continuous, and the problem does not necessarily become ill-posed when the tension becomes negative. In appendix \ref{well-posedness of the membrane equation appendix} we confirm this, and provide a rigorous proof that the membrane equation with $R_2 > 0$ remains well-posed regardless of the sign of the tension.
\newline

In summary, when $R_2 = 0$ the membrane equation \eqref{dimensionless membrane equation with arbitrary force} becomes ill-posed in the presence of any compression. On the other hand, when $R_2 >0$, however small, the same equation \eqref{dimensionless membrane equation with arbitrary force} remains well-posed. Moreover, as discussed in the previous section, when the tension is uniformly positive, the solutions with $R_2 > 0$ converge to the solution with $R_2 = 0$ uniformly on any closed time interval.

\subsubsection{Boundary conditions}
In this study we consider two sets of boundary conditions: fixed-fixed and fixed-free.

\paragraph{Fixed-fixed boundary conditions:}

For fixed-fixed boundary conditions we prescribe the motion of both of the endpoints $\zeta(\pm1,t)$, as follows:

\begin{align}
    \zeta|_{\alpha=\pm1} &= \pm e^{i\beta} \\
\partial_\alpha^2\zeta\big|_{\alpha =\pm1} &= 0. 
\end{align}
Here, $\beta$ is the orientation angle of the chord connecting the two membrane endpoints. Physically, $\partial_\alpha^2\zeta\big|_{\alpha =\pm1} = 0$ means zero resultant moments at the endpoints, or just zero material curvature vector, so the ends can rotate freely. Our fixed-fixed boundary conditions are also called pinned boundary conditions \cite{timoshenko1959theory}.

\paragraph{Fixed-free boundary conditions:}
For the fixed-free boundary conditions, we apply the pinned condition at the leading edge only:

\begin{align}
    \zeta|_{\alpha=-1}= -e^{i\beta}, \ &\ R_2\partial_\alpha^2\zeta\big|_{\alpha =-1} = 0 \\ &\text{and} \nonumber \\
    \big(R_3(1-\lambda^{-1})\partial_\alpha\zeta +R_2\partial_\alpha^3\zeta\big) \big|_{\alpha = 1}= 0,\ &\ R_2\partial_\alpha^2\zeta\big|_{\alpha =1}\ \  = 0.
\end{align}

The position of the leading edge of the membrane is prescribed. The trailing edge is allowed to move freely. Physically, $\big(R_3(1-\lambda^{-1})\partial_\alpha\zeta +R_2\partial_\alpha^3\zeta\big) \big|_{\alpha = 1}= 0$ means zero resultant force at the free end.
\newline

A complete derivation of the boundary conditions from the variational formulation including their physical interpretation is included in appendix \ref{solving the membrane equation appendix}.

\subsection{Fluid equations}
\label{fluid equations section}
Having discussed our model for the membrane, we now describe our inviscid fluid model. A complete description of the model can be found in \cite{loo2025falling} in the context of falling rigid bodies including a convergence study that shows that it is approximately first-order in time and second-order in space. In \cite{loo2026comparisoninviscidviscousvortex} we discuss some small improvements to the model that allow it to match direct Navier-Stokes simulations of rigid bodies with an average error of about $10-20\%$ in almost all motions considered, including pitching, heaving, and uniformly rotating plates. 
\newline

In our zero-viscosity model there is no diffusion of vorticity from the vortex sheets, only advection from the separation points, which we assume to occur at the two edges of the membrane only.  Consequently, two ``free" vortex sheets emanate from the membrane, one from each edge. The bound vortex sheet along the membrane and the free vortex sheets form one continuous vortex sheet that we compute at each time step. 
\newline

We assume that our membrane is held in a uniform stream with velocity $\boldsymbol{U}= [1,0]^T$. As time increases, vorticity emanates from both endpoints, and the lengths of these free vortex sheets increase at each edge, starting from zero. The bound vortex sheet coincides with the elastic membrane $\zeta(\alpha, t), -1 \leq \alpha \leq 1$. The free vortex sheets emanating from the $\alpha = \pm 1$ edges are denoted $\zeta_{+}(\alpha, t),\ \alpha\in [1,s_{+} + 1]$ and $\zeta_{-}(s,t),\ s \in [-s_{-} -1,-1]$. Here, $s_{\pm}$ are the lengths of the positive and negative vortex sheets respectively, and the vortex sheet strengths $\gamma_{\pm}(\alpha, t)$ are defined analogously.
\newline

The strength of the entire (bound and free) vortex sheet is denoted $\gamma(\alpha, t)$, with $\alpha \in[-1 -s_-, 1 + s_+]$. We denote the restriction of $\gamma(\alpha, t)$ to the negative, positive, and bound vortex sheets as $\gamma_-(\alpha, t)$, $\gamma_+(\alpha, t)$, and $\gamma_b(\alpha, t)$ respectively. Note that on the free vortex sheets, $\alpha = s$ is the arc length coordinate. On the bound vortex sheet (i.e.~the membrane) $\alpha$ is instead the material coordinate (the arc length coordinate of the undeflected membrane) and $\partial_\alpha s$ is the stretching factor.
\newline

Integrating the vortex sheet strengths $\gamma_b$, and $\gamma_\pm$ against the Biot-Savart kernel, we obtain an equation for the conjugate fluid velocity  $w(z) = u(z) - iv(z)$,
 \begin{equation}
    u(z) - i v(z) = \frac{1}{2\pi i}\int_{-1}^{1}{\frac{\gamma_b(\alpha, t)}{z - \zeta(\alpha, t)}\partial_\alpha s\,  d\alpha}  +\frac{1}{2\pi i}\int_{-s_{-} -1}^{-1}{\frac{\gamma_{-}(s,t)}{z - \zeta_{-}(s,t)}\, ds} + \frac{1}{2\pi i}\int_{1}^{s_{+} + 1}{\frac{\gamma_{+}(s,t)}{z - \zeta_{+}(s,t)}\, ds} + \overline{\boldsymbol{U}}.
\end{equation}

Here $\overline{\boldsymbol{U}}$ is the complex conjugate of the free-stream velocity. Let $\Gamma(s,t)$ denote the arc length integral of $\gamma$ along the positive and negative vortex sheets, defined piecewise:
\begin{align}
\Gamma(s,t) = \begin{cases}
    \int_{-s_--1}^s \gamma(s',t)\,ds', \quad s \in [-s_--1, -1] \\
   \int_{1}^{s}\gamma_+(s',t)\,ds', \quad s \in [1, 1+s_+]
\end{cases} 
\end{align}

and $\Gamma_\pm(t)$ denote the total circulations in the positive and negative vortex sheets, i.e.
\begin{align}
\Gamma_+(t) \equiv \int_{1}^{1+s_{+}}\gamma_{+}(s,t)\, ds \quad ; \quad \Gamma_-(t) \equiv \int_{-s_{-}-1}^{-1}\gamma_{-}(s,t)\, ds.
\end{align}

Then, by reparameterizing the positions of the free vortex sheets in terms of $\Gamma$, we may rewrite the equation for the conjugate velocity as
\begin{equation}
     u(z) - i v(z) = \frac{1}{2\pi i}\int_{-1}^{1}{\frac{\gamma_b(\alpha, t)}{z - \zeta(\alpha, t)}\partial_\alpha s\,  d\alpha}  + \frac{1}{2\pi i}\int_{0}^{\Gamma_-(t)}{\frac{d\Gamma}{z - \zeta_{-}(\Gamma,t)}} + \frac{1}{2\pi i}\int_{0}^{\Gamma_+(t)}{\frac{d\Gamma}{z - \zeta_{+}(\Gamma,t)}}+ \overline{\boldsymbol{U}}. \label{ConjFluidVel}
\end{equation}

\subsubsection{Birkhoff-Rott equation}
By demanding that the pressure be a continuous field across the free vortex sheets, the free vortex sheets must evolve with velocities equal to the averages of the fluid velocities across their surfaces \cite[84]{eldredge_inviscid_flows}. A consequence of the Sokhotski–Plemelj formulae is that the principal value of a singular integral is the average of its limiting values on either side. Hence, by evaluating \eqref{ConjFluidVel} on the free sheets in the principal-value sense, we obtain the following Birkhoff-Rott equations for the (conjugate) fluid velocity of the free sheets.
\begin{multline}
     \partial_{t}\bar\zeta_{\pm}(\Gamma,t) = \frac{1}{2\pi i}\int_{-1}^{1}{\frac{\gamma_b(s,t)}{\zeta_{\pm}(\Gamma,t) - \zeta(s,t)}\, ds} + \frac{1}{2\pi i}\int_{0}^{\Gamma_-(t)}{\frac{d\Gamma'}{\zeta_{\pm}(\Gamma,t) - \zeta_{-}(\Gamma',t)}} \\ + 
     \frac{1}{2\pi i}\int_{0}^{\Gamma_+(t)}{\frac{d\Gamma'}{\zeta_{\pm}(\Gamma,t) - \zeta_{+}(\Gamma',t)}} + \overline{\boldsymbol{U}}. \label{Birkhoff-Rott}
\end{multline}

Due to the singular kernel in the contribution of the free sheets, the vortex sheets develop a curvature singularity at a finite critical time \cite{moore1979spontaneous, Krasny_1986, Shelley_1992}. To subvert this issue, instead of \eqref{ConjFluidVel}, one can evaluate blob-regularized equations for the fluid velocity given by 

\begin{multline}
     u(z) - iv(z) = \frac{1}{2\pi i}\int_{-1}^{1}{\gamma_b(s,t)\frac{\overline{z- \zeta(s,t)}}{|z- \zeta(s,t)|^2 + \delta^2}\, ds} + \frac{1}{2\pi i}\int_{0}^{\Gamma_-(t)}{\frac{\overline{z - \zeta_{-}(\Gamma',t)}}{|z - \zeta_{-}(\Gamma',t)|^2 + \delta^2}d\Gamma'} \\ +
     \frac{1}{2\pi i}\int_{0}^{\Gamma_+(t)}{\frac{\overline{z - \zeta_{+}(\Gamma',t)}}{|z - \zeta_{+}(\Gamma',t)|^2 + \delta^2}d\Gamma'}+ \overline{\boldsymbol{U}}. \label{ConjVelSmoothed}
\end{multline}
on the free sheets. Here $\delta$ is a small parameter that needs to be specified. This results in the following blob-regularized Birkhoff-Rott equations
\begin{multline} \label{blob regularized birkhoff rott equations}
     \partial_{t}\bar\zeta_{\pm}(\Gamma,t)= \frac{1}{2\pi i}\int_{-1}^{1} \gamma_b(s,t) \frac{\overline{\zeta_\pm(\Gamma,t) - \zeta(s,t)}}{|\zeta_\pm(\Gamma,t) - \zeta(s,t)|^2 + \delta^2}\, ds
    + \frac{1}{2\pi i}\int_{0}^{\Gamma_-(t)}{\frac{\overline{\zeta_{\pm}(\Gamma,t) - \zeta_{-}(\Gamma',t)}}{|\zeta_{\pm}(\Gamma,t) - \zeta_{-}(\Gamma',t)|^2 + \delta^2}d\Gamma'} \\ + 
     \frac{1}{2\pi i}\int_{0}^{\Gamma_+(t)}{\frac{\overline{\zeta_{\pm}(\Gamma,t) - \zeta_{+}(\Gamma',t)}}{|\zeta_{\pm}(\Gamma,t) - \zeta_{+}(\Gamma',t)|^2 + \delta^2}d\Gamma'}+ \overline{\boldsymbol{U}}
\end{multline}
In this study we take $\delta = 0.1$ as was done in \cite{loo2026comparisoninviscidviscousvortex}.

\subsubsection{Computing the vorticity}

The Biot--Savart law states that the complex velocity induced by a blob-regularized point vortex with circulation $\Gamma_0$ situated at $z_0$ is
\begin{equation}
u(z)+iv(z)
=
\frac{i\Gamma_0}{2\pi}
\frac{z-z_0}{|z-z_0|^2+\delta^2}.
\label{blob vortex}
\end{equation}
Following \cite{holm2006euler}, by taking the curl of equation \eqref{blob vortex} the corresponding blob-regularized vorticity is
\begin{equation}
\omega_\delta(z)
=
\frac{\delta^2\Gamma_0}
{\pi\left(|z-z_0|^2+\delta^2\right)^2}.
\end{equation}
For a vortex sheet of length $L$ with cumulative circulation $\Gamma(s)$, $0<s<L$, whose position is parameterized by $z'(\Gamma(s))$, we define
\begin{equation}
W_\delta(z)
=
\frac{\delta^2}
{\pi\left(|z|^2+\delta^2\right)^2}.
\end{equation}
The blob-regularized vorticity induced by the vortex sheet is then
\begin{equation}
\omega_\delta(z)
=
\int_0^{\Gamma(L)}
W_\delta\!\left(z-z'(\Gamma')\right)d\Gamma'.
\label{vorticity formula}
\end{equation}

\subsubsection{No-penetration condition}
Let the body velocity $\partial_t \zeta(s,t)$ have tangential and normal components $\tau$ and $\nu$ respectively: 
\begin{eqnarray}
    \partial_t\zeta(s,t) = \tau(s,t)\hat{\textbf{s}}(s,t) + \nu(s,t)\hat{\textbf{n}}(s,t).
\end{eqnarray} 
Taking limits as $z$ approaches the body in \eqref{ConjVelSmoothed}, we see that the (conjugate) fluid velocity along the body is given by 
\begin{multline}
    \mu(s,t)\overline{\hat{\boldsymbol{s}}(s,t)} + \xi(s,t)\overline{\hat{\boldsymbol{n}}(s,t)} = \\ 
    \frac{1}{2\pi i}\dashint_{-1}^{1}{\gamma_b(s',t)\frac{\overline{\zeta(s,t) - \zeta(s',t)}}{|\zeta(s,t) - \zeta(s',t)|^2 + \delta^2}\, ds'}  +  \frac{1}{2\pi i}\int_{0}^{\Gamma_-(t)}{\frac{\overline{\zeta(s,t) - \zeta_{-}(\Gamma,t)}}{|\zeta(s,t) - \zeta_{-}(\Gamma,t)|^2 + \delta^2}d\Gamma} \\ + \frac{1}{2\pi i}\int_{0}^{\Gamma_+(t)}{\frac{\overline{\zeta(s,t) - \zeta_{+}(\Gamma,t)}}{|\zeta(s,t) - \zeta_{+}(\Gamma,t)|^2 + \delta^2}d\Gamma} + \overline{\boldsymbol{U}}.
\end{multline}
The no-penetration condition states that the fluid and body velocities have the same components normal to the body at the body surface. Hence, we see that
\begin{multline}
    \overline{\hat{\textbf{n}}(s,t)}\cdot \partial_t\bar{\zeta}(s,t)  = 
        \overline{\hat{\textbf{n}}(s,t)}\cdot \bigg(\frac{1}{2\pi i}\dashint_{-1}^{1}{\frac{\gamma_b(s',t)}{\zeta(s,t) - \zeta(s',t)}\, ds'} + \frac{1}{2\pi i}\int_{0}^{\Gamma_-(t)}{\frac{\overline{\zeta(s,t) - \zeta_{-}(\Gamma,t)}}{|\zeta(s,t) - \zeta_{-}(\Gamma,t)|^2 + \delta^2}d\Gamma} \\ + \frac{1}{2\pi i}\int_{0}^{\Gamma_+(t)}{\frac{\overline{\zeta(s,t) - \zeta_{+}(\Gamma,t)}}{|\zeta(s,t) - \zeta_{+}(\Gamma,t)|^2 + \delta^2}d\Gamma} + \overline{\boldsymbol{U}}\bigg).
    \label{NoPenetration}
\end{multline}
Or more concisely, 
\begin{equation}
    \nu(s,t) = \xi(s,t)
\end{equation}
 Note that we have removed the blob regularization from the contribution of the body to the fluid velocity. As explained in \cite{loo2026comparisoninviscidviscousvortex}, any regularization to this term in the no-penetration condition makes solving for $\gamma_b$ ill-posed. This results in a mismatch with the Birkhoff-Rott equations that introduces an $\mathcal{O}(\delta)$ error into the no-penetration condition. This error can be controlled by projecting the velocities of points that violate the no-penetration condition onto the tangent space of the nearest grid point on the body. The result is a scheme that is approximately first-order in time \cite{loo2025falling}.
 \newline
 
 Given known positive and negative circulations, the no-penetration condition \eqref{NoPenetration} uniquely determines $\gamma_b$ provided we add the additional constraint that the total circulation in the fluid vanishes \cite{Golberg1990IntroductionTT} \cite[285]{eldredge_inviscid_flows}. Indeed, assuming the fluid is initially at rest, the initial circulation present in the fluid is zero. Combined with Kelvin's circulation theorem, which asserts that the total circulation is independent of time \cite[36]{eldredge_inviscid_flows}, we must have
\begin{equation} \label{Kelvin's Circulation Theorem}
    \Gamma_+(t) + \Gamma_-(t) + \Gamma_b(t) = 0.
\end{equation}
Here, $\Gamma_b(t) = \int_{-1}^1\gamma_b(s,t)\, ds$ is the total circulation on the body. Alongside the no-penetration condition \eqref{NoPenetration}, this is an additional constraint that must be satisfied by the bound vortex sheet strength $\gamma_b$. At each time step, $\Gamma_\pm(t)$ are chosen such that the Kutta conditions 
\begin{equation}
    \gamma_b(\pm1) = \gamma_\pm(\pm 1)
    \label{kutta condition}
\end{equation}are satisfied. 
Mathematically, this is the assertion that the circulation density $\gamma$ is continuous across the entire vortex sheet.
\newline

For a self-contained and comprehensive treatment of the inviscid fluid equations and their numerical implementation, we refer to \cite{loo2025falling}, which develops the same framework in similar notation for the situation of a falling rigid plate.

\section{The pressure jump equation as a conservation law}
\label{pressure jump equation as a conservation law section}
Having described both the fluid and the membrane model, we now discuss how we compute the force density 
\begin{equation}
    \boldsymbol{f} = [p]^{-}_+\hat{\boldsymbol{n}}\partial_\alpha s
\end{equation}
which couples the membrane with the surrounding fluid. This is done through the pressure jump equation:

\begin{equation}
    \partial_t\gamma + \partial_s((\mu - \tau)\gamma + [p]_{+}^{-})=0.
    \label{vortex sheet strength equation}
\end{equation}
A complete derivation of the equations studied in this section can be found in appendix \ref{pressure jump derivation appendix}. It is convenient to view the pressure jump equation as a conservation law in the bound vorticity $\gamma_b$, as it helps to explain the dynamics in the following simulation results. Indeed, consider figure \ref{pressure and vortex sheet strength example figure}. This figure shows the typical vortex sheet strength $\gamma_b$, pressure jump $[p]^-_+$ and relative fluid velocity $\mu - \tau$ at various times demonstrated by a fixed-free membrane held at $\beta = 30^\circ$ in a uniform flow with $(R_1,R_2,R_3) = (10,10^{-6},10^{2})$ . Panel (a) shows the pressure jump $[p]^-_+$ (blue) and bound vortex sheet strength $\gamma_b$ (orange) at time $t = 7.1$, and panel (b) shows the corresponding vorticity field. Panels (c-d) show the same but at $t = 11.1$. 
\newline 

In both panels (a) and (c), the pressure jump has a single maximum away from the edges, induced by an attached leading and trailing edge vortex shown in panels (b) and (d) respectively. Moreover, both panels show that this maximum shares a critical point with the bound vortex sheet strength $\gamma_b$.

\begin{figure}[H]
    \centering
    \includegraphics[width=0.8\linewidth, trim = 2cm 0cm 0cm 0cm, clip]{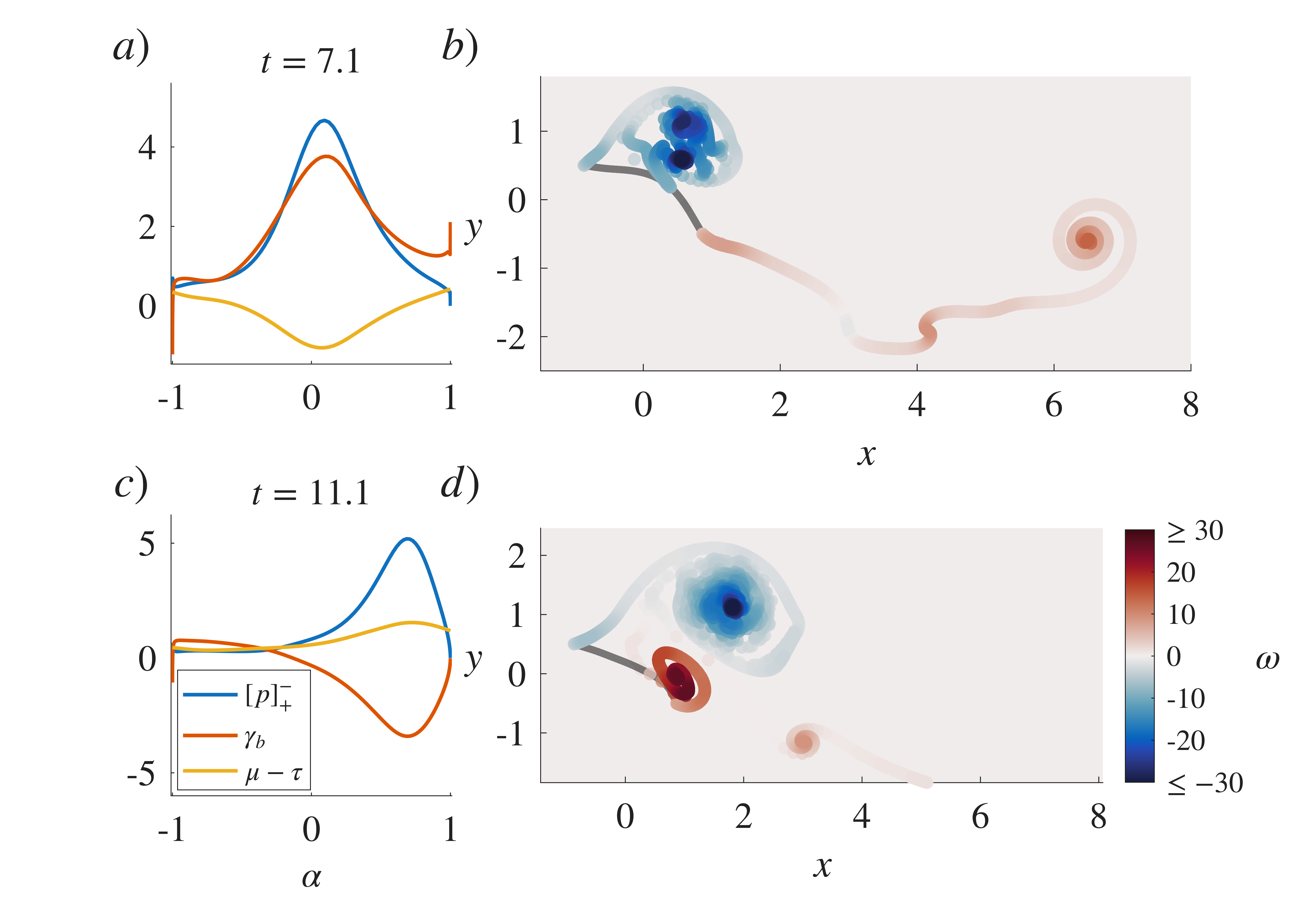}
    \caption{(a) The bound vortex sheet strength $\gamma_b$, pressure jump $[p]^-_+$, and relative tangential fluid velocity $\mu - \tau$, as well as (b) the corresponding vorticity field $\omega$ at time $t = 7.1$. Panels (c-d) show the same at $t = 11.1$. Here $(R_1,R_2,R_3) = (10,10^{-6},10^{2})$ and the membrane is held at angle of attack $\beta = 30^\circ$. }
    \label{pressure and vortex sheet strength example figure}
\end{figure}

This observation can be explained by viewing \eqref{vortex sheet strength equation} as a conservation law. To see this, consider the Euler equation given by 
\begin{equation}
    \partial_t\boldsymbol{u} + \boldsymbol{u}\cdot\boldsymbol{\nabla}\boldsymbol{u} + \boldsymbol{\nabla}p = 0
    \label{euler equation}
\end{equation}
The Euler equation can be placed in the form of a conservation law using tensor notation. Indeed, by incompressibility $\boldsymbol{u}\cdot\boldsymbol{\nabla}\boldsymbol{u} = \boldsymbol{\nabla}\cdot(\boldsymbol{u}\otimes\boldsymbol{u})$. Hence \eqref{euler equation} can be written as 
\begin{equation}
     \partial_t\boldsymbol{u} + \boldsymbol{\nabla}\cdot(\boldsymbol{u}\otimes\boldsymbol{u} + pI) = 0
     \label{euler equation as conservation law}
\end{equation}
This is a conservation law for $\boldsymbol{u}$ with (matrix-valued) flux $\boldsymbol{F}(\boldsymbol{u}) = \boldsymbol{u}\otimes\boldsymbol{u} + pI$. Notice the pressure term in the flux. This term drives fluid momentum from regions of high to low pressure.
\newline

In the reference frame of a fluid parcel following a material point on the membrane with velocity $\partial_t\zeta$, Euler's equation becomes
\begin{equation}
     \partial_t\boldsymbol{u} + \boldsymbol{\nabla}\cdot(\boldsymbol{w}\otimes\boldsymbol{w} + pI) = 0
     \label{euler equation as conservation law in reference frame}
\end{equation}
where $\boldsymbol{w} = \boldsymbol{u} - \partial_t\zeta$ (see appendix \ref{pressure jump derivation appendix}). Projecting this equation onto the membrane surface with tangent $\hat{\boldsymbol{s}}$ results in the following equations:
\begin{equation}
    \partial_t u^\pm_s +\partial_s\big(\frac{1}{2}(u_s^\pm - \tau)^2 + p^\pm \big) = 0.
    \label{1D euler equations for slip}
\end{equation} 
Here $u_s = \hat{\boldsymbol{s}}\cdot\boldsymbol{u}$ is the tangential component of $\boldsymbol{u}$ and the $(\cdot)^\pm$ superscript indicates its limiting value on the $\pm$ side of the membrane. In plain language, equations \eqref{1D euler equations for slip} say that the slip velocities $u_s^\pm$ obey a 1D Euler equation. Since $\gamma = u_s^- - u_s^+=[u_s]^-_+$ (the jump in the slip velocity across the membrane surface), by subtracting both $\pm$ equations, we obtain 

\begin{equation}
    \partial_t\gamma + \partial_s((\mu - \tau)\gamma + [p]_{+}^{-})=0,
    \label{vortex sheet strength conservation law}
\end{equation}
where $\mu = \frac{1}{2}(u_s^+ + u_s^-)$. Note that this equation uniquely determines the pressure jump $[p]^-_+$ from the tangential components of the fluid and body velocities. The resulting equation \eqref{vortex sheet strength conservation law} is the difference between two 1D Euler equations, both of which are conservation laws. Hence equation \eqref{vortex sheet strength conservation law} can be thought of as a conservation law as well.
\newline

Indeed, view the union of the bound and free vortex sheet as a single continuous vortex sheet with strength $\gamma(s)$ parameterized by the arc length parameter $s$. Suppose also that the motion of the membrane is known at each instant. By extension, the tangential velocity of the membrane $\tau$ is known at each instant as well. Then we have
\begin{equation}
    \partial_t\gamma + \partial_s((\mu - \tau)\gamma + [p]_{+}^{-}):= \partial_t\gamma + \partial_s\big(f(\gamma)\big)=0,
    \label{vortex sheet strength conservation law in flux form}
\end{equation}
where $f(\gamma) = (\mu - \tau)\gamma + [p]^-_+$. Consequently, the pressure jump equation may be viewed as a conservation law for $\gamma$ with flux operator $f(\gamma)= (\mu-\tau)\gamma +[p]^-_+$.
\newline

The conservation form of \eqref{vortex sheet strength conservation law in flux form}, together with two observations from the simulations, is enough to explain the coincidence of critical points seen in figure \ref{pressure and vortex sheet strength example figure}. Firstly, when a leading- or trailing-edge vortex is attached, the bound vortex sheet strength evolves coherently over intervals short compared with the shedding time; that is, locally in time $\gamma$ is approximately a traveling wave,
\begin{equation}
    \gamma(s,t) \approx \tilde{\gamma}(s - ct)
    \implies
    \partial_t\gamma \approx -c\,\partial_s\gamma,
    \label{travelling wave ansatz}
\end{equation}
for some speed $c = c(t)$. Secondly, over the same interval the relative tangential velocity $\mu - \tau$ varies slowly in $s$ compared with $\gamma$ and $[p]^-_+$ (figure \ref{pressure and vortex sheet strength example figure}(a,c)), so that $\gamma\,\partial_s(\mu - \tau)$ is negligible compared to $(\mu - \tau)\,\partial_s\gamma$. Substituting \eqref{travelling wave ansatz} into \eqref{vortex sheet strength conservation law in flux form} and expanding the flux gives
\begin{equation}
    (\mu - \tau - c)\,\partial_s\gamma
    + \gamma\,\partial_s(\mu - \tau)
    + \partial_s[p]^-_+ = 0,
    \label{expanded flux relation}
\end{equation}
and discarding the second term by the slow-variation assumption,
\begin{equation}
    \partial_s[p]^-_+ \approx -(\mu - \tau - c)\,\partial_s\gamma.
    \label{pressure gradient relation}
\end{equation}
Integrating once in $s$ yields the local algebraic relation
\begin{equation}
    (\mu - \tau - c)\,\gamma + [p]^-_+ \approx C(t).
    \label{algebraic conservation law}
\end{equation}
In plain language, equation \eqref{algebraic conservation law} states that where the relative tangential velocity is nearly constant, the vortex sheet strength is an affine function of the pressure. Equation \eqref{pressure gradient relation} shows that generically, $\partial_s\gamma$ and $\partial_s[p]^-_+$ vanish at the same arc length locations, which is precisely the behavior observed in figure \ref{pressure and vortex sheet strength example figure}(a,c). 
This makes intuitive sense. Due to the pressure flux, signed fluid momentum is transported from high to low pressure regions in the Euler equation. As the solution to the difference of two 1D Euler equations, $\gamma$ too is transported from regions of high to low pressure jump; hence $\gamma$ tends to ``pool" at the critical points of the pressure.
\newline

Panels (a) and (c) of figure~\ref{pressure and vortex sheet strength example figure} show the bound vortex sheet strength attaining an interior maximum and minimum, respectively. Their signs can be understood from Kelvin's circulation theorem~\eqref{Kelvin's Circulation Theorem}. Let \(\Gamma_f=\Gamma_-+\Gamma_+\) denote the total circulation in the free vortex sheets. Kelvin's theorem gives \(\Gamma_b=-\Gamma_f\), where \(\Gamma_b\) is the total bound circulation. At the time shown in panel (b), the free circulation is dominated by the negative leading-edge vortex, so \(\Gamma_f<0\) and hence \(\Gamma_b>0\). Since the bound vortex sheet strength is small near the membrane endpoints and has a single interior critical point, this critical point is a maximum, as observed in panel (a). Conversely, at the time shown in panel (d), the positive attached trailing-edge vortex dominates the free circulation, so \(\Gamma_f>0\) and \(\Gamma_b<0\); the corresponding critical point is therefore a minimum, as observed in panel (c).
\newline

Crucially, Kelvin's theorem and the pressure-jump conservation law together provide a mechanism for alternating vortex shedding. An attached vortex of one sign requires compensating bound circulation of the opposite sign on the membrane. The conservation-law flux concentrates and transports this bound vorticity along the membrane until it reaches an edge and is transferred to the free vortex sheet, where it rolls up into the next attached vortex with the opposite sign. This new vortex reverses the sign of the compensating bound circulation required by Kelvin's theorem, and the cycle repeats, producing the alternating-sign pattern of vortices in the wake.

\section{Numerical results}
\label{numerical results section}
In this section, we explore the space of membranes parameterized by $(R_1,R_2,R_3)$ in both fixed-fixed and fixed-free configurations and highlight the main results. In all cases, at $t = 0$ the background flow changes from 0 to 1 impulsively and the membrane is straight, unstretched, and at a fixed angle of attack. In the fixed-fixed case, the angle of attack is fixed for all time at $\beta$, while in the fixed-free configuration, the angle of attack is only set at $t = 0$ and is taken to be small, specifically $1^\circ$.

\subsection{Fixed-fixed membranes}
\label{fixed fixed membranes section}

Fixed-fixed membranes with zero bending rigidity in an inviscid fluid have been studied extensively in 2D \cite{mavroyiakoumou2020large,mavroyiakoumou2021eigenmode, mavroyiakoumou2025sail} as well as 3D \cite{mavroyiakoumou2022membrane} in the absence of leading-edge shedding at zero angle-of-attack. In this section we expand on these results by including leading-edge shedding effects that allow us to study the fluid-membrane interaction at larger angles of attack.

\subsubsection{Zero angle of attack}
We begin by revisiting the zero-angle-of-attack case studied in \cite{mavroyiakoumou2020large}, this time in the presence of leading-edge shedding. In figure \ref{fig: effect of leading-edge shedding at zero angle of attack} we show the effect of leading-edge shedding on the vorticity field at early times; the leading-edge vortex is shed beneath the membrane, forming a concentrated recirculation region there.
\begin{figure}[H]
    \centering
    \includegraphics[width=1\linewidth, trim = 2cm 0cm 0cm 1.5cm, clip]{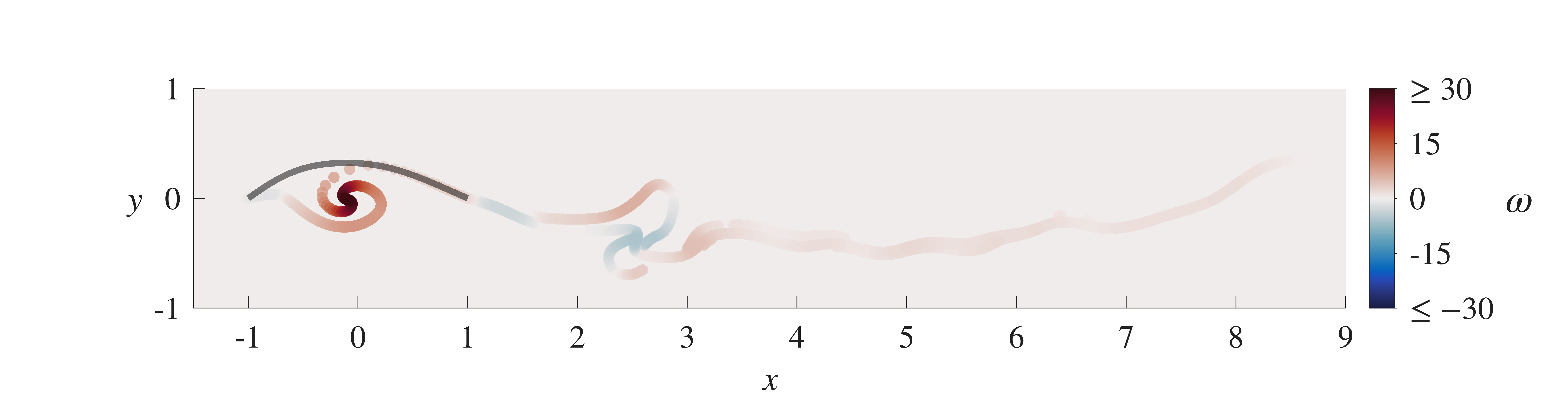}
    \caption{Snapshot of the vorticity field at time $t = 7$ for a membrane with $(R_1,R_2,R_3) = (1,0,20)$ held in a uniform stream at zero angle of attack $(\beta = 0)$.}
    \label{fig: effect of leading-edge shedding at zero angle of attack}
\end{figure}
After $t\approx 10$, this recirculation region is swept away, and by $t \approx 100$, the membrane approximately reaches a steady shape. In figure \ref{fig: steady state shape zero angle} we show the lift $C_L$, drag $C_D$, and steady shape of an elastic membrane with $R_2 = 0$ at zero angle of attack for $R_1 = 0,\ 10^{-3},\ 10^{-2},\ 10^{-1},\ 10^{0}$ respectively. The 3 rows correspond to  $R_3 = 20,\ 100,\ 500$ respectively. The steady shape is predictably independent of the density $R_1$, since the body inertia term is zero at steady state. Surprisingly however, it also only has a very mild effect on the initial transient period before the lift stabilizes at a constant value. This can be seen clearly in the first column of figure \ref{fig: steady state shape zero angle}.

\begin{figure}[H]
    \centering
    \includegraphics[width=1\linewidth, trim = 3cm 1cm 0cm 0cm, clip]{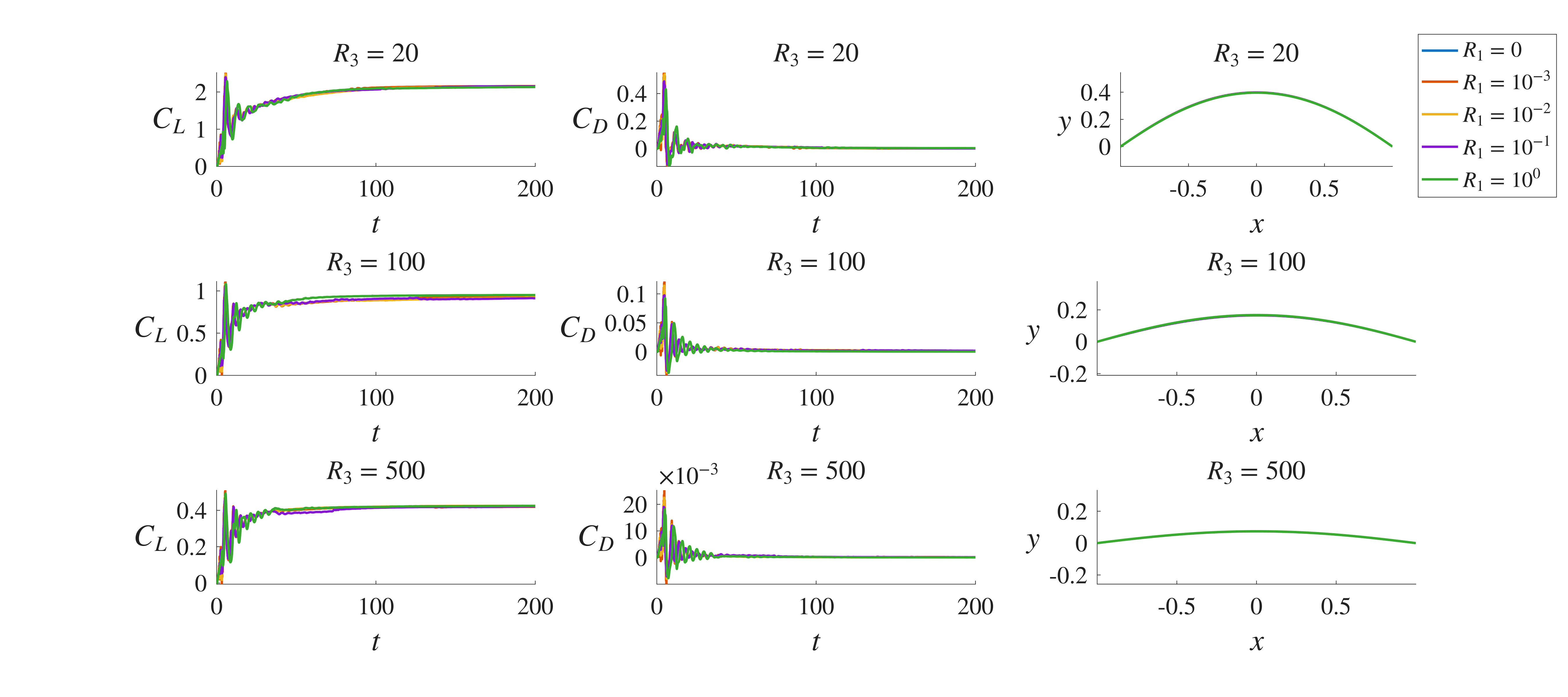}
    \caption{The lift $C_L$, drag $C_D$, and steady state membrane shape of an elastic membrane with $R_2 = 0$ at zero angle of attack for $R_1 = 0,\ 10^{-3},\ 10^{-2},\ 10^{-1},\ 10^{0}$ respectively. The three rows correspond to  $R_3 = 20,\ 100,\ 500$ respectively.}
    \label{fig: steady state shape zero angle}
\end{figure}
Decreasing the stretching modulus $R_3$ (therefore making the membrane more extensible) increases the steady state lift coefficient $C_L$. Although it increases the start-up drag $C_D$, in the steady state, the drag coefficients of all membranes vanish. Hence at zero angle of attack, elasticity confers significantly more lift, by cambering to decrease the pressure above the membrane.
\subsubsection{Moderate angles of attack}
Here we consider angles of attack $\beta = 30^\circ,60^\circ$. At these larger angles of attack, the leading-edge vortex is shed above the membrane instead and forms a concentrated negative recirculation region there. This is shown in the first column of figure \ref{fig: elasticity effect on LES 30 deg} with snapshots of vorticity field induced by membranes held at $\beta = 30^\circ$ taken at $t = 7.5$. Above each panel we list the average lift and drag coefficients ($\langle C_L\rangle$ and $\langle C_D\rangle$ respectively) from $t$ = 0 to 7.5. The second column shows the lift and drag coefficients (blue and orange respectively) against time $t$ until the final time $T = 200$. In all of the panels $R_1 = R_2 = 0$ and the rows correspond to $R_3 = 20,\ 100,\ 500,\ \infty$. Figure \ref{fig: elasticity effect on LES 60 deg} shows the same quantities for $\beta = 60^\circ$ and snapshot time $t = 5$. In both figures, the lift and drag coefficient plots of the second column show rapid oscillations leading up to the snapshot times, for $R_3 < \infty$. The impulsive start causes rapid oscillations in the positions of the membrane and leading-edge vortex sheet before it establishes the spiral shown in the snapshots.
\newline

\begin{figure}[H]
    \centering
    \includegraphics[width=0.9\linewidth, trim = 1cm 2.5cm 1cm 1cm, clip]{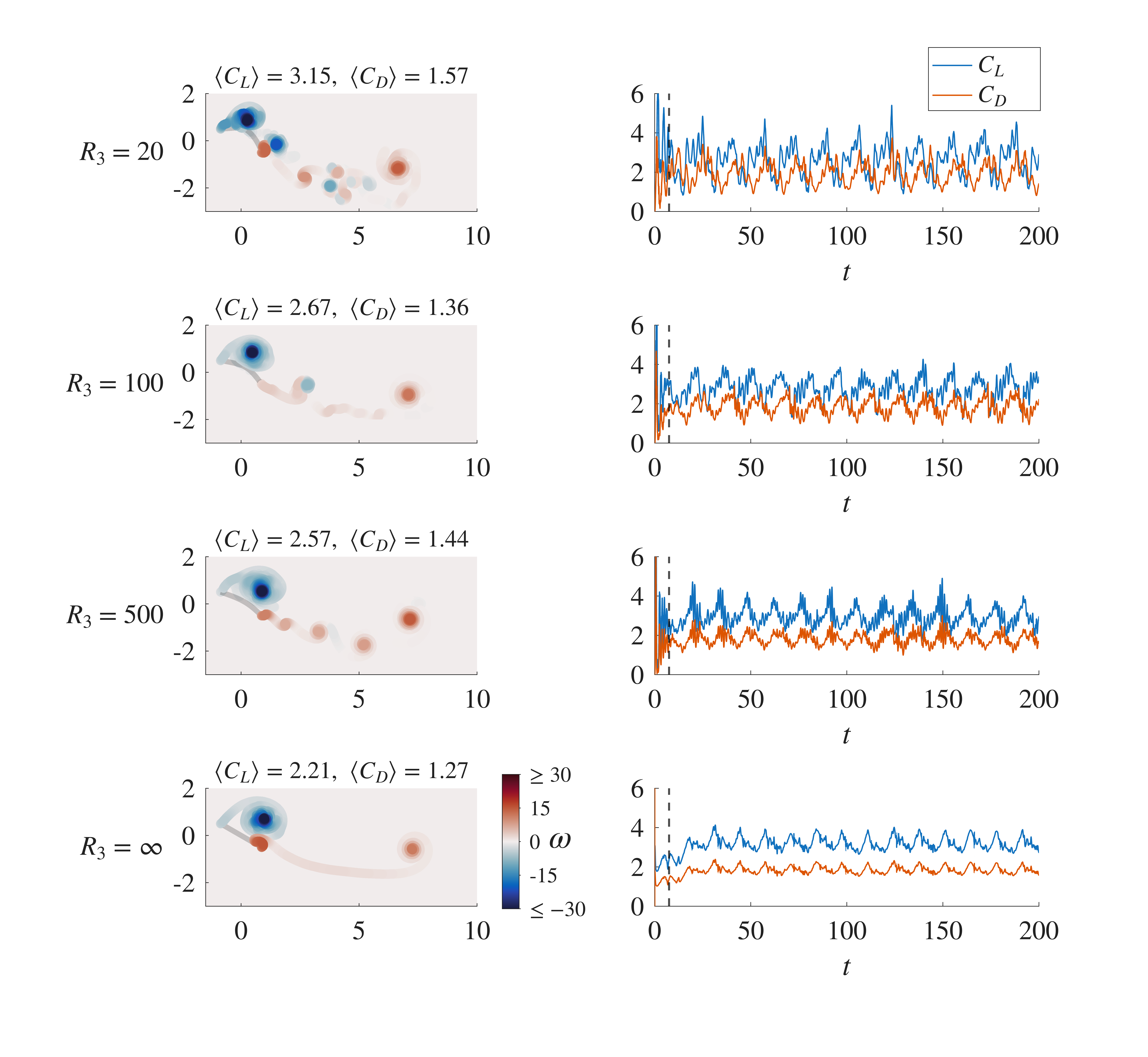}
    \caption{Vorticity fields at $t = 7.5$ (left column) and lift and drag coefficients ($C_L$, $C_D$ respectively) versus time $t$ (right column). In all panels $R_1 = R_2 = 0$ and the rows correspond to $R_3 = 20,\ 100,\ 500,\ \infty$. The average lift and drag ($\langle C_L\rangle$ and $\langle C_D\rangle$ respectively) until the snapshot time is displayed on the top right of the vorticity snapshots; the snapshot time $(t= 7)$ is marked with a dotted line in the right column.}
    \label{fig: elasticity effect on LES 30 deg}
\end{figure}

At shown in figure \ref{fig: elasticity effect on LES 30 deg}, at $\beta =30^\circ$, decreasing $R_3$ increases $\langle C_L\rangle$, by as much as $3.15/2.21 \approx 1.43$ times at the price of a minor $1.57/1.27 \approx 1.23$ times increase in drag for $R_3 =20$ versus a rigid plate (i.e. $R_3 = \infty$).
At $\beta =60^\circ$, decreasing $R_3$ increases $\langle C_L\rangle$, by as much as $2.71/1.75 \approx 1.55$ times at the price of a (comparatively) minor $3.61/3.03 \approx 1.19$ times increase in drag for $R_3 =20$ versus a rigid plate (i.e. $R_3 = \infty$). In both cases, decreasing $R_3$ (and thereby increasing extensibility)  increases both the average lift $\langle C_L\rangle$ and efficiency $\langle C_L\rangle/\langle C_D\rangle$ by at least $40\%$ and $15\%$ respectively during the transient start-up. This is consistent with the experiments in \cite{gehrke2025highly,gehrke2022aeroelastic} on cambered deformable wings in hovering motions.

\begin{figure}[H]
    \centering
    \includegraphics[width=0.9\linewidth, trim = 1cm 2.5cm 1cm 0.5cm, clip]{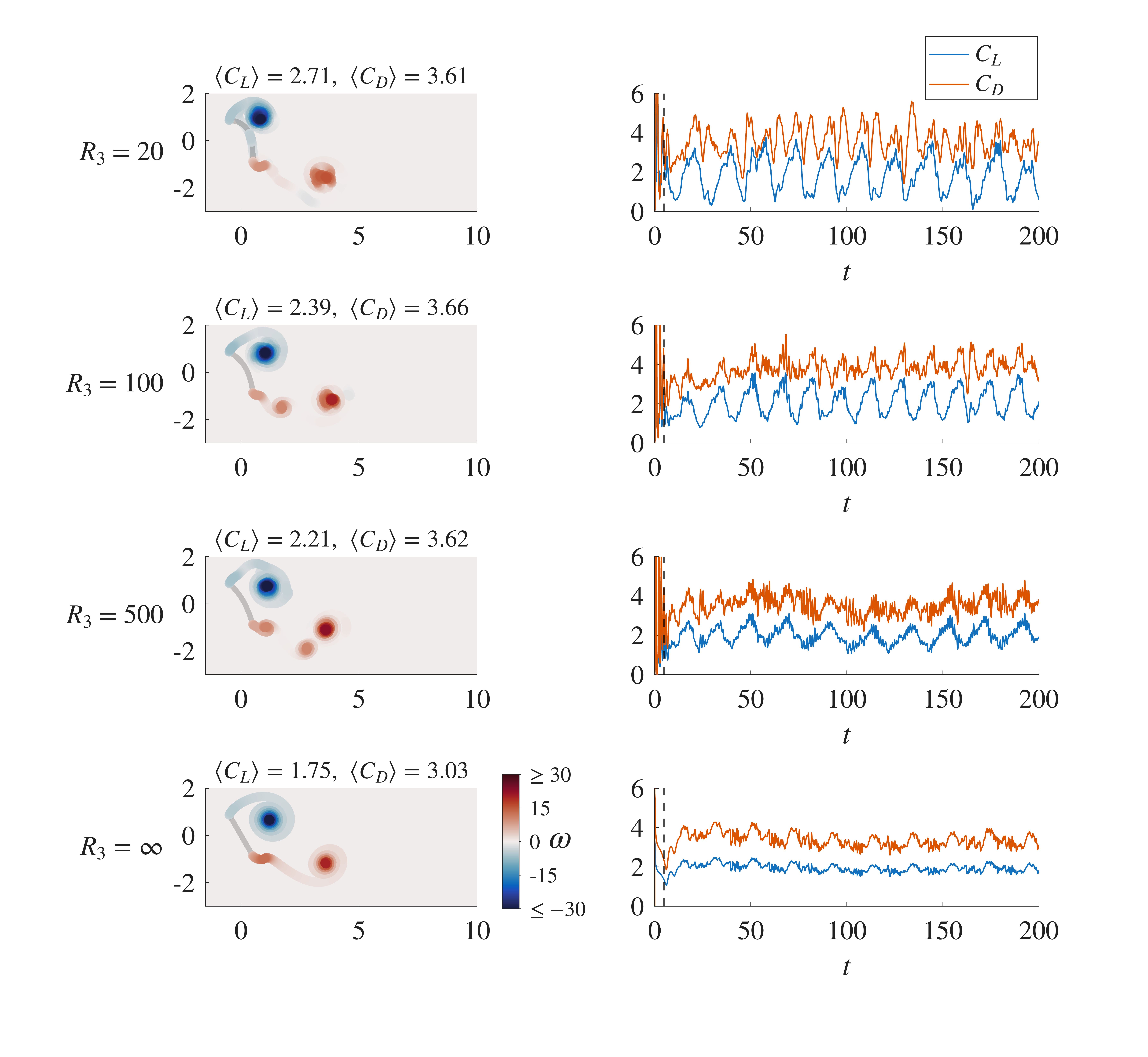}
    \caption{Same quantities as in figure \ref{fig: elasticity effect on LES 30 deg} but with snapshot time $t = 5$ and $\beta = 60^\circ$. }
    \label{fig: elasticity effect on LES 60 deg}
\end{figure}

As can be seen in the vorticity snapshots of figure \ref{fig: elasticity effect on LES 60 deg} (and to a smaller extent in figure \ref{fig: elasticity effect on LES 30 deg}), this increase in both lift and efficiency is achieved by reducing the distance between the membrane body and the leading-edge vortex. Elasticity allows this by extending the body towards the low pressure region near the leading-edge vortex and enhancing the lift; as shown in figure \ref{fig: elasticity effect on LES 60 deg}, as $R_3$ decreases, lift increases, and the leading-edge vortex gets progressively closer to the body. A similar interaction was observed in \cite{point_vortex_attraction} between a point vortex and a flexible filament.
\newline

In both figures \ref{fig: elasticity effect on LES 30 deg} and \ref{fig: elasticity effect on LES 60 deg}, the lift and drag coefficients ($C_L, C_D$) are quasi-periodic and oscillate about nonzero mean values. Figure \ref{fig: average lift comparison heatmap} shows the time-averaged lift $\langle C_L\rangle$ and drag $\langle C_D\rangle$ until $T = 200$ against $R_1,R_3$ for a fixed-fixed membrane held at angle of attack $\beta = 30^\circ$ (top row) and $60^\circ$ (bottom row). When $\beta = 30^\circ$, the steady state lift decreases monotonically as $R_3$ decreases, the opposite of the lift behavior during the transient start-up period where extensibility affords a decisive advantage in lift and efficiency. This apparent contradiction can be explained by the quasi-periodic shedding caused by the membrane. The membrane periodically sheds an alternating sequence of leading and trailing edge vortices. While the elasticity of the membrane allows it to stretch towards the leading-edge vortex to enhance peak lift, it also allows it to extend towards the trailing edge vortex, leading to larger troughs, i.e. smaller minima. Extensibility thus increases the amplitude of the oscillation of both the lift and drag about the mean value. This can be clearly seen in the right columns of figures \ref{fig: elasticity effect on LES 30 deg} and \ref{fig: elasticity effect on LES 60 deg}. The values in figure \ref{fig: average lift comparison heatmap} often have very small variations with $R_1$, even though the dynamics are often highly unsteady and the typical oscillation frequency varies strong with $R_1$.
\newline

When $\beta = 60^\circ$ the peak lift and drag occur at an intermediate $R_3$ of 100. Hence greater extensibility does not confer a definitive advantage in increasing steady state lift at all angles of attack. As we will see for fixed-free membranes, with very small $R_3$, the stretching and deformations may be very large and the membrane may no longer resemble a cambered airfoil in this regime.

\begin{figure}[H]
    \centering
    \includegraphics[width=0.85\linewidth, trim = 2cm 0.5cm 2cm 1cm, clip]{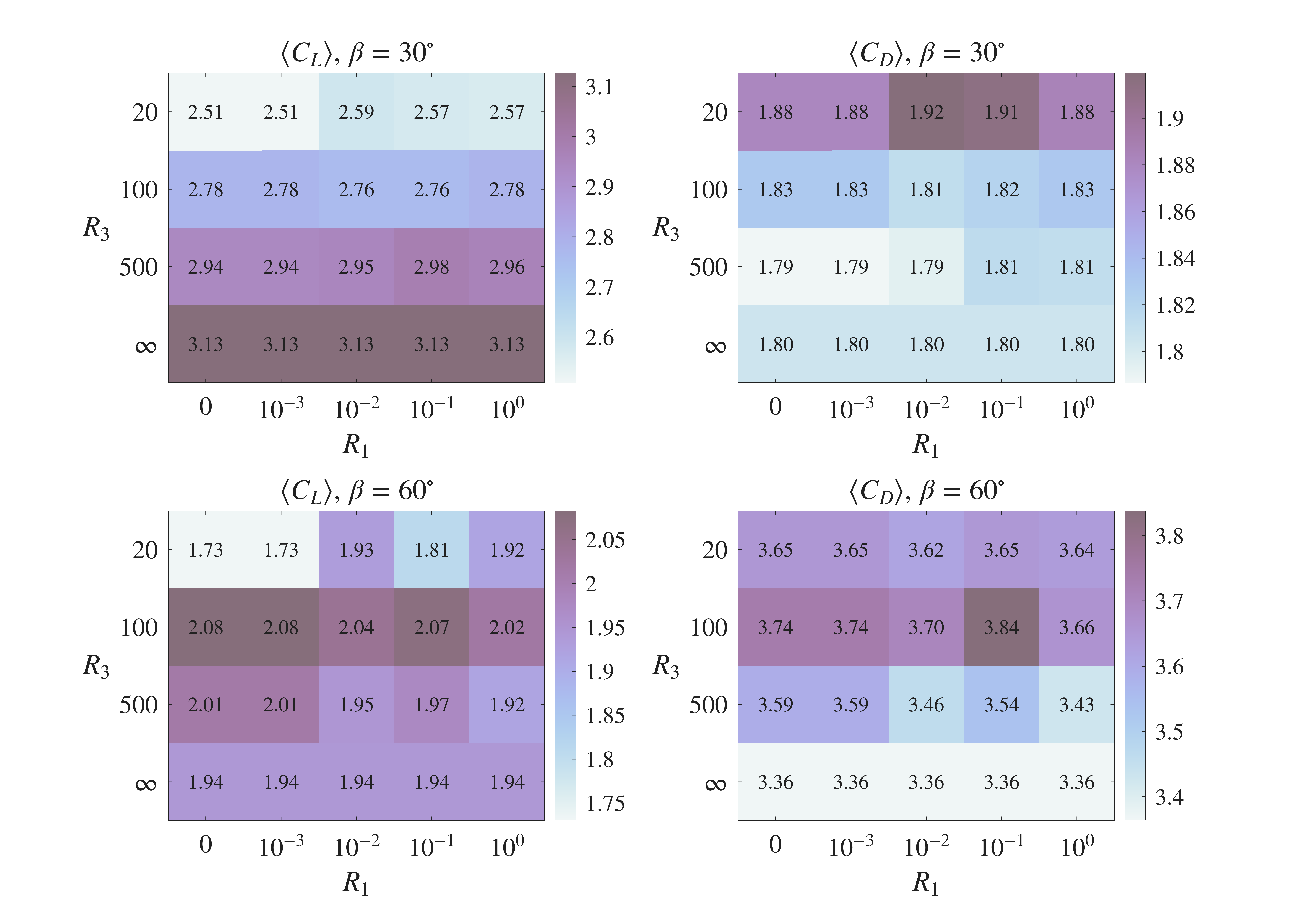}
    \caption{Heat map showing the time-averaged lift $\langle C_L\rangle$ and drag $\langle C_D\rangle$ against $R_1,R_3$ for a fixed-fixed membrane held at angle of attack $\beta = 30^\circ$ (top row) and $60^\circ$ (bottom row).}
    \label{fig: average lift comparison heatmap}
\end{figure}

\subsubsection{Normal angle of attack}

At a $90^\circ$ angle of attack, the body is normal to the oncoming flow. The flow is thus symmetric about the membrane centerline, the time-averaged lift vanishes, and the aerodynamic effect of compliance appears entirely in the drag. Figure \ref{fig: normal drag} shows the vorticity fields for $R_3 = 20,\ 100,\ 500,\ \infty$ with the time-averaged drag listed above each panel. As $R_3$ decreases, the membrane bows farther downstream toward the low-pressure cores of the counter-rotating vortex pair. This strengthens the body–vortex interaction and increases the drag averaged over $0\leq t\leq 10$ monotonically, from $\langle C_D\rangle=2.38$ for a rigid plate to $3.26$ for $R_3=20$, an increase of approximately $37\%$. Thus, the same geometric mechanism that enhances lift at moderate angles amplifies drag in the normal-incidence case, where symmetry precludes any lift benefit. We note that the initial vibrations induced by elasticity cause the resulting vorticity field to become increasingly irregular as $R_3$ decreases and extensibility increases.

\begin{figure}[H]
    \centering
    \includegraphics[width=1\linewidth, trim = 1.5cm 0cm 0cm 0.5cm, clip]{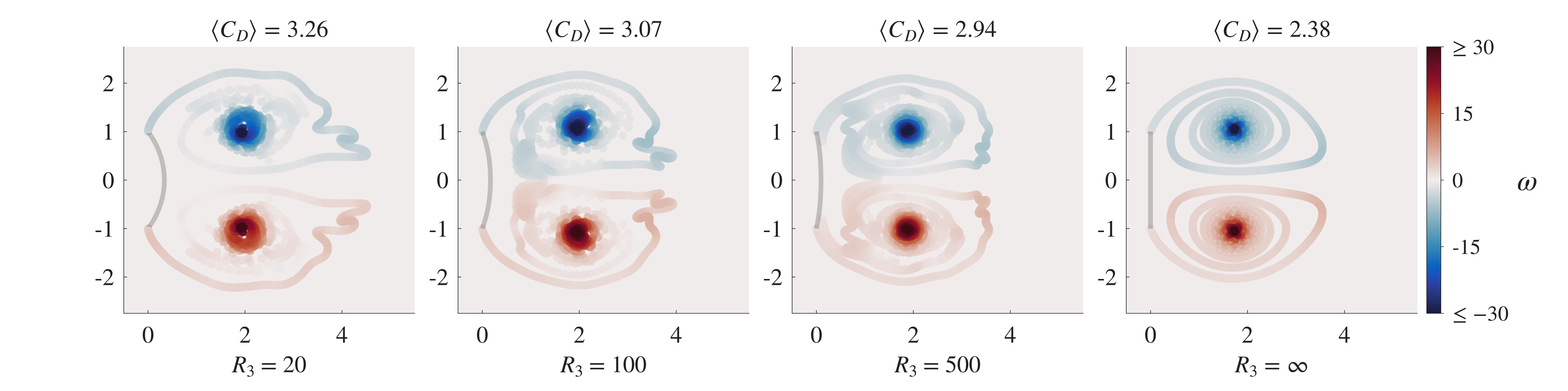}
    \caption{Vorticity fields at time $t = 10$ for membranes with $R_1 = R_2 = 0$ and $R_3 = 20,\ 100,\ 500,\ \infty$ held normal to a uniform flow. The average drag until the current time (t = 10) is shown above each plot and the corresponding stretching modulus $R_3$ is shown below.}
    \label{fig: normal drag}
\end{figure}

\subsubsection{Compression induced wrinkling during stall}
\label{compression example section}
In this section we show that for heavy membranes, vortex detachment during stall can sometimes induce compression or negative tension. This can be explained intuitively as follows: As the leading or trailing edge vortex develops, it forms a low pressure region that pulls the membrane towards it \cite{point_vortex_attraction}, creating a concentrated pressure extremum. When the vortex detaches, the sudden loss of fluid loading causes the tension to drop rapidly as the membrane relaxes, turning negative in some regions. Physically, this mechanism resembles violently plucking a heavy string.
\newline

In figure \ref{membrane compression example figure} we demonstrate this with a single example: a fixed-fixed membrane held at a constant angle of attack $\beta = 30^\circ$ in a uniform flow with $(R_1,R_2,R_3) = (10,10^{-6},20)$. Panels (a-c) show the vorticity $\omega$, pressure jump $[p]^-_+$ and tension $T$ at time $t = 12.75$, right after the trailing-edge vortex begins to detach and the membrane starts to rebound. In panel (a), a small red box indicates the regions shown in panels (d-h), which show snapshots of the membrane leading edge at times $t = 12.25,12.5,12.75,13,13.25$ respectively. These snapshots show the wrinkling nucleating from the negative tension regions shown in panel (c). As shown in panel (b), the pressure field oscillates spatially, and acts to smooth out the high-frequency wrinkles induced by compression. Indeed, by $t = 13.25$ shown in panel (h), the wrinkles have mostly smoothed into lower frequency wavelets. 
\begin{figure}[H]
    \centering
    \includegraphics[width=1.0\linewidth, trim= 5cm 2cm 5cm 0.5cm]{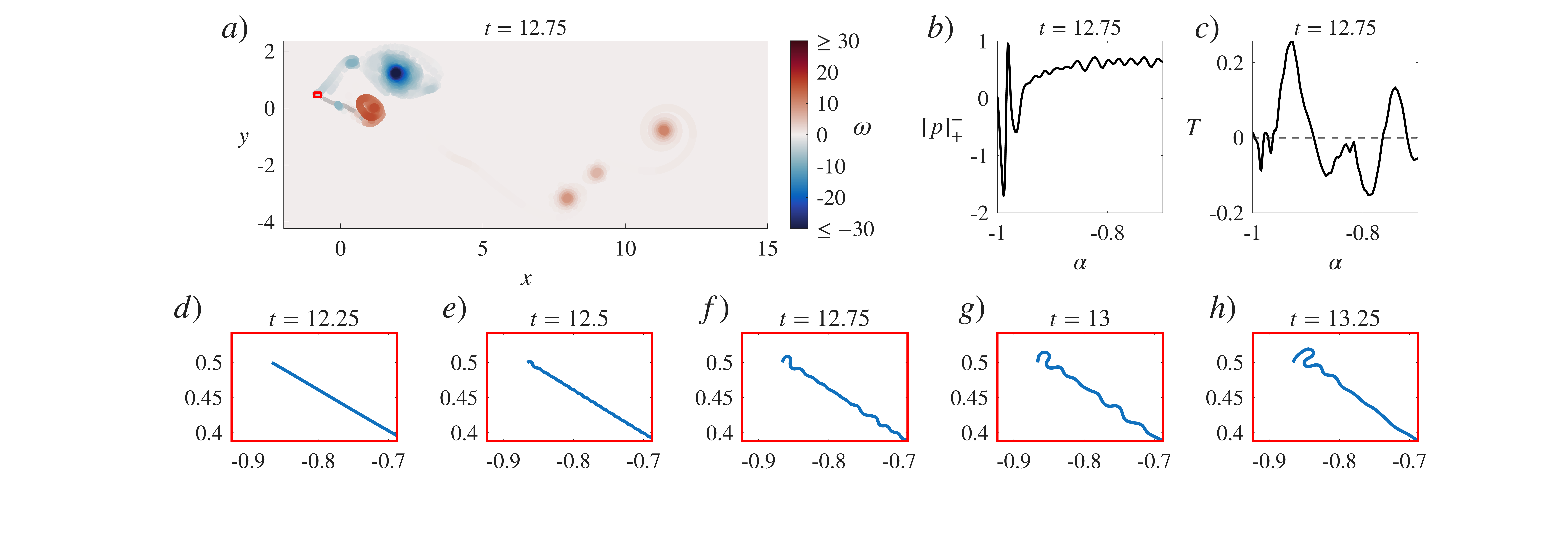}
    \caption{Panels (a-c) show the vorticity $\omega$, pressure jump $[p]^-_+$ and tension $T$ at time $t = 12.75$ for a fixed-fixed membrane with $(R_1,R_2,R_3) = (10,10^{-6},20)$ held at a $30^\circ$ angle in a uniform flow. A red box indicates the region shown in panels (d-h) which show snapshots of the membrane leading edge at times $t = 12.25,12.5,12.75,13,13.25$ respectively. }
    \label{membrane compression example figure}
\end{figure}

As figure \ref{membrane compression example figure}(a), shows, just as the trailing edge vortex detaches, it ``plucks" the membrane, resulting in a sudden loss of fluid loading that incites the formation of high frequency wrinkles near the leading edge. This is shown more clearly in panel (c) of the supplementary movie \href{https://drive.google.com/file/d/1CGcG94oydSZJySB4c7xGAOxGPoc7pN9P/view?usp=drive_link}{``wrinkling\_movie.avi"}. Panel (a) of the supplementary movie shows the full membrane with a red box at the leading edge indicating the close up region (shown in panel (b)) where the wrinkling is concentrated.
\newline

As the dispersion relation \eqref{dispersion relation equation} for the linearized problem with tension $T = -R_3|e| < 0$ predicts, the unstable wave number with maximum exponential growth rate is $k^2 =\frac{R_3|e|}{2R_2}$. For practical purposes, this means that in order to resolve the wrinkling induced by compression, one requires a maximum mesh spacing of $\mathcal{O}(\sqrt{\frac{R_2}{R_3}})$. This is intuitively clear. Increasing the bending rigidity $R_2$ damps the fine-scale wrinkles more strongly. Hence the mesh size can be larger. Increasing $R_3$ increases the effect of compression, however, increasing the wave number of the fastest growing mode of (\ref{linearized membrane equation with negative tension and R2}), so the mesh size should decrease in kind. To resolve the wrinkling in figure \ref{membrane compression example figure} we use $n = 500$ modes in our Galerkin approximation.
\newline

The approximate equation (\ref{linearized membrane equation with negative tension and R2}) indicates that in the limit of $R_2 = 0$ one requires infinitely many modes to resolve the membrane under compression, though the effect of the fluid pressure loading was neglected. As discussed in section \ref{R2 > 0 section}, when $R_2 = 0$ the membrane equation becomes ill-posed, and so the simulation cannot be continued past the critical time where the tension becomes negative. When $R_2 > 0$ however, the simulation can continue. As shown in figure \ref{membrane compression example figure} and predicted by the linearization argument above, instead of becoming numerically unstable, the membrane instead develops fine wrinkles on its surface to reduce the elastic energy stored in compression. 
\newline

The wrinkling seen during stall is a symptom rather than a cause: it appears where the membrane has lost tension and gone slack due to the loss of aerodynamic load. The lift decreases at the moment of unloading when a vortex detaches, and not during the subsequent wrinkling. Bending rigidity sets the scale of the wrinkles that form, but does not affect the pressure loading much. The practical target for a flexible lifting surface is therefore to keep the membrane in tension throughout the maneuver, avoiding the slack, compressed states altogether; wrinkling is most useful here as a diagnostic that marks where and when that tension is lost. Where slack regions are unavoidable, bending rigidity governs the resulting deformation scale and limits flogging and fatigue, but it does not recover the lift lost when the vortex detached and the pressure loading decreased.
\newline

We observe that only \textit{heavy} membranes undergo compression wrinkling and not \textit{light} membranes, and there is an intuitive explanation. Indeed, consider a massless membrane ($R_1 =0$). For such a membrane, the tangential force on every membrane element must be zero. Since the fluid force only applies a force normal to each membrane element, the tension is constant across the membrane's length. Hence for compression wrinkling to occur, the tension must be negative on the whole membrane. At that instant, the physical length of the membrane must be less than its material length. However, since the two ends of the membrane are held fixed this is impossible. A massless membrane thus never wrinkles. Hence in the limit of zero density ($R_1 \rightarrow 0$), it becomes more and more difficult for the membrane to wrinkle.
\newline

Note also that compression-induced wrinkling is a transient event. The argument that compressed states do not occur for $R_1 = 0$ applies more generally when the inertia term $R_1\partial_{tt}\zeta$ is identically zero in the membrane equation \eqref{dimensionless membrane equation}, which can occur for $R_1 = 0$ or $\partial_{tt}\zeta \equiv 0$. Hence a wrinkled configuration cannot be a steady solution to the membrane equation \eqref{dimensionless membrane equation}. This is consistent with the supplementary movie \href{https://drive.google.com/file/d/1CGcG94oydSZJySB4c7xGAOxGPoc7pN9P/view?usp=drive_link}{``wrinkling\_movie.avi"} where the wrinkling forms around $t \approx 12.5$ and dissipates by $t\approx 15$, persisting only for $\approx2.5$ time units. In addition, these wrinkles also have an extremely small amplitude. As shown in figure \ref{membrane compression example figure}, the maximum amplitude of these wrinkles is $\approx 2\%$ of the body length.
\newline

Wrinkling can also play a functional role in biological lifting surfaces. In bat wing membranes, pre-stretched elastin fibers compress the surrounding tissue, producing wrinkles that unfold as the wing is stretched \cite{cheney2015wrinkle}. This allows substantial extension before the stiffer tissue bears the load. Although compression arises from internal elastic forces in that case and from aerodynamic unloading in the present simulations, both illustrate how wrinkling accommodates excess membrane length. The wrinkles observed here therefore mark a loss of aerodynamic loading, while those in bat wings also contribute to the membrane's ability to extend as loading is restored.

\subsection{Fixed-free membranes}
\label{fixed free membranes section}
The fixed-fixed results in the previous section show two ways in which a membrane can accommodate concentrated fluid loading. While the membrane remains in tension, it deforms globally and changes the attachment of the shed vortices; when it loses tension, bending rigidity regularizes the resulting compression through finite-scale wrinkling. Releasing the trailing edge changes the problem more fundamentally, because the tension is then forced to vanish at the free endpoint. The fixed-free membrane therefore possesses an intrinsic degeneracy in the boundary conditions when $R_2 = 0$ even in the absence of compression. For fixed-free membranes, the singular limit $R_2 \to 0$ is our main focus.
\newline

Many previous studies have considered fixed-free membranes (flags) with small but nonzero bending rigidity in both inviscid \cite{alben2022dynamics, shelley2011flapping} and viscous \cite{connell2007flapping} flows. However, none have considered a flag with zero bending rigidity ($R_2 = 0$). This is because with fixed-free boundary conditions the resulting partial differential equation is in a sense underdetermined. Indeed, as discussed in appendix \ref{solving the membrane equation appendix}, at the free end of the membrane, the boundary condition for 
\begin{equation}
    \partial_{tt}\zeta = R_3\partial_\alpha\big((|\partial_\alpha \zeta|-1)\hat{\boldsymbol{s}}\big) + \boldsymbol{f}
    \label{fixed free membrane equation}
\end{equation}

at the free end determined by the variational formulation is zero resultant force at the free end. However, since $|\partial_\alpha\zeta| = \lambda \not=0$, both coordinate equations for zero resultant force at the free end are linearly dependent when $R_2 = 0$, and equivalent to the single condition that the tension vanishes. More specifically,
\begin{equation}
    R_3(1-\lambda^{-1})\partial_\alpha\zeta  \big|_{\alpha = 1}= 0 \iff \lambda|_{\alpha = 1} = 1 \label{fixed free boundary condition}
\end{equation}

Since the ($x,y$) position of the leading edge is prescribed, a total of 3 boundary conditions are provided. However, the equation is second-order in $\alpha$. With two degrees of freedom $x(\alpha,t)$ and $y(\alpha,t)$ such an equation typically requires $2\times2 = 4$ boundary conditions in total. Hence it would seem that the solution to the equation is underdetermined.
\newline

Some studies overcome this issue by augmenting the seemingly underdetermined system with an additional boundary condition. In \cite{mavroyiakoumou2020large, mavroyiakoumou2022membrane} for example, the horizontal distance between the free and fixed ends is prescribed. The free end thus slides vertically on a frictionless pole. This boundary condition has a physical motivation; It is the variational boundary condition at the free end of the membrane equation linearized around small deflections in $x$. Indeed, this linearized equation is considered in \cite{manela2017hanging}. By considering $x \approx \alpha$, a single degree of freedom is removed, and the deflections are confined to the $y$-direction. In the present notation, the result is the following equation:

\begin{equation}
    R_1\partial_{t t} y = R_3\partial_\alpha\big((1 -\alpha)\partial_\alpha y) + \boldsymbol{f}.
    \label{simplified membrane equation}
\end{equation}
This equation is now second-order in space but only with a single degree of freedom, requiring only 2 boundary conditions. At the fixed end, $y$ is prescribed. At the free end, the variational boundary condition is $\alpha = 1$ or $x=1$, which is exactly the ``frictionless pole" boundary condition considered in \cite{mavroyiakoumou2020large, mavroyiakoumou2022membrane}.
\newline

In the present study, we show that the fixed-free membrane equation \eqref{fixed free membrane equation} is well-posed with only three boundary conditions: the prescribed leading edge position and the single variational condition $|\partial_\alpha \zeta|=1$ when $\alpha = 1$. The apparently missing boundary condition is not so much missing as it is replaced by the requirement that the solution have bounded stretching energy (i.e., the solution lies in $H^1_\alpha$). To enforce this constraint, we solve the weak formulation with a spectral Galerkin method, which automatically imposes the finite-energy constraint and the natural boundary condition simultaneously. The details are provided in appendix \ref{solving the membrane equation appendix}. As shown in appendix \ref{convergence in the limit of zero bending rigidity section}, this solution coincides with the $R_2 \rightarrow 0$ limit (with respect to the energy norm $C^1_tH^1_\alpha$) of solutions to the bending-regularized equation, for which all solution branches have finite energy and the standard boundary condition count applies.
\newline

In appendix \ref{missing boundary conditions appendix} we explain carefully how the bounded energy condition combined with a single boundary condition for each unknown selects the unique solution with some simple examples.

\subsubsection{Tip-snapping in the limit of zero bending rigidity}
To illustrate the effect of $R_2$ we fix the value of $R_1 = 0.25$ and $R_3 = 100$. For $R_2 \geq 10^{-2}$ the undeflected state is stable. Decreasing the bending rigidity below $10^{-2}$ destabilizes the undeflected state, and the membrane begins to flutter and whip around. Initially ($10^{-3} \lessapprox R_2 \lessapprox 10^{-2}$), this flutter is periodic \cite{alben2008flapping} (see the following section \ref{effect of R3 section} for one such example), but as $R_2\rightarrow 0$, only chaotic fluttering motions can be observed. 
\newline

Figure \ref{fig: snapping flag vorticity} shows snapshots of the vorticity fields induced by membrane flags with $R_1 = 0.25,\ R_3 = 100$, $R_2 = 10^{-2}, 10^{-4},10^{-6},10^{-8},10^{-10}, 0$ at time $t = 5$. The wakes are composed of streets of irregular vortex spirals that have similar spatial structures and temporal frequencies for $R_2 < 10^{-4}$, though different configurations due to the sensitivity of the dynamics to small perturbations in parameters (such as $R_2$). 
These peculiar wakes are due to high-frequency snapping of the membrane tips combined with the slower overall fluttering motions of the flags.
\begin{figure}[H]
    \centering
    \includegraphics[width=0.7\linewidth, trim = 1cm 2.5cm 1cm 1.5cm, clip]{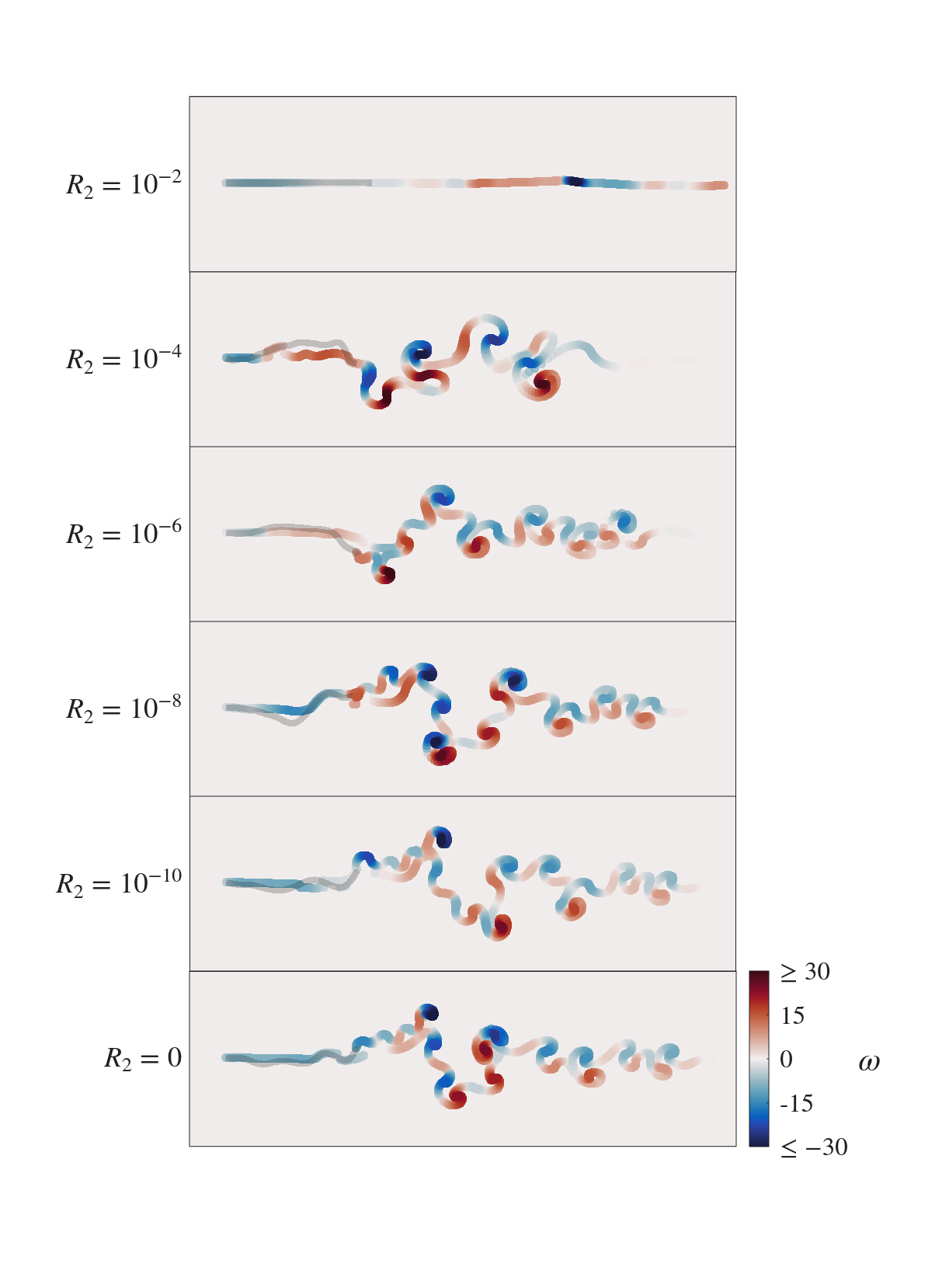}
    \caption{Vorticity fields for membrane flags with $R_1 = 0.25,\ R_3 = 100$, $R_2 = 10^{-2}, 10^{-4},10^{-6},10^{-8},10^{-10}, 0$ at time $t = 5$.}
    \label{fig: snapping flag vorticity}
\end{figure}
Qualitatively, flag-fluttering motions with $R_2 \leq 10^{-3}$ are similar everywhere except for a thin bending boundary layer near the free end, where the tip of the membrane flag flutters and snaps with a frequency that slowly increases as $R_2 \rightarrow 0$ before saturating at a critical value $f \approx 3.40$. This is shown in figure \ref{fig: snapping flag snapshots}. Figure \ref{fig: snapping flag snapshots} shows snapshots of the full membrane (left column), snapshots of the membrane tip (middle column) and the tip deflection amplitude against time (right column) for a membrane flag with $R_1 = 0.25,\ R_3 = 100$, $R_2 = 10^{-2}, 10^{-3},10^{-4},10^{-5},10^{-8}, 0$ until time $t = 10$. Each row corresponds to a different value of $R_2$. Starting from $R_2 = 10^{-4}$ the snapshots of the overall membrane positions (left column of figure \ref{fig: snapping flag snapshots}) agree extremely well qualitatively. However, comparing the snapshots of the tip (middle column of figure \ref{fig: snapping flag snapshots}) show that the membranes with different $R_2$ differ primarily due to the quasiperiodic snapping of the free end whose frequency increases very slowly with decreasing $R_2$, with approximately constant amplitude, as shown in the right column of figure \ref{fig: snapping flag snapshots}.

\begin{figure}[H]
    \centering
    \includegraphics[width=0.7\linewidth, trim = 0cm 0.5cm 0cm 0cm, clip]{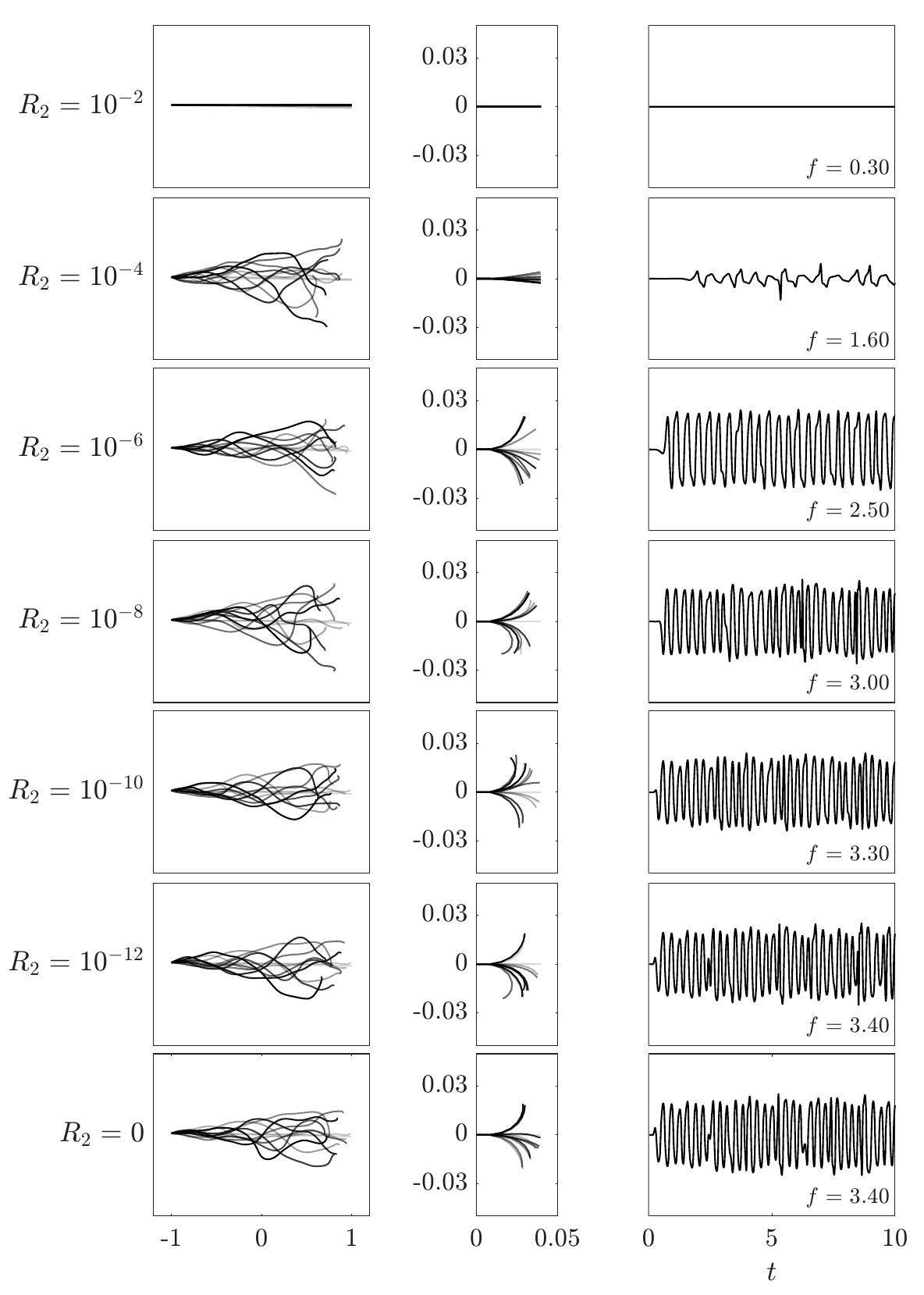}
    \caption{Snapshots of the full membrane (left column), snapshots of the membrane tip (middle column) and the tip deflection amplitude against time (right column) for a membrane flag with $R_1 = 0.25,\ R_3 = 100$, $R_2 = 10^{-2}, 10^{-3},10^{-4},10^{-5},10^{-8}, 0$ until time $t = 10$. Each row corresponds to a different value of $R_2$. }
    \label{fig: snapping flag snapshots}
\end{figure}

This tip-snapping is clearly shown in the supplementary movie \href{https://drive.google.com/file/d/1KGXbDMJiGA9wVuYbdKkdgYOSV8dSV5cg/view?usp=drive_link}{``flicking\_movie\_R2-0.avi"}. This movie considers a fixed-free membrane with $(R_1,R_2,R_3) = (0.25,0,10^2)$ in a uniform flow. Panels (a-c) of the movie show the vorticity field induced by the membrane flag, a close up of the membrane flapping, and the pressure jump along the membrane respectively. Panel (c) of the movie shows that the pressure jump obtains a global extremum at a critical point near the free end, before vanishing at exactly the free end point. This global extremum alternates in sign from positive to negative, causing the tip to whip back and forth and form small hooks as shown in the middle column of figure \ref{fig: snapping flag snapshots}.
\newline

The formation of a global pressure extremum near the membrane tip can be explained intuitively as follows. Let $R_2 = 0$. Since the fluid pressure jump acts normal to the membrane, it excites primarily transverse perturbations about a locally straight state. Linearizing the transverse dynamics about this locally straight region and ignoring external forces then gives a wave equation,
\begin{equation}
R_1\partial_{tt}y = R_3 e\,\partial_{\alpha\alpha}y,
\end{equation}
where $e=(\lambda-1)/\lambda>0$. The corresponding local wave speed is

$$
    c=\sqrt{\frac{R_3 e}{R_1}},
$$

and therefore vanishes as the tension approaches zero at the free edge when $R_2=0$ (see equation \eqref{fixed free boundary condition}). Consequently, transverse disturbances propagating along the membrane are increasingly slowed near the free end and cannot propagate beyond it, causing the associated pressure disturbance to concentrate near the tip and form a pressure-jump extremum. As discussed in section \ref{pressure jump equation as a conservation law section}, pressure-jump extrema coincide with $\gamma$ extrema that are swept downstream by the uniform flow and ejected into the fluid in the form of an alternating-sign vortex sheet, as shown in figure \ref{fig: snapping flag vorticity}.
\newline

It can be shown that this slowly-growing tip flapping frequency obeys an inverse-logarithmic scaling law in the bending rigidity $R_2$. Figure \ref{inverse logarithmic relation f-R2 figure} shows the maximum tip deflection frequency $f$ against $R_2$. The best-fit approximation of the form
\begin{equation}
    f \approx A + \frac{B}{\log R_2}
\end{equation}
is shown as a red dashed line, while the value corresponding to $R_2=0$ is shown as a red square. The best-fit approximation has $R^2=0.989$, indicating that the inverse-logarithmic dependence gives a remarkably good description of the numerical data, especially for $R_2 \geq 10^{-11}$ before the frequency becomes constant in $R_2$. This $\mathcal{O}(1/|\log R_2|)$ approach of the tip frequency to its zero-rigidity limit can be explained by examining the weakly singular branch of the linearized problem.
\begin{figure}[H]
    \centering
    \includegraphics[width=0.7\linewidth, trim = 0cm 0cm 0cm 0cm, clip]{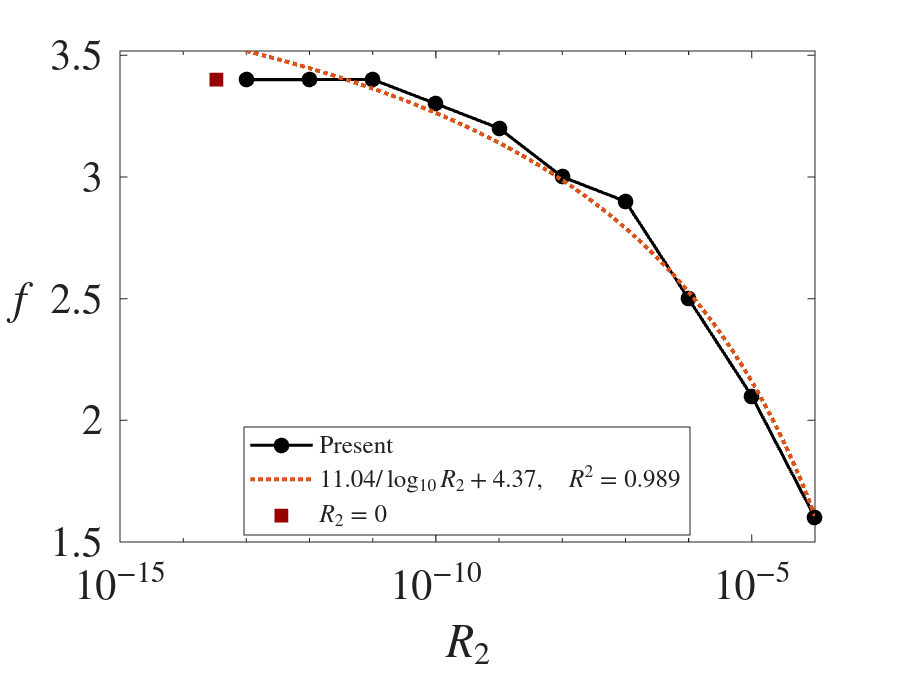}
    \caption{Maximum tip deflection frequency $f$ against $R_2$. The best-fit approximation of the form $f \approx A + \frac{B}{\log(R_2)}$ is shown as a red dashed line, and the value corresponding to $R_2 = 0$ is shown as a red square.}
    \label{inverse logarithmic relation f-R2 figure}
\end{figure}

By rescaling the constants $R_1$, $R_2$, and $R_3$, we may assume that $0\leq\alpha \leq 1$ so that the fixed endpoint lies at $\alpha = 0$ and the free endpoint lies at $\alpha = 1$, which simplifies the algebra without changing the scaling laws we will derive. 
As discussed in appendix \ref{missing boundary conditions appendix}, solutions to the linearized problem with $R_2=0$ are linear combinations of Bessel functions
\begin{equation}
    J_0(k\sqrt{1-\alpha}),
    \qquad
    Y_0(k\sqrt{1-\alpha}),
\end{equation}
with time-dependent coefficients, where $k$ is the corresponding wave number of the eigenmode after rescaling. The second branch has the endpoint asymptotic behavior
\begin{equation}
    Y_0(k\sqrt{1-\alpha})
    =
    C(k) + \frac{1}{\pi}\log(1-\alpha)
    + o(1),
    \qquad \alpha\rightarrow 1^{-},
\end{equation}
and is therefore logarithmically singular at the free endpoint, and hence is not in $H^1$.
\newline

As shown both numerically and via asymptotic analysis in \cite{manela2017hanging}, the introduction of nonzero bending rigidity $R_2$ produces a bending layer of thickness
\begin{equation}
    \delta = \mathcal{O}(R_2^p)
\end{equation}
for some $p>0$ at the free end of the membrane flag. Away from this layer, bending is asymptotically negligible and the outer solution is therefore governed to leading order by the zero-rigidity equation. It may consequently be written in the form
\begin{equation}
    \phi_{\mathrm{out}}(\alpha;R_2)
    =
    A(R_2)J_0(k\sqrt{1-\alpha})
    +
    B(R_2)Y_0(k\sqrt{1-\alpha}),
    \qquad
    0<\alpha<1-\delta .
\end{equation}
Note that we have suppressed the time dependence of the coefficients $A,B$. As discussed in appendix \ref{convergence in the limit of zero bending rigidity section}, the solutions with $R_2>0$ converge to the finite-energy $R_2=0$ solution as $R_2\rightarrow0$. It follows that
\begin{equation}
    B(R_2)\rightarrow0.
\end{equation}

At the edge of the bending layer, $1-\alpha\sim\delta$, and the singular branch therefore has magnitude
\begin{equation}
    Y_0(k\sqrt{\delta})
    =
    \mathcal{O}(\log\delta)
    =
    \mathcal{O}(\log R_2).
\end{equation}
regardless of the value of $p$. Thus, for the contribution of the singular outer branch to remain bounded while the outer solution converges to the regular zero-rigidity solution, its coefficient must satisfy
\begin{equation}
    B(R_2)|\log\delta|= \mathcal{O}(1),
\end{equation}
and hence
\begin{equation}
    B(R_2)
    =
    \mathcal{O}\left(\frac{1}{|\log\delta|}\right)
    =
    \mathcal{O}\left(\frac{1}{|\log R_2|}\right).
    \label{B inverse log scaling}
\end{equation}

It remains to connect this inverse-logarithmic perturbation of the spatial eigenmode to the observed oscillation frequency. To do so, normalize the regular component of the outer solution and write
\begin{equation}
    \phi_{\mathrm{out}}(\alpha;R_2)
    =
    J_0(k\sqrt{1-\alpha})
    +
    \beta(R_2)
    Y_0(k\sqrt{1-\alpha}),
    \qquad
    \beta(R_2)=\frac{B(R_2)}{A(R_2)}.
\end{equation}
Equation \eqref{B inverse log scaling} implies
\begin{equation}
    \beta(R_2)
    =
    \mathcal{O}\left(\frac{1}{|\log R_2|}\right).
\end{equation}
The wave number is selected by the boundary condition at the fixed end.  Writing the outer solution in normalized form as
\begin{equation}
    \phi(\alpha)
    =
    J_0\!\left(k\sqrt{1-\alpha}\right)
    +
    \beta(R_2)
    Y_0\!\left(k\sqrt{1-\alpha}\right),
\end{equation}
the fixed-end condition gives an eigenvalue equation for $k$.
\newline

Since the fixed end at $\alpha=0$ lies outside the bending-dominated free-end layer, the outer solution satisfies the prescribed displacement condition to leading order. Indeed, corrections arising from the finite bending rigidity are algebraically small in $R_2$ \cite{manela2017hanging} and can therefore be neglected relative to the inverse-logarithmic contribution. This gives 
\begin{equation}
J_0(k)+\beta Y_0(k)=0.
\label{fixed end eigenvalue condition}
\end{equation}
Here $\beta=\beta(R_2)$ is determined by the regularization of the logarithmic branch near the free end, while \eqref{fixed end eigenvalue condition} determines the corresponding wave number $k=k(R_2)$. 

At zero bending rigidity, $\beta=0$, so the limiting wave number $k_0$ is determined by
\begin{equation}
J_0(k_0)=0.
\end{equation}
Because $J_0'(k_0)\neq0$ and $Y_0(k_0)\neq0$, the implicit function theorem applied to \eqref{fixed end eigenvalue condition} gives
\begin{equation}
\left.\partial_k\beta\right|_{k=k_0}
=
-\frac{J_0'(k_0)}{Y_0(k_0)}
\neq0.
\end{equation}
Thus the relation is locally invertible, and, writing $k=k_0+\Delta k$,
\begin{equation}
\Delta k
=
-\frac{Y_0(k_0)}{J_0'(k_0)}\beta
+
\mathcal{O}(\beta^2).
\end{equation}
Consequently,
\begin{equation}
\beta(R_2)
=
\mathcal{O}\left(\frac{1}{|\log R_2|}\right)
\quad\Longrightarrow\quad
k(R_2)-k_0
=
\mathcal{O}\left(\frac{1}{|\log R_2|}\right).
\label{k inverse log scaling}
\end{equation}

Hence we see that the wave numbers of the eigenmodes converge at worst as an inverse-logarithmic scaling law. Finally, as shown in \cite{alben2022dynamics}, the linearized membrane-fluid system supplies a dispersion relation
\begin{equation}
    D(\omega,k)=0,
\end{equation}
or, locally (via the implicit function theorem),
\begin{equation}
    \omega=\Omega(k).
\end{equation}
Expanding about the zero-rigidity eigenmode gives
\begin{equation}
    \omega-\omega_0
    =
    \Omega'(k_0)(k-k_0)
    +
    \mathcal{O}\left((k-k_0)^2\right).
\end{equation}
Combining this relation with \eqref{k inverse log scaling} yields
\begin{equation}
    \omega(R_2)-\omega_0
    =
    \mathcal{O}\left(\frac{1}{|\log R_2|}\right),
\end{equation}
and hence
\begin{equation}
    f(R_2)-f_0
    =
    \mathcal{O}\left(\frac{1}{|\log R_2|}\right).
    \label{frequency inverse log scaling}
\end{equation}

To summarize, the inverse-logarithmic dependence observed in figure
\ref{inverse logarithmic relation f-R2 figure} arises naturally from the logarithmically singular branch of the linearized problem at the free end. Small bending rigidity regularizes this singular branch only within a bending boundary layer at the free end whose thickness vanishes algebraically with $R_2$. Consequently, the coefficient of this branch must vanish inverse-logarithmically as $R_2\rightarrow 0$, and each wave number and frequency mode inherits the same scaling through the linear eigenvalue and dispersion relations. This argument explains the slow convergence of the tip flapping frequency to its zero-rigidity value, in particular its approximate inverse-logarithmic scaling in $R_2$.
\newline

Let 
\begin{equation}
    k_{eff}(f) = \frac{\|\partial_\alpha \langle\zeta, e^{-2\pi i ft}\rangle_{L^2_t}\|_{L^2_\alpha}}{\| \langle\zeta, e^{-2\pi i ft}\rangle_{L^2_t}\|_{L^2_\alpha}}
\end{equation}
where $\langle\zeta, e^{-2\pi i ft}\rangle_{L^2_t} = \int_0^{\tau_f} \zeta e^{2\pi i ft}dt$, $\tau_f$ is the final time, and $f$ denotes the maximum frequency discussed above. If $\zeta = e^{ik\alpha  - i2\pi ft}$ is a traveling wave, then $k_{eff} = \mathcal{O}(k)$. Thus, for arbitrary body positions $\zeta$, $k_{\mathrm{eff}}$ provides an effective measure of the wave number associated with the dominant frequency $f$ as $R_2$ varies. One consequence of the scaling argument above is that $k_{\mathrm{eff}}$ converges to a finite value at a rate no slower than $\mathcal{O}\!\left(1/|\log R_2|\right)$ as $R_2 \rightarrow 0$. Indeed, figure \ref{inverse logarithmic relation k-R2 figure} appears to verify this hypothesis. We show the computed values of $k_{\mathrm{eff}}$ against $R_2$ and report the best-fit approximation of the form $A + B/\log R_2$. This approximation has an $R^2$ value of $0.952$, providing further numerical evidence for the inverse-logarithmic scaling of the tip-snapping frequency and wave number predicted by the linearization argument above. The fit is somewhat weaker than that for the frequency because the relation between $k_{\mathrm{eff}}$ and the underlying wave number depends on the finite observation window, whose number of contained oscillation periods varies with $R_2$ as the dominant frequency changes; in particular, $[0,\tau_f]$ need not span an integer number of periods.

\begin{figure}[H]
    \centering
    \includegraphics[width=0.7\linewidth, trim = 0cm 0cm 0cm 0cm, clip]{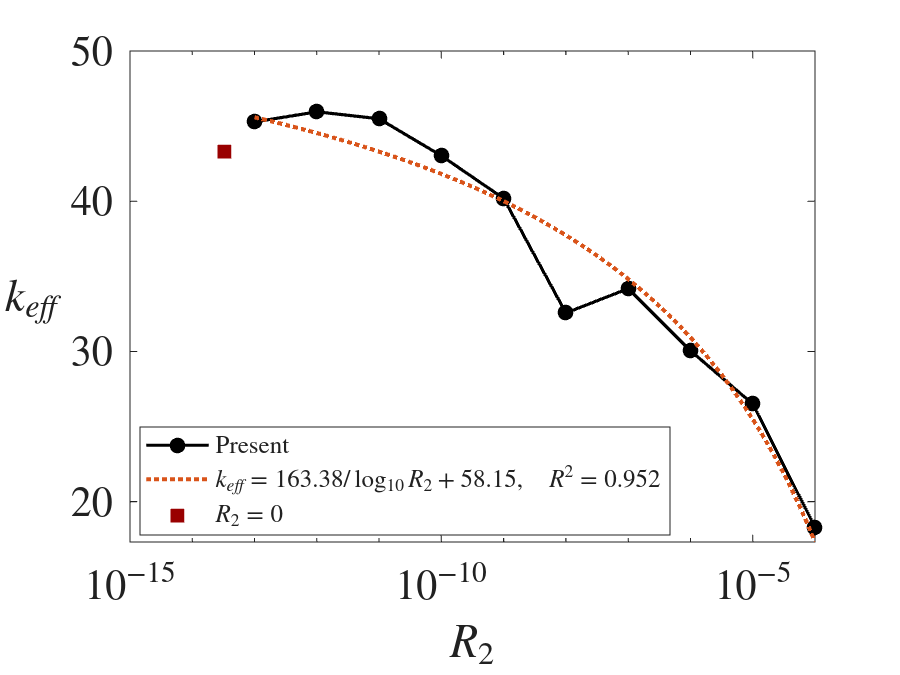}
    \caption{Wave number $k_{eff}$ corresponding to the maximum tip deflection frequency $f$ against $R_2$. The best-fit approximation of the form $f \approx A + \frac{B}{\log(R_2)}$ is shown as a red dashed line, and the value corresponding to $R_2 = 0$ is shown as a red square.}
    \label{inverse logarithmic relation k-R2 figure}
\end{figure}

\subsubsection{The effect of the stretching modulus $R_3$ on flag dynamics}
\label{effect of R3 section}
In this section we consider the effect of the stretching modulus on flag dynamics. To this end we fix $R_1 = 0.25$ and $R_2 = 5\times10^{-3}$ so that an extensible flag exhibits periodic flapping and vary $R_3$. Figure \ref{fig: R3 non effect} shows snapshots of the vorticity fields of membrane flags with $(R_1,R_2) = (0.25,5\times10^{-3})$ and $R_3 = 4,\ 2\times10^{1},\ 10^{2},\ 5\times10^{2}$ at $t = 20$. The membrane is shown as a bright green line.

\begin{figure}[H]
    \centering
    \includegraphics[width=0.8\linewidth, trim = 5cm 2cm 2cm 1cm, clip]{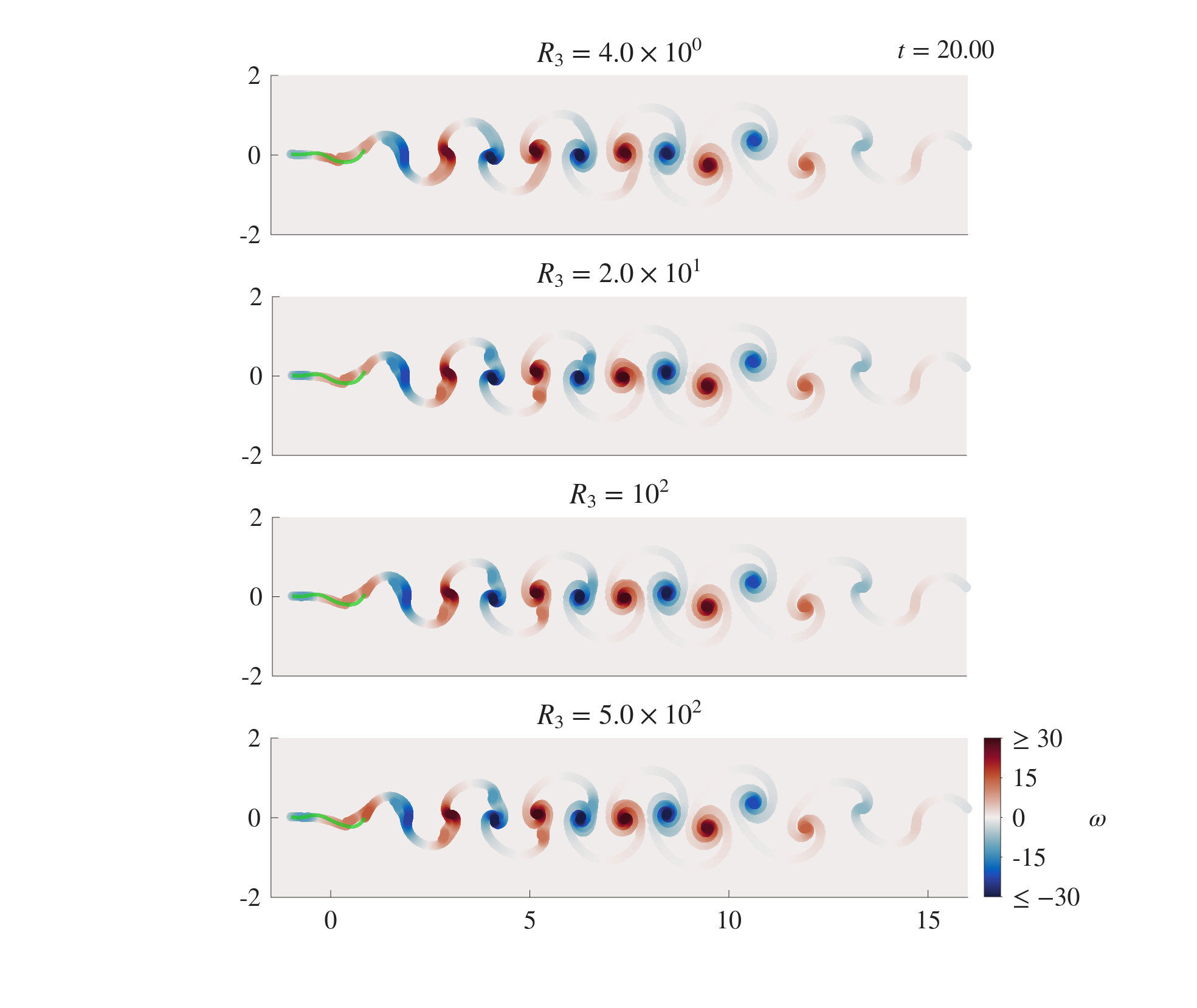}
    \caption{Vorticity fields of flags with $(R_1,R_2) = (0.25,5\times10^{-3})$ and $R_3 = 4,\ 2\times10^{1},\ 10^{2},\ 5\times10^{2}$ at $t = 20$. The membrane is shown as a bright green line for clarity.}
    \label{fig: R3 non effect}
\end{figure}
As can be seen in the figure, when $R_3 \geq 4$ the effect of varying $R_3$ on the periodic orbit is very mild. Indeed, for this range of $R_3$, the membrane flag behaves as though it is inextensible $(R_3 = \infty)$, and the membrane flag dynamics in this regime are virtually identical. Decreasing $R_3$ below $R_3 = 0.8$, however, introduces non-negligible effects caused by extensibility, destabilizing the periodic orbit.  
Figure \ref{fig: R3 effect} is analogous to  \ref{fig: R3 non effect} but considers smaller values of $R_3 \leq 4$, namely, $R_3 = 3.2\times10^{-2},\ 1.6\times10^{-1},\ 8\times10^{-1},\ 4 $ at $t = 30.01$. 
\begin{figure}[H]
    \centering
    \includegraphics[width=0.8\linewidth, trim = 5cm 2cm 2cm 1cm, clip]{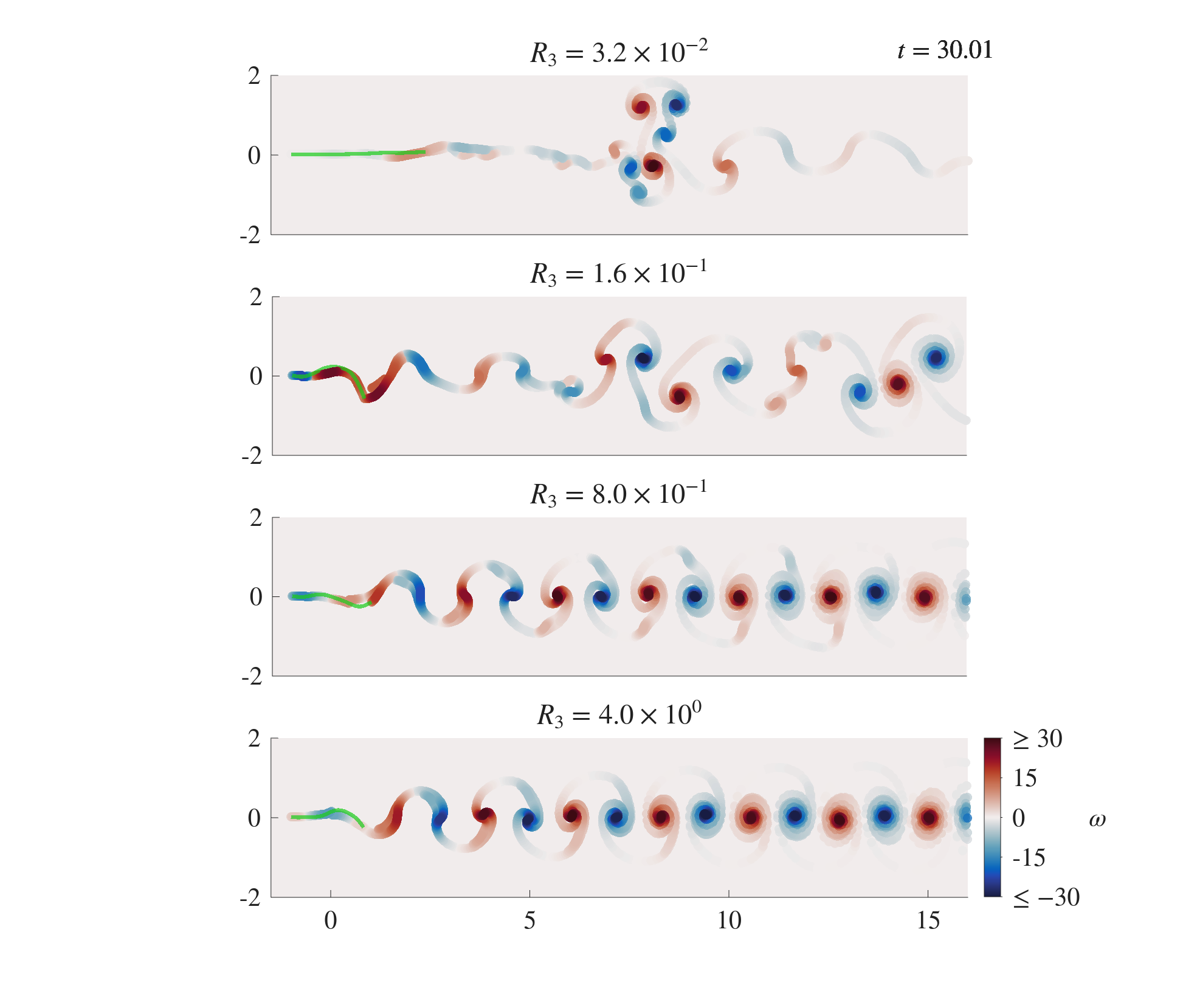}
    \caption{Vorticity fields of flags with $(R_1,R_2) = (0.25,5\times10^{-3})$ and $R_3 = 3.2\times10^{-2},\ 1.6\times10^{-1},\ 8\times10^{-1},\ 4 $ at $t = 30.01$. The membrane is shown as a bright green line for clarity.}
    \label{fig: R3 effect}
\end{figure}
In plain language, figure \ref{fig: R3 non effect} shows that for weakly extensible flags, the instability of the undeflected state is saturated by a stable periodic flapping motion. For $R_1 = 0.25$ and $R_3 \ge 1$ we observe such periodic orbits for $10^{-3}\lessapprox R_2 \lessapprox10^{-2}$. For the small bending rigidities $R_2$ considered here, this behavior persists for sufficiently large stretching modulus ($R_3\ge \mathcal{O}(1)$), where extension is negligible and the membrane instead accommodates the fluid loading primarily through bending and local rotation. As $R_3$ is decreased below $\mathcal{O}(1)$, however, figure \ref{fig: R3 effect} shows that the tension becomes too weak to oppose the fluid stresses and appreciable stretching develops (e.g. figure \ref{fig: R3 effect}, $R_3 = 3.2\times10^{-2}$). The motion initially remains close to the periodic flapping state (e.g. figure \ref{fig: R3 effect}, $R_3 = 8\times10^{-1}$), but the accumulating extension eventually destabilizes it. This can be seen in the supplementary movie showing the effect of increasing extensibility: \href{https://drive.google.com/file/d/1_z9MffwxBTtyBGtEEO2kDjkDcOf6o5Ns/view?usp=drive_link}{``R3\_effect\_movie.avi"}
\newline

The free-end boundary conditions require the tension and shear resultants to vanish at the free end. Likewise, the Kutta condition ensures that the pressure must vanish at the endpoints of the membrane flag \cite{loo2025falling}. The absence of fluid and tension forces at the trailing edge causes the free end to lag behind the motion of the neighboring membrane and it is pulled around as the rest of the flag deforms, producing curved hooks near the free end. Such hooks can be seen near $t$ = 36 and 48 in the supplementary movie, for the case with the smallest $R_3$. This deformation is self-reinforcing: once a hook forms, the pressure pushes it downstream, increasing both its curvature and its extension. Eventually the curvature becomes large enough that bending forces straighten the hook and rotate the membrane toward alignment with the flow. In this nearly aligned state the pressure loading becomes weak, and, in the absence of appreciable fluid shear, the membrane slowly retracts toward its undeflected configuration. 
\newline

The return toward the undeflected state does not terminate the motion, because the straight flag in uniform flow is itself unstable to flutter. The fluid therefore destabilizes the membrane again, producing another flapping and stretching excursion. This chaotic flapping regime ($R_3 \le \mathcal{O}(1)$) can thus be viewed physically as a repeated competition between fluid-driven destabilization and elastic restoration: the flow drives the flag away from the straight unstretched equilibrium, stretching and bending return it toward equilibrium, and the process repeats irregularly. The supplementary movie \href{https://drive.google.com/file/d/1_z9MffwxBTtyBGtEEO2kDjkDcOf6o5Ns/view?usp=drive_link}{``R3\_effect\_movie.avi"} makes this competition visually apparent.
\newline

This physical return mechanism also suggests a dynamical-systems interpretation of the transition from periodicity to chaos. For large $R_3$, trajectories leaving the unstable undeflected state are captured by the stable periodic flapping orbit. Once stretching destabilizes that orbit, the same outward fluttering motion is no longer saturated periodically. Instead, nonlinear elastic restoration returns trajectories toward the undeflected state, creating a global return from the unstable manifold of the rest state back toward its stable manifold. If these manifolds intersect transversely, their repeated intersections form a homoclinic tangle. Nearby trajectories are then repeatedly stretched apart during the fluttering excursion and folded back together by the elastic return, so successive passages near the undeflected state lead to increasingly different subsequent excursions; this provides the natural phase-space mechanism for the transition from periodic flapping to chaotic flutter. Indeed, a transverse intersection between the stable and unstable manifolds of a hyperbolic state generates a homoclinic tangle and, by the Smale–Birkhoff homoclinic theorem, implies the existence of horseshoe dynamics and therefore symbolic chaos \cite{katok1995introduction, smale1967differentiable}.

\section{Conclusion}
\label{conclusion section}

We have developed a two-dimensional inviscid model for extensible membranes with continuous vortex shedding from both edges. Together with a weak spectral Galerkin discretization of the membrane equation, the model permits fixed-fixed and fixed-free membranes to be studied across broad elastic parameter ranges, including the singular limit \(R_2=0\). The pressure-jump conservation law provides a common physical interpretation by describing the pressure jump as a flux for the bound vorticity.
\newline

For fixed-fixed membranes, compliance delays leading-edge-vortex detachment and enhances lift primarily during start-up; after periodic shedding develops, its principal effect is to amplify the force oscillations. Vortex detachment can also unload heavy membranes into compression, producing finite-scale wrinkles when \(R_2>0\), while the positive-rigidity membrane equation remains locally well-posed even under negative tension. For fixed-free membranes, the weak formulation selects the bounded-energy solution of the formally underdetermined zero-rigidity problem. The vanishing tension at the free edge produces rapid tip-snapping, whose dominant frequency and effective wave number approach their \(R_2=0\) limits with an inverse-logarithmic correction. Sufficient extensibility also destabilizes periodic flutter, and the resulting fluid-driven departure and elastic return suggest a homoclinic-tangle mechanism for the transition to chaos.
\newline

Taken together, these results show that even a relatively simple inviscid membrane model supports a wide range of dynamics whose mechanisms can be identified and explained mathematically. Membrane geometry determines how vortices remain attached and where aerodynamic loading is concentrated, while bending rigidity and extensibility determine how the body responds when that loading is gained or lost. The present model is restricted to two-dimensional inviscid flow with vortex sheet separation prescribed at the membrane edges, and therefore omits viscous boundary layers, spanwise transport, and three-dimensional deformation. Comparisons with viscous simulations and experiments will be needed to determine the quantitative range of validity of the results, but the mechanisms identified here provide concrete hypotheses for what may occur in viscous models.
\newline

\vspace{2cm}

\appendix
\section*{Appendix preliminaries}
Appendices A–F develop the constitutive, numerical, and analytical properties of the membrane equation, while appendix G gives a self-contained derivation of the pressure-jump conservation law used to couple the membrane to the fluid.
\newline 

Throughout the appendices, let \(I=(-1,1)\) denote the material domain of the membrane. Let $L^2(I)$ denote the space of square integrable functions with domain $I$. For each nonnegative integer \(k\), the Sobolev space $H^k = H^k(I)$ consists of functions \(v\in L^2(I)\) whose weak derivatives \(\partial_\alpha^jv\) belong to \(L^2(I)\) for \(1\leq j\leq k\), with norm

$$
\|v\|_{H^k}^2
=
\sum_{j=0}^k
\|\partial_\alpha^jv\|_{L^2}^2.
$$

These definitions are applied componentwise to vector-valued functions, and we suppress the domain \(I\) when no ambiguity is possible. For a Banach space \(X\), we write

$$
C_t^rX=C^r([0,T];X),
\qquad
L_t^pX=L^p(0,T;X).
$$

Thus, for example, \(C_t^0H_\alpha^2=C([0,T];H^2(I))\). All endpoint conditions are understood in the trace sense. Since \(H^1(I)\hookrightarrow C^0(\overline I)\) and \(H^2(I)\hookrightarrow C^1(\overline I)\), the endpoint values of \(v\in H^1(I)\), as well as the endpoint values and pointwise immersion condition for \(v\in H^2(I)\) are well-defined.

\section{Deriving the stretching and bending energy}
\label{bending and stretching energy derivation appendix}
In this section we derive the form of the stored stretching and bending energies for an elastic sheet. Hence, this determines the constitutive relations satisfied by the elastic sheet. In our model, the elastic sheet is a pencil of Hookean fibers with uniform elastic coefficient $E$, parallel to the normal and in-plane direction. This is essentially the classical Cosserat rod theory and the formulae we derive are equivalent to what is described in \cite[98-105]{antman2005nonlinear}.
\newline

Recall that the thickness of the body is given by $h \ll 1$, with out-of-plane width $W$ and material arc length $\alpha \in [-L,L]$ so that the sheet has initial length $2L$. Let $\zeta$ denote the midline of the elastic sheet. At each instant $t$ the fiber located at thickness $\eta \in[-h/2,h/2]$ is parameterized by 
\begin{equation}
    \boldsymbol{p}(\alpha, t) = \zeta(\alpha, t) + \eta\,\hat{\boldsymbol{n}}(\alpha, t),\ \alpha \in [-L,L]
\end{equation}
$\eta \in[-h/2,h/2]$ is the normal coordinate of the fiber relative to the midline. The strain of this fiber is 
\begin{align}
    |\partial_\alpha\boldsymbol{p}| - 1 &= | \big(|\partial_\alpha\zeta|- \eta\kappa_0\big)\hat{\boldsymbol{s}}| -1 \\
    &=||\partial_\alpha\zeta|- \eta\kappa_0| - 1 \\
    &= |\partial_\alpha\zeta| -1- \eta\kappa_0
\end{align}
where the last equality follows from the assumption that $h \ll 1$ so that $|\partial_\alpha\zeta| \gg \eta\kappa_0$. Note that $\kappa_0 =\partial_\alpha\theta$ is not the curvature of the midline $\kappa = \kappa_0/|\partial_\alpha\zeta|$. 
\newline

Let $\lambda_\eta =|\partial_\alpha\boldsymbol{p}|$ and $\lambda:= \lambda_0=|\partial_\alpha\zeta|$. Then $\lambda_\eta =\lambda - \eta\kappa_0$. To proceed, we must assume a constitutive relation for each fiber. We assume they are Hookean: the stress force is a linear function of the strain $\lambda_\eta -1$ given by
\begin{equation}
    E(\lambda_\eta -1)
\end{equation}
where $E$ is the Young's modulus.

Hence, the energy $\mathcal{U}_\eta$ in each fiber is such that $\partial_{\lambda_\eta}\mathcal{U}_\eta = E(\lambda_\eta -1)$. Hence we have:

\begin{equation}
    \mathcal{U}_\eta = \int_1^{\lambda_\eta} E(\lambda_\eta -1) =\frac{1}{2}E(\lambda_\eta-1)^2
\end{equation}
Since the total energy in the sheet is the sum of the energies in each of its component fibers, by integrating in both the normal direction $\eta$ and the in-plane direction, the total stored energy is given by 

\begin{align}
    \mathcal{U} &=\int_0^W\int_{-h/2}^{h/2} \mathcal{U}_\eta d\eta\, dz \\
    &=\int_0^W\int_{-h/2}^{h/2}  \frac{1}{2}E(\lambda_\eta-1)^2d\eta\, dz \\
    &=\int_0^W\int_{-h/2}^{h/2}  \frac{1}{2}E(\lambda - \eta\kappa_0-1)^2d\eta\, dz \\
    &=\frac{EW}{2}\int_{-h/2}^{h/2}\big(  (\lambda -1)^2  - 2(\lambda-1)\eta\kappa_0 + \eta^2\kappa_0^2 \big)d\eta\\
    &=\frac{EhW}{2}(\lambda-1)^2+ \frac{Eh^3W}{24}\kappa_0^2 \\
    &=: \mathcal{U}_{stretch} + \mathcal{U}_{bend} 
\end{align}

where \begin{equation}
    \mathcal{U}_{stretch} = \frac{EhW}{2}(\lambda-1)^2,\ \mathcal{U}_{bend}= \frac{1}{2}B\kappa_0^2
\end{equation} with $B:= \frac{Eh^3W}{12}$.
\newline

In summary, assuming the elastic sheet is a pencil of parallel Hookean fibers uniquely determines its constitutive relations. It is worth noting that this set of equations is different from similar variational formulations of the elastic equations \cite{tadjbakhsh1966variational} that use the physical curvature $\kappa = \kappa_0/\lambda$ instead of the material curvature $\kappa_0$ in the bending energy. That is a phenomenological constitutive relation that is different from the one considered in this study, but for small stretch ($\lambda \approx 1$) they are equivalent.
\section{Deriving the membrane equation}
\label{deriving membrane equation appendix}

In this section of the appendix we explain the details in deriving the weak form of the membrane equation by taking the first variation of the energy and applying Hamilton's principle.
\newline

Recall that the Lagrangian is given by
\begin{align}
\mathcal{L}&= \int_{-1}^1 \frac{1}{2}R_1 |\partial_{t} \zeta|^2d\alpha - \bigg(\int_{-1}^1\frac{1}{2}R_3(\lambda-1)^2d\alpha + \int_{-1}^{1}\frac{1}{2}R_2|\kappa_0|^2d\alpha\bigg)
\\
    &=: \mathcal{K} - \bigg(\mathcal{U}_{stretch} + \mathcal{U}_{bend}\bigg).
\end{align}
Hence the extended Hamilton principle \cite[pp.~268--272]{meirovitch2001fundamentals} therefore gives
\begin{equation}
\delta\int_0^{\tau_f}
\mathcal{L}\,dt
+
\int_0^{\tau_f}\delta\mathcal{W}_{ext}\,dt = 
\int_0^{\tau_f}
\delta\left(
\mathcal{K}
-\mathcal{U}_{stretch}
-\mathcal{U}_{bend}
+ \mathcal{W}_{ext}\right)\,dt
=: \int_0^{\tau_f}\delta\mathcal{S}\,dt =0
\label{hamilton's principle}
\end{equation}
where $\tau_f$ is the final time and $\mathcal{S}$ is the action.
\newline

Since equation \eqref{hamilton's principle} must hold for any final time $\tau_f$, the equations of motion are given by $\delta \mathcal{S} = 0$, where $\delta\mathcal{S}$ denotes the first variation of the action. We now discuss how to compute this quantity, with the assumption that our virtual displacements $\delta\zeta$ are smooth and compactly supported in the open rectangle interval $(-1,1)\times(0,\tau_f)$. First, note the following identities: 
\begin{align}
    \lambda &= |\partial_\alpha\zeta| \\
    \lambda^2 &= |\partial_\alpha\zeta|^2 \\
    2(\delta\lambda)\lambda &= 2\partial_\alpha\zeta\cdot\partial_\alpha\delta\zeta \\ 
    \delta\lambda &= \frac{\partial_\alpha\zeta\cdot\partial_\alpha\delta\zeta}{\lambda} = \hat{\boldsymbol{s}}\cdot\partial_\alpha\delta\zeta.
\end{align}
The final equation says that the change in stretching is equal to the tangential component of the material derivative of the displacement perturbation. Also note:
\begin{align}
    e^{i\theta} &= \partial_s\zeta = \lambda^{-1}\partial_\alpha\zeta \\
    ie^{i\theta}\delta\theta &= \frac{\delta(\lambda^{-1})}{\lambda}\partial_s\zeta + \lambda^{-1}\partial_\alpha\delta\zeta \\
    \hat{\boldsymbol{n}}\delta\theta &= \frac{\delta(\lambda^{-1})}{\lambda}\hat{\boldsymbol{s}} + \lambda^{-1}\partial_\alpha\delta\zeta \\ 
    \delta\theta &= \hat{\boldsymbol{n}} \cdot\lambda^{-1}\partial_\alpha\delta\zeta =  \hat{\boldsymbol{n}} \cdot\partial_s\delta\zeta
\end{align}
The final equation says that the change in tangent angle is equal to the normal component of the arc length derivative of the displacement perturbation.
\newline 

Using these two identities, obtaining the variation $\delta\mathcal{S}$ is immediate. Indeed, working term by term: 
\begin{align}
    \delta\mathcal{K} &= \int_0^{\tau_f}\int_{-1}^1R_1\partial_t\zeta\cdot\partial_t\delta\zeta\,d\alpha\, dt \\ 
    &= \int_0^{\tau_f}\int_{-1}^1-R_1\partial_{tt}\zeta\cdot\delta\zeta\,d\alpha\, dt
\end{align}

\begin{align}
    \delta\mathcal{U}_{stretch} &= \int_0^{\tau_f}\int_{-1}^1R_3 (\lambda-1)\delta\lambda \,d\alpha\, dt \\
    &= \int_0^{\tau_f}\int_{-1}^1 R_3 (\lambda-1)\hat{\boldsymbol{s}}\cdot\partial_\alpha\delta\zeta \,d\alpha\, dt \\ 
    &=\int_0^{\tau_f}\int_{-1}^1-R_3 \partial_\alpha\big(R_3 (\lambda-1)\big)\hat{\boldsymbol{s}}\big)\cdot\delta\zeta \,d\alpha\, dt,
\end{align}

where the last equality follows from integration by parts in $\alpha$. The compact support assumption on $\delta \zeta$ ensures that the boundary terms vanish. 
Likewise,
\begin{align}
    \delta\mathcal{U}_{bend} &= \int_0^{\tau_f}\int_{-1}^1\frac{1}{2}R_2 2\kappa_0\delta\kappa_0 \,d\alpha\, dt \\
    &= \int_0^{\tau_f}\int_{-1}^1R_2 \kappa_0\partial_\alpha\delta\theta \,d\alpha\, dt \\
    &= \int_0^{\tau_f}\int_{-1}^1R_2 \kappa_0\partial_\alpha(\hat{\boldsymbol{n}} \cdot\lambda^{-1}\partial_\alpha\delta\zeta) \,d\alpha\, dt \\
    &=\int_0^{\tau_f}\int_{-1}^1R_2\partial_\alpha( \partial_s\kappa_0\hat{\boldsymbol{n}}) \cdot\delta\zeta \,d\alpha\, dt,
\end{align}
where the last equality follows from integration by parts in $\alpha$ twice and the  boundary condition that $\kappa_0(\pm1,t) = 0$. Finally, we note that 

\begin{align}
    \delta\mathcal{W}_{ext} = \int_0^{\tau_f}\int_{-1}^{1}\boldsymbol{f}\cdot\delta\zeta\,d\alpha\,dt.
\end{align}
Hence by summing all these terms and applying Hamilton's principle we obtain

\begin{equation}
    \delta\mathcal{S} =\int_0^{\tau_f}\int_{-1}^1\bigg(-R_1\partial_{tt}\zeta +\partial_\alpha\big(\big(R_3 (\lambda-1)\hat{\boldsymbol{s}}\big) - R_2\partial_\alpha( \partial_s\kappa_0\hat{\boldsymbol{n}}) +\boldsymbol{f} \bigg)\cdot\delta\zeta\,d\alpha\, dt = 0,
    \label{weak form of the membrane equation appendix}
\end{equation}
 which is exactly \eqref{weak form membrane equation}.

\section{Solving the membrane equation}
\label{solving the membrane equation appendix}
In this section of the appendix we discuss how the weak form of the membrane equation with linearized bending rigidity \begin{equation}
    \delta\mathcal{S} =\int_0^{\tau_f}\int_{-1}^1\bigg(-R_1\partial_{tt}\zeta +R_3\partial_\alpha\big( (\lambda-1)\hat{\boldsymbol{s}}\big) - R_2\partial_\alpha^4\zeta +\boldsymbol{f} \bigg)\cdot\delta\zeta\,d\alpha\, dt = 0
    \label{linearized weak form membrane equation appendix}
\end{equation}

can be solved using the spectral Galerkin method with numerical integration. A complete description of such methods can be found in \cite{canuto2006spectral}. For a more general introduction to spectral methods similar to the one discussed here, we refer the reader to \cite{trefethen2000spectral}. In this section we will make use of the Einstein summation convention and sum over repeated indices.
\newline

It is convenient to view the virtual displacement $\delta\zeta(\alpha, t)$ as a complex-valued, smooth compactly supported function on the rectangle $[-1,1]\times(0,\tau_f)$. Hence, we may replace the real dot product of vectors with complex multiplication. Moreover, as the integral equation \eqref{weak form membrane equation} must hold for any final time $\tau_f$, the integrand itself must vanish, and we may write the weak form of the membrane equation as 

\begin{equation}
\int_{-1}^1\bigg(-R_1\partial_{tt}\zeta +\partial_\alpha\big(R_3 (\lambda-1)\hat{\boldsymbol{s}}\big) - R_2\partial_\alpha^4\zeta +\boldsymbol{f} \bigg)\delta\zeta\,d\alpha\ = 0
    \label{weak form of the membrane equation appendix}
\end{equation}

for all complex-valued virtual displacements $\delta\zeta$. Then, using integration by parts it is easy to write \eqref{weak form membrane equation} as
\begin{multline}
\int_{-1}^1-R_1
\partial_{tt}\zeta\, \delta\zeta 
-R_3\big((1-\lambda^{-1})\partial_\alpha\zeta\big)\partial_\alpha(\delta\zeta) + R_2\partial_\alpha^2\zeta\, \partial_\alpha^2(\delta\zeta) 
+\boldsymbol{f}\, \delta\zeta \,d\alpha \\ + \bigg[R_3\big((1-\lambda^{-1})\partial_\alpha\zeta +R_2\partial_\alpha^3\zeta\big)\,\delta\zeta + R_2\partial_\alpha^2\zeta\,\delta(\partial_\alpha\zeta)\bigg]^{\alpha=1}_{\alpha=-1}= 0.
    \label{weak form membrane equation in C2H2 equation}
\end{multline}
The variational boundary conditions are exactly those that force the boundary terms in the square brackets to vanish.
Namely,
\begin{align}
    R_3(1-\lambda^{-1})\partial_\alpha\zeta +R_2\partial_\alpha^3\zeta = 0 &\text{ or } \delta\zeta =0, \\ 
    &\text{and} \\
    R_2\partial_\alpha^2\zeta = 0 &\text{ or }  \delta(\partial_\alpha\zeta) = 0.
\end{align}
From a physical point of view, $\delta\zeta$ is the displacement in position and $\delta(\partial_\alpha\zeta)$ is the displacement in the tangent vector. Hence $\delta\zeta =0$ and $\delta(\partial_\alpha\zeta) = 0$ correspond to Dirichlet boundary conditions for $\zeta$ and $\partial_\alpha \zeta$ respectively. $R_3(1-\lambda^{-1})\partial_\alpha\zeta$ can be thought of as the tension force, 
conjugate to displacement in the energy, and so is the force applied to the boundary point. Likewise,  $R_2\partial_\alpha^2\zeta$ is conjugate to the material tangent vector $\partial_\alpha\zeta$ in the energy and can be thought of as a generalized moment. Asking that these terms vanish corresponds to zero force and moment boundary conditions respectively. For the fixed-fixed membranes we allow free rotation at the ends, so the boundary conditions are those of a pinned beam
\begin{align}
    \delta\zeta\big|_{\alpha = \pm1} = 0,\ \partial_\alpha^2\zeta\big|_{\alpha =\pm1} = 0.
\end{align}
For the fixed-free membranes we apply the same boundary conditions at the leading edge and free boundary conditions at the trailing edge: 
\begin{align}
    \delta\zeta\big|_{\alpha = -1} = 0,&\ \partial_\alpha^2\zeta\big|_{\alpha =-1} = 0. \\ &\text{and} \\
    \big(R_3(1-\lambda^{-1})\partial_\alpha\zeta +R_2\partial_\alpha^3\zeta\big) \big|_{\alpha = 1}= 0,&\ R_2\partial_\alpha^2\zeta\big|_{\alpha =1} = 0.
\end{align}

Hence the weak form of the membrane equation can be written as:
\begin{equation}
\int_{-1}^1-R_1
\partial_{tt}\zeta\, \delta\zeta 
-R_3\big((1-\lambda^{-1})\partial_\alpha\zeta\big)\, \partial_\alpha(\delta\zeta) + R_2\partial_\alpha^2\zeta\, \partial_\alpha^2(\delta\zeta) 
+\boldsymbol{f}\, \delta\zeta \,d\alpha
\label{weak form membrane equation in C2H2 equation}
\end{equation}

for all $0<t <\tau_f$. The highest derivatives that appear in this integral are second-order in space and time. Hence we may seek a weak solution to it in the Hilbert space $C^2_tH^2_\alpha$, where $H^2_\alpha$ denotes the Sobolev space of twice-differentiable functions with square-integrable derivatives in $\alpha$, and $C^2_t$ denotes the space of twice-differentiable functions with continuous derivatives in $t$.
\newline 

To approximate this weak solution, we require a numerical discretization of \eqref{weak form membrane equation in C2H2 equation} on the interval $\alpha \in [-1,1]$. To this end, we let $$\alpha_j = -\cos(\pi j/n),\text{ } j = 0,1,\ldots\,\ ,n$$
denote the nodes on the Chebyshev-Lobatto grid and let $w_j ,\text{ } j = 0,1,\ldots\,\ ,n$ denote the corresponding Clenshaw-Curtis quadrature weights \cite{trefethen2008gauss,trefethen2000spectral}. The Clenshaw-Curtis weights are the unique quadrature weights on the Chebyshev-Lobatto grid that integrate polynomials of degree $\leq n$ exactly. Similarly, let $D = [D_{ij}]_{0 \leq i,j \leq n}$ denote the Chebyshev spectral first derivative matrix that is exact for polynomials of degree $n$ on the Chebyshev-Lobatto grid \cite{trefethen2000spectral}. Stated precisely, $$D_{ij}p(\alpha_j) =\partial_\alpha p(\alpha_i)$$
 for all $0\leq i\leq n$. where we note that we used the summation convention. Both exact formulae and algorithms in MATLAB to compute both the spectral quadrature weights $w$ and the differentiation matrix $D$ are provided in \cite{trefethen2000spectral}.
\newline

Let $$\ell_0(\alpha),\ell_1(\alpha),\ldots\ , \ell_n(\alpha)$$ denote the Lagrange interpolating polynomials through $\alpha_0,\ldots\ ,\alpha_n$ defined by the property that $\ell_j(\alpha_k) =\delta_j^k$ where $\delta^k_j$ denotes the Dirac delta function. As a consequence of this definition, if $p(x) = p_j\ell_j(x)$ then we have $p(\alpha_k) = p_j\delta_j^k = p_k$. That is, the coefficients of a polynomial as a Lagrange series are equal to its values on the underlying grid. In addition, the entries of the Chebyshev spectral differentiation matrix can be given in terms of the derivatives of the Lagrange polynomials. This follows immediately from the definition of $D$: $D_{ij} = D_{ik}\delta^k_{j}=D_{ik}\ell_j(\alpha_k) =\partial_\alpha \ell_j(\alpha_i)$. Hence, $$D_{ij} = \partial_\alpha\ell_j(\alpha_i)$$

We define our trial and test functions to lie in the finite-dimensional subspace $V =\text{span}\{\ell_0,\ldots\ ,\ell_n\} \subset H^2_\alpha$ and seek a solution to equation \eqref{weak form membrane equation in C2H2 equation} of the form $\underline{\zeta}(\alpha, t) = \zeta_k(t)\ell_k(\alpha)\in C^2_tH^2_\alpha$ where $\underline{\zeta}(\cdot,t) \in V$ for all $t$. Hence we set $\delta\zeta = \ell_j,\ j = 0,\, \ldots n$. Let $\underline{\boldsymbol{f}} = [\boldsymbol{f}(\alpha_0),\,\ldots,\boldsymbol{f}(\alpha_n)]^T$ and $\underline{\boldsymbol{\zeta}} =[\underline{\zeta}(\alpha_0,t),\ldots, \underline{\zeta}(\alpha_n,t)] =[\zeta_0,\ldots\zeta_n]^T $. Substituting $\zeta = \underline{\zeta}$ into \eqref{weak form membrane equation in C2H2 equation} and approximating the integral by summing against the Clenshaw-Curtis weights $w$ we obtain:

\begin{align}
&\int_{-1}^1-R_1
\partial_{tt}\zeta\, \delta\zeta 
-R_3\big((1-\lambda^{-1})\partial_\alpha\zeta\big)\, \partial_\alpha(\delta\zeta) + R_2\partial_\alpha^2\zeta\, \partial_\alpha^2(\delta\zeta) 
+\boldsymbol{f}\, \delta\zeta \,d\alpha = 0\\
\implies &\textrm{diag}(w)\partial_{tt}\underline{\boldsymbol{\zeta}}  
-R_3D^T\textrm{diag}\big((1-\lambda^{-1})w)D\underline{\boldsymbol{\zeta}}+ R_2 (D^2)^T\textrm{diag}\big(w)D^2\underline{\boldsymbol{\zeta}}  + \textrm{diag}(w)\underline{\boldsymbol{f}}  = 0
\label{discretized weak membrane equation}
\end{align}
The result is a column vector of $n+1$ equations for the unknowns $\zeta_0,\ldots,\zeta_n$. The $j$th row corresponds to the weak equation evaluated with test function $\ell_j$. If either of the boundary values $\zeta(-1) = \zeta(\alpha_0)$ or $\zeta(+1) = \zeta(\alpha_n)$ are known, then the space of test functions must vanish at the corresponding node. Hence, the corresponding entry  of \eqref{discretized weak membrane equation} need be replaced with the appropriate boundary equation $\zeta_{0} = \zeta(-1)$ or $\zeta_{n} = \zeta(1)$.  All other boundary conditions are imposed implicitly through the derivation of the weak form of the equations.
\newline

By contrast, a collocation discretization of the strong equations results in 

\begin{equation}
    \partial_{tt}\underline{\zeta} -R_3D\textrm{diag}(1-\lambda^{-1})D\underline{\zeta} +R_2D^4\underline{\zeta} + \boldsymbol{f} = 0
    \label{collocation discretization of strong equations}
\end{equation}
Note that as $D$ is not symmetric, none of the matrix operators in this this discretization are symmetric. On the other hand, all the matrix operators in \eqref{discretized weak membrane equation} are symmetric, and hence all its eigenvalues are real, and have the same signs as their continuous analogues. Without symmetry, the operators $D^4$ may have complex eigenvalues, which will excite spurious oscillatory modes. 

\section{Well-posedness of the membrane equation with $R_2 >0$}
\label{well-posedness of the membrane equation appendix}
In this section, we prove that when $R_2 > 0$, a weak solution to the 
membrane equation 
\begin{equation}
     R_1\partial_{tt}\zeta  = R_3\partial_\alpha\big( (\lambda-1)\hat{\boldsymbol{s}}\big)- R_2\partial_{\alpha}^4\zeta \\ +\boldsymbol{f}
     \label{bending linearized membrane equation wellposed appendix}
\end{equation}
exists and is unique under some mild assumptions. Importantly, this shows that the membrane equation with positive bending rigidity remains well-posed even if the tension becomes negative. There are many ways to prove such a result. For example, it is possible to show that a subsequence of the finite energy Galerkin solutions converge to the unique weak solution to equation \eqref{bending linearized membrane equation wellposed appendix} by invoking Aubin-Lions \cite{boyer2012mathematical}\cite[Ch.~III, \S2.2, Thm.~2.1]{temam1979navier}. 
For the purpose of this study, we opt for a simpler argument based on Duhamel's principle.
\newline

\subsection{Local well-posedness for positive bending rigidity}

To be precise, we prove that the membrane equation with \(R_2>0\) and fixed-endpoint boundary conditions is locally well-posed as long as the membrane remains an immersion. Let $\zeta_0 \in H^2((-1,1); \mathbb{R}^2)$ and let
$$
    \zeta(\pm1,t)=\zeta_0(\pm1),
    \qquad
    \partial_\alpha^2\zeta(\pm1,t)=0
$$
denote the fixed-endpoint boundary conditions, which we consider here.
\newline

Let 
\begin{equation}
R_1\partial_{tt}\zeta+R_2\partial_\alpha^4\zeta
=
R_3\partial_\alpha G(\partial_\alpha\zeta)+f,
\qquad
G(q)=\left(1-\frac{1}{|q|}\right)q.
\label{positive-R2-evolution}
\end{equation}

For simplicity, we assume the forcing function $f$ depends only on time; unlike the pressure jump forcing term, $f$ is not coupled to the body motion.
Let
$$
    H=L^2((-1,1);\mathbb R^2),
    \qquad
    V=H^2((-1,1);\mathbb R^2)\cap H_0^1((-1,1);\mathbb R^2),
$$

where \(V\) is spanned by the differences of functions satisfying the fixed endpoint conditions. Define
$$
    \mathcal{L}=\frac{R_2}{R_1}\partial_\alpha^4:D(\mathcal{L}) \rightarrow H
$$

with domain
$$
    D(\mathcal{L})=
    \left\{
        v\in H^4((-1,1);\mathbb R^2):
        v(\pm1)=\partial_\alpha^2v(\pm1)=0
    \right\}.
$$

Then \(\mathcal{L}\) is positive and self-adjoint on \(H\), \(D(\mathcal{L}^{1/2})=V\), and the norm
$$
    \|v\|_V=\|\mathcal{L}^{1/2}v\|_H
$$

is equivalent to the \(H^2\) norm on \(V\). Henceforth we suppress the domain and image and write $H^k((-1,1),\mathbb{R}) = H^k$ for all natural numbers $k$.

\begin{theorem}\label{Thm1}
Let $0<\tau_f < \infty$. Suppose that \(R_1,R_2>0\), \(R_3\geq0\), and
$$
    f\in C([0,\tau_f];H).
$$

Let 
$$
    \zeta_0\in H^2,
    \qquad
    \zeta_1\in H,
$$

be functions, and suppose that the initial membrane is an immersion:
$$
    m_0:=
    \min_{\alpha\in[-1,1]}
    |\partial_\alpha\zeta_0(\alpha)|>0.
$$

Then there exists \(\tau_f^*>0\) and a unique weak solution

$$
    \zeta\in C([0,\tau_f^*];H^2)
    \cap C^1([0,\tau_f^*];H)
$$

of \eqref{positive-R2-evolution} with

$$
    \zeta(0)=\zeta_0,
    \qquad
    \partial_t\zeta(0)=\zeta_1.
$$

Moreover,

$$
    \inf_{0\leq t\leq \tau_f^*}
    \min_{\alpha\in[-1,1]}
    |\partial_\alpha\zeta(\alpha,t)|
    \geq\frac{m_0}{2}.
$$

Therefore, a solution exists and is unique on any time interval over which the membrane remains uniformly immersed.

\end{theorem}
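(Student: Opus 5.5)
The plan is to recast \eqref{positive-R2-evolution} as an abstract semilinear wave equation $\partial_{tt}u+\mathcal{L}u=F(u)$ on $H$, with $\mathcal{L}$ positive and self-adjoint, and to solve it by a Duhamel fixed-point argument in the energy space $V\times H$. The nonlinearity will be locally Lipschitz from $V$ to $H$ as long as the membrane stays immersed.

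\emph{Step 1: homogenizing the boundary data.} Let $\zeta_b(\alpha)$ be the affine interpolant of $\zeta_0(\pm1)$. Since $\partial_\alpha^2\zeta_b=\partial_\alpha^4\zeta_b=0$, writing $\zeta=\zeta_b+u$ turns the problem into one for $u(t)\in V$, with
\[
\partial_{tt}u+\mathcal{L}u=F(u):=\frac{R_3}{R_1}\partial_\alpha G(\partial_\alpha\zeta_b+\partial_\alpha u)+\frac{1}{R_1}f(t).
\]
The initial data are $u(0)=\zeta_0-\zeta_b$ and $\partial_tu(0)=\zeta_1$. The statement says only $\zeta_0\in H^2$; I read this as $u(0)\in V$, with the condition $\partial_\alpha^2\zeta(\pm1)=0$ holding only in the natural/weak sense encoded by $\mathcal{L}$. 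Since $\mathcal{L}$ is positive and self-adjoint with $D(\mathcal{L}^{1/2})=V$, spectral calculus gives the cosine and sine families $C(t)=\cos(t\mathcal{L}^{1/2})$ and $S(t)=\mathcal{L}^{-1/2}\sin(t\mathcal{L}^{1/2})$. These are strongly continuous and uniformly bounded. In particular $\|S(t)g\|_V\le\|g\|_H$ and $\|\partial_tS(t)g\|_H\le\|g\|_H$, and $(C,S)$ generates a group of isometries on $V\times H$ in the energy norm. A mild solution is then
\[
u(t)=C(t)u_0+S(t)\zeta_1+\int_0^tS(t-s)F(u(s))\,ds .
\]
Using the spectral expansion in eigenfunctions of $\mathcal{L}$, this formula is equivalent to the weak formulation \eqref{weak form membrane equation in C2H2 equation} tested against $V$.

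\emph{Step 2: the key nonlinear estimate, which I expect to be the main obstacle.} Fix $M$ and consider the set
\[
\mathcal{B}=\{u\in V:\ \|u\|_{H^2}\le M,\ \min_\alpha|\partial_\alpha\zeta_b+\partial_\alpha u|\ge m_0/2\}.
\]
On $\mathcal{B}$ the function $q=\partial_\alpha\zeta$ lies in $H^1\hookrightarrow C^0$ and takes values in the region $|q|\ge m_0/2$, where $G$ is smooth with bounded derivatives of all orders. Since $H^1(I)$ is an algebra and composition with a smooth function (on the range of $q$) maps $H^1$ to $H^1$ Lipschitz-continuously on bounded sets, I expect
\[
\|G(q)-G(p)\|_{H^1}\le C(M,m_0)\|q-p\|_{H^1}.
\]
To verify this, I will write $G(q)-G(p)=\int_0^1DG(p+\theta(q-p))\,d\theta\,(q-p)$. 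Here the segment from $p$ to $q$ may pass near the origin, so I would first restrict to $\|q-p\|_{C^0}\le m_0/4$, which uses the fact that $\mathcal{B}$ is taken in a small neighbourhood of $u_0$, so that the segment stays in $|\cdot|\ge m_0/4$. I then differentiate once in $\alpha$ and use the $C^0$ bounds. This yields
\[
\|F(u)-F(v)\|_H\le L\|u-v\|_{V},\qquad \|F(u)\|_H\le K,
\]
uniformly for $t\in[0,\tau_f]$, using $f\in C([0,\tau_f];H)$. This holds regardless of the sign of $1-|q|^{-1}$, which is exactly why negative tension causes no difficulty once $R_2>0$.

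\emph{Step 3: contraction, immersion bound, and uniqueness.} Let $X_T$ be the set of $u\in C([0,T];V)$ with
\[
\sup_t\|u(t)-u_0\|_{H^2}\le\rho,
\]
where $\rho$ is chosen via the Sobolev embedding $H^2\hookrightarrow C^1$ so that this forces $|\partial_\alpha\zeta(\alpha,t)-\partial_\alpha\zeta_0(\alpha)|\le m_0/4$. Then $X_T\subset\mathcal{B}$ and the immersion bound $\min_\alpha|\partial_\alpha\zeta|\ge m_0/2$ holds automatically. The Duhamel map $\Phi$ preserves $X_T$ for small $T$: strong continuity gives $\|C(t)u_0+S(t)\zeta_1-u_0\|_V\to0$ as $t\to0$, and the integral term is bounded by $TK$. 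The map is also a contraction, with constant $TL$. Banach's fixed point theorem then gives $\tau_f^*$ and $u\in C([0,\tau_f^*];V)$. Differentiating the Duhamel formula gives $\partial_tu\in C([0,\tau_f^*];H)$.

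For uniqueness, let two weak solutions in the stated class be given. By continuity both remain in $\mathcal{B}$ on a short interval. Subtracting their Duhamel formulas and applying Gronwall in the $V$-norm gives equality there, and a continuation argument propagates this to the whole interval. Finally, since $T$ depends only on $m_0$, $M$ and the data bounds, the local argument can be restarted as long as $\min_\alpha|\partial_\alpha\zeta|$ stays bounded below. Together with the energy bound, this gives the closing assertion that the solution exists and is unique on any interval where the membrane remains uniformly immersed.
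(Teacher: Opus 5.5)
Your proposal is correct and follows essentially the paper's proof. Both use a Duhamel fixed point with the cosine/sine propagators of $\mathcal{L}$, in an $H^2$-ball about the initial datum whose radius is tied to $m_0$ via $H^2\hookrightarrow C^1$. Both rely on the local Lipschitz bound for $\partial_\alpha G(\partial_\alpha\zeta)$ near immersed curves as the key estimate, and both finish with a continuation argument.

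Two of your steps handle points the paper glosses over:
\begin{itemize}
\item Your affine lift $\zeta=\zeta_b+u$ is needed. The paper applies the propagators of $\mathcal{L}$ directly to $\zeta_0$, which need not vanish at $\alpha=\pm1$ and so need not lie in $V=D(\mathcal{L}^{1/2})$.
\item Your uniqueness argument covers the full solution class. The paper obtains uniqueness only inside the ball.
\end{itemize}

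One step needs tightening, and the paper has the same weakness. The local time produced by a ball centred at $u_0$ is not determined by $m_0$, $M$ and the norms of the data alone. The self-map condition requires $\sup_{t\le T}\|C(t)u_0+S(t)\zeta_1-u_0\|_V\le\rho/2$. Data of norm larger than $\rho$, concentrated near a large eigenvalue $\lambda$ of $\mathcal{L}$, violate this after a time of order $\lambda^{-1/2}$. Such data have small $C^1$ norm, so $m_0$ stays bounded below. Consequently, restarting ``as long as the membrane stays immersed'' does not by itself stop the restart times from shrinking to zero.

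A clean fix proceeds in three steps:
\begin{enumerate}
\item Use the a priori bound from the paper: Gronwall with $\|F(u)\|_H\le C_m(1+\|u\|_V)$ whenever $|\partial_\alpha\zeta|\ge m$. This gives $\sup_{t<T_{\max}}\|F(u(t))\|_H<\infty$.
\item Deduce from the Duhamel formula that $(u,\partial_tu)$ converges in $V\times H$ as $t\uparrow T_{\max}$.
\item Restart once from that limit, which still satisfies $|\partial_\alpha\zeta|\ge m$.
\end{enumerate}
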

\begin{remark}
    In plain language, the condition that the membrane remain uniformly immersed is equivalent to asking that everywhere the membrane strain is bounded below by a constant $> -100\%$, where $-100\%$ corresponds to compression of a line segment to a point. 
\end{remark}
\begin{proof}
We first show that the nonlinear stretching force is locally Lipschitz near an immersed curve. Writing

$$
    G_i(q)=\left(1-\frac{1}{|q|}\right)q_i,
$$

direct differentiation gives

$$
    \frac{\partial G_i}{\partial q_j}(q)
    =
    \left(1-\frac{1}{|q|}\right)\delta_{ij}
    +\frac{q_iq_j}{|q|^3},
$$

and

$$
    \frac{\partial^2G_i}{\partial q_j\partial q_k}(q)
    =
    \frac{
        \delta_{ij}q_k+\delta_{ik}q_j+\delta_{jk}q_i
    }{|q|^3}
    -
    \frac{3q_iq_jq_k}{|q|^5}.
$$

Consequently, for every \(m>0\),

$$
    \left|
        \frac{\partial G_i}{\partial q_j}(q)
    \right|
    \leq C\left(1+\frac1m\right),
    \qquad
    \left|
        \frac{\partial^2G_i}{\partial q_j\partial q_k}(q)
    \right|
    \leq \frac{C}{m^2},
    \qquad |q|\geq m.
$$

Let

$$
    m_0=\min_{\alpha\in[-1,1]}
    |\partial_\alpha\zeta_0(\alpha)|>0.
$$

Since $H^2 \hookrightarrow C^1$, suppose that \(\zeta\) and \(\eta\) lie in an \(H^2\)-neighborhood of \(\zeta_0\) sufficiently small that

$$
    \|\partial_\alpha\zeta-\partial_\alpha\zeta_0\|_{L^\infty},
    \,
    \|\partial_\alpha\eta-\partial_\alpha\zeta_0\|_{L^\infty}
    \leq \frac{m_0}{2}.
$$

For \(0\leq\theta\leq1\), let

$$
    q_\theta
    :=
    \theta\partial_\alpha\zeta
    +(1-\theta)\partial_\alpha\eta
$$

lie on the line segment between $\partial_\alpha \zeta$ and $\partial_\alpha \eta$. Then

$$
\begin{aligned}
    |q_\theta|
    &\geq
    |\partial_\alpha\zeta_0|
    -
    \theta
    |\partial_\alpha\zeta-\partial_\alpha\zeta_0|
    -
    (1-\theta)
    |\partial_\alpha\eta-\partial_\alpha\zeta_0|\\
    &\geq \frac{m_0}{2}.
\end{aligned}
$$

Thus the entire line segment joining
\(\partial_\alpha\zeta\) and \(\partial_\alpha\eta\) remains in the region where the first two derivatives of \(G\) are bounded. By the fundamental theorem of calculus,

$$
\begin{aligned}
    &\frac{\partial G_i}{\partial q_j}
        (\partial_\alpha\zeta)
    -
    \frac{\partial G_i}{\partial q_j}
        (\partial_\alpha\eta)\\
    &\qquad=
    \int_0^1
    \frac{\partial^2G_i}{\partial q_j\partial q_k}
        (q_\theta)
    \left(
        \partial_\alpha\zeta_k
        -
        \partial_\alpha\eta_k
    \right)\,d\theta,
\end{aligned}
$$

and hence

$$
    \left|
    \frac{\partial G_i}{\partial q_j}
        (\partial_\alpha\zeta)
    -
    \frac{\partial G_i}{\partial q_j}
        (\partial_\alpha\eta)
    \right|
    \leq
    \frac{C}{m_0^2}
    |\partial_\alpha\zeta-\partial_\alpha\eta|.
$$

Using

$$
    \partial_\alpha G_i(\partial_\alpha\zeta)
    =
    \frac{\partial G_i}{\partial q_j}
        (\partial_\alpha\zeta)
    \partial_\alpha^2\zeta_j,
$$

we obtain

$$
\begin{aligned}
    &\partial_\alpha G_i(\partial_\alpha\zeta)
    -
    \partial_\alpha G_i(\partial_\alpha\eta)\\
    &\quad=
    \frac{\partial G_i}{\partial q_j}
        (\partial_\alpha\zeta)
    \left(
        \partial_\alpha^2\zeta_j
        -
        \partial_\alpha^2\eta_j
    \right)\\
    &\qquad+
    \left[
        \frac{\partial G_i}{\partial q_j}
            (\partial_\alpha\zeta)
        -
        \frac{\partial G_i}{\partial q_j}
            (\partial_\alpha\eta)
    \right]
    \partial_\alpha^2\eta_j.
\end{aligned}
$$

Therefore,

$$
\begin{aligned}
    &
    \left\|
        \partial_\alpha G(\partial_\alpha\zeta)
        -
        \partial_\alpha G(\partial_\alpha\eta)
    \right\|_{L^2}\\
    &\qquad\leq
    C\left(1+\frac1{m_0}\right)
    \|\partial_\alpha^2\zeta-\partial_\alpha^2\eta\|_{L^2}
    +
    \frac{C}{m_0^2}
    \|\partial_\alpha\zeta-\partial_\alpha\eta\|_{L^\infty}
    \|\partial_\alpha^2\eta\|_{L^2}.
\end{aligned}
$$

The one-dimensional Sobolev embedding \(H^1\hookrightarrow L^\infty\) now gives

\begin{equation}
    \left\|
        \partial_\alpha G(\partial_\alpha\zeta)
        -
        \partial_\alpha G(\partial_\alpha\eta)
    \right\|_{L^2}
    \leq
    C(m_0,M)\|\zeta-\eta\|_{H^2},
    \label{lipschitz estimate}
\end{equation}

provided

$$
    \|\zeta\|_{H^2},\|\eta\|_{H^2}\leq M.
$$

It follows that

$$
    \mathcal N(\zeta)
    :=
    \frac{R_3}{R_1}
    \partial_\alpha G(\partial_\alpha\zeta)
$$

is locally Lipschitz from \(H^2\) into \(L^2\) in a neighborhood of every immersed curve.
\newline

Let

$$
    A=\frac{R_2}{R_1},
$$

and recall that

$$
    \mathcal L=A\partial_\alpha^4
$$

denotes the positive self-adjoint beam operator with domain

$$
    D(\mathcal L)
    =
    \left\{
        v\in H^4((-1,1);\mathbb R^2):
        v(\pm1)=\partial_\alpha^2v(\pm1)=0
    \right\}.
$$

Its energy space is

$$
    V=D(\mathcal L^{1/2})
      =H^2((-1,1);\mathbb R^2)\cap H_0^1((-1,1);\mathbb R^2),
$$

and the norm \(\|\mathcal L^{1/2}v\|_{L^2}\) is equivalent to \(\|v\|_{H^2}\) on \(V\).

Using the functional calculus for the positive self-adjoint operator \(\mathcal L\), define

$$
    C(t)=\cos(t\mathcal L^{1/2}),
    \qquad
    S(t)=\mathcal L^{-1/2}\sin(t\mathcal L^{1/2}).
$$

These are the two fundamental solution operators for the linear beam equation

$$
    \partial_{tt}\zeta+\mathcal L\zeta=0.
$$

One may think of $\cos$ and $\sin$ here as an infinite polynomial series. More precisely, \(C(t)\zeta_0\) is the solution with initial displacement \(\zeta_0\) and zero initial velocity, while \(S(t)v_0\) is the solution with zero initial displacement and initial velocity \(v_0\). Thus the solution of the linear initial-value problem is

$$
    \zeta_{\mathrm{lin}}(t)
    =
    C(t)\zeta_0+S(t)v_0.
$$

It is easy to verify (for example with the Fourier transform) that the propagators satisfy

$$
    \|C(t)v\|_V\leq\|v\|_V,
    \qquad
    \|\partial_tC(t)v\|_{L^2}\leq\|v\|_V,
$$

for \(v\in V\), and

$$
    \|S(t)w\|_V\leq\|w\|_{L^2},
    \qquad
    \|\partial_tS(t)w\|_{L^2}\leq\|w\|_{L^2},
$$

for \(w\in L^2\).
\newline

With

$$
    \mathcal N(\zeta)
    =
    \frac{R_3}{R_1}
    \partial_\alpha G(\partial_\alpha\zeta),
$$

equation \eqref{positive-R2-evolution} becomes

$$
    \partial_{tt}\zeta+\mathcal L\zeta
    =
    \mathcal N(\zeta)+\frac{1}{R_1}f.
$$

By Duhamel's principle, $\zeta$ satisfies the implicit relation
\begin{equation}
\begin{aligned}
\zeta(t)
&=
C(t)\zeta_0+S(t)v_0+
\int_0^t
S(t-\tau)
\left(
\mathcal N(\zeta(\tau))
+\frac{1}{R_1}f(\tau)
\right)\,d\tau\ \\
&=
\zeta_{\mathrm{lin}}(t)
+
\int_0^t
S(t-\tau)
\left(
\mathcal N(\zeta(\tau))
+\frac{1}{R_1}f(\tau)
\right)\,d\tau.
\end{aligned}
\label{positive-R2-duhamel}
\end{equation}

We now close the immersion bootstrap. By the Sobolev embedding theorem, there exists positive constant \(C_E\) such that

$$
    \|\partial_\alpha v\|_{L^\infty}
    \leq C_E\|v\|_{H^2},
    \qquad v\in V.
$$

Choose $\rho$ such that

$$
    0<\rho<\frac{m_0}{2C_E},
    \qquad
    m_0=
    \min_{\alpha\in[-1,1]}
    |\partial_\alpha\zeta_0(\alpha)|.
$$

Consider the closed ball

$$
    \mathcal B_{\tau_f,\rho}
    =
    \left\{
        \zeta\in C([0,\tau_f];H^2):
        \begin{array}{l}
        (i) \ \ \ \zeta\text{ satisfies the fixed endpoint conditions},\\[1mm]
        (ii)\ \ \zeta(0)=\zeta_0,\\[1mm]
        (iii)\ \displaystyle
        \sup_{0\leq t\leq \tau_f}
        \|\zeta(t)-\zeta_0\|_{H^2}\leq\rho
        \end{array}
    \right\}.
$$

For every \(\zeta\in\mathcal B_{\tau_f,\rho}\),

$$
\begin{aligned}
    |\partial_\alpha\zeta(\alpha,t)|
    &\geq
    |\partial_\alpha\zeta_0(\alpha)|
    -
    |\partial_\alpha(\zeta-\zeta_0)(\alpha,t)|\\
    &\geq
    m_0-C_E\|\zeta(t)-\zeta_0\|_{H^2}\\
    &\geq\frac{m_0}{2}.
\end{aligned}
$$

Every curve in \(\mathcal B_{\tau_f,\rho}\) is therefore uniformly immersed, and the local Lipschitz estimate
\eqref{lipschitz estimate} holds uniformly on this set.
\newline

Define the fixed-point map

$$
\begin{aligned}
    (\Phi\zeta)(t)
    &=
    \zeta_{\mathrm{lin}}(t)\\
    &\quad+
    \int_0^t
    S(t-\tau)
    \left(
        \mathcal N(\zeta(\tau))
        +\frac{1}{R_1}f(\tau)
    \right)\,d\tau,
\end{aligned}
$$

and observe that 
\begin{equation}
\begin{aligned}
    \sup_{0\leq t\leq \tau_f}
    \|\Phi\zeta(t)-\zeta_0\|_{H^2}
    \leq{}&
    \sup_{0\leq t\leq \tau_f}
    \|\zeta_{\mathrm{lin}}(t)-\zeta_0\|_{H^2}\\
    &+
    C\int_0^{\tau_f}
    \left(
        \|\mathcal N(\zeta(\tau))\|_{L^2}
        +\frac{1}{R_1}\|f(\tau)\|_{L^2}
    \right)\,d\tau.
    \label{fixed point estimate}
\end{aligned}
\end{equation}

Since

$$
    \zeta_{\mathrm{lin}}(t)\longrightarrow\zeta_0
    \quad\text{in }H^2
    \qquad\text{as }t\to0,
$$

and \(\mathcal N\) is bounded on \(\mathcal B_{\tau_f,\rho}\), by shrinking \(\tau_f\) we may ensure that the right-hand side of equation \eqref{fixed point estimate} is smaller than \(\rho\). Hence \(\Phi\) maps \(\mathcal B_{\tau_f,\rho}\) into itself.
\newline

If \(\zeta,\eta\in\mathcal B_{\tau_f,\rho}\), then

$$
\begin{aligned}
    \sup_{0\leq t\leq \tau_f}
    \|\Phi\zeta(t)-\Phi\eta(t)\|_{H^2}
    &\leq
    C\int_0^{\tau_f}
    \|
        \mathcal N(\zeta(\tau))
        -
        \mathcal N(\eta(\tau))
    \|_{L^2}\,d\tau\\
    &\leq
    \tau_fC(m_0,\rho,\|\zeta_0\|_{H^2})
    \sup_{0\leq t\leq \tau_f}
    \|\zeta(t)-\eta(t)\|_{H^2}.
\end{aligned}
$$

By shrinking \(\tau_f\) again, we may ensure that \(\Phi\) is a contraction. Banach's fixed-point theorem therefore gives a unique solution of \eqref{positive-R2-duhamel}. Because the fixed point lies in \(\mathcal B_{\tau_f,\rho}\),

$$
    |\partial_\alpha\zeta(\alpha,t)|
    \geq\frac{m_0}{2}
$$

throughout its interval of existence. Differentiating the Duhamel formula in time and using

$$
    \partial_tS(t)=C(t)
$$

also gives

$$
    \partial_t\zeta\in C([0,\tau_f];L^2).
$$

Thus we obtain local existence and uniqueness of $\zeta$ on the interval $[0,\tau_f]$. where $\tau_f$ depends on $m_0$, $\|\zeta_0\|_{H^2}$ and $\|\zeta_1\|_{L^2}$ and $\|f\|_{L^2}$ only.
\newline 

It remains to show that this solution may be continued on any interval where $\zeta$ remains an immersion. To see this, suppose that the maximal existence time \(T_{\max}\) is finite but that, for some \(m>0\),

$$
    |\partial_\alpha\zeta(\alpha,t)|\geq m
$$

for every \(\alpha\in[-1,1]\) and \(0\leq t<T_{\max}\). Since

$$
    \partial_\alpha G_i(\partial_\alpha\zeta)
    =
    \frac{\partial G_i}{\partial q_j}
        (\partial_\alpha\zeta)
    \partial_\alpha^2\zeta_j
$$

and

$$
    \left|
        \frac{\partial G_i}{\partial q_j}(q)
    \right|
    \leq C_m
    \qquad\text{whenever }|q|\geq m,
$$

we have

$$
    \|\mathcal N(\zeta)\|_{L^2}
    \leq C_m\|\zeta\|_{H^2}.
$$

The Duhamel estimates therefore imply

$$
\begin{aligned}
    \|\zeta(t)\|_{H^2}
    +\|\partial_t\zeta(t)\|_{L^2}
    \leq{}&
    C\left(
        \|\zeta_0\|_{H^2}
        +\|v_0\|_{L^2}
    \right)\\
    &+
    C_m\int_0^t
    \|\zeta(\tau)\|_{H^2}\,d\tau\\
    &+
    \frac{C}{R_1}
    \int_0^t
    \|f(\tau)\|_{L^2}\,d\tau.
\end{aligned}
$$

Then by invoking Gronwall's inequality, we have 

$$
    \sup_{0\leq t<T_{\max}}
    \left(
        \|\zeta(t)\|_{H^2}
        +\|\partial_t\zeta(t)\|_{L^2}
    \right)<\infty.
$$

Hence by the same argument above, the solution may be continued for some additional time $\tau_f^* > 0$. which depends only on $m$, $\|\zeta(T_{max})\|_{H^2}$,$\|\partial_t\zeta|_{T_{max}}\|_{H^2}$ and $\|f\|_{L^2}$. This contradicts the maximality of \(T_{\max}\). Thus, if \(T_{\max}<\infty\), then

$$
    \liminf_{t\uparrow T_{\max}}
    \min_{\alpha\in[-1,1]}
    |\partial_\alpha\zeta(\alpha,t)|=0.
$$

Hence, the solution persists for as long as the membrane remains uniformly immersed. Note that no assumption was made on the sign of

$$
    T=R_3\bigl(|\partial_\alpha\zeta|-1\bigr)
$$
in the argument.
\end{proof}

\section{The bounded energy constraint as a solution selector}
\label{missing boundary conditions appendix}
In this section, we discuss how the bounded-energy constraint enforced by the weak formulation can be used to supplement a PDE with a formally insufficient number of boundary conditions.
\newline

Consider the following example. Let $y = y(\alpha) \in H^1([0,1])$. Recall that $H^1$ is the space of functions $f$ such that both $f$ and $f'$ are square integrable. Physically, $H^1$ represents the space of elastic profiles with bounded stretching energy. Suppose also that $y$ satisfies the second order differential equation 
\begin{equation}
    \alpha y'' +y' = 0, \ \alpha\in(0,1).
    \label{example equation appendix}
\end{equation}

It is easy to verify by direct computation that the general solution to \eqref{example equation appendix} is 
\begin{equation}
    y(\alpha ) = A + B\ln(\alpha ).
\end{equation}
However, since $\int_0^1|\ln(\alpha )'|^2d\alpha = \int_0^1\frac{1}{\alpha^2}d\alpha = [\frac{-1}{\alpha}]^1_0 = +\infty $, we see that inside the space of functions with bounded energy $H^1$, the only solutions must be constant
\begin{eqnarray}
    y(\alpha ) = A.
\end{eqnarray}
Hence, although $\alpha y'' +y' = 0$ is a second order differential equation, it is well-posed with only a single boundary condition in the space of bounded energy solutions $H^1$.
\newline

On the other hand, consider the equation 
 \begin{equation}
    \alpha y'' +y' + R_2y''''= 0, \ \alpha \in(0,1).
    \label{example equation with R2 appendix}
\end{equation}

For each $R_2 > 0$, by letting $\boldsymbol{x} = [y,y',y'',y''']$ it is easy to write \eqref{example equation with R2 appendix} in the form of 
\begin{eqnarray}
    \boldsymbol{x}' = A(\alpha)\boldsymbol{x},
\end{eqnarray}
where $A(\alpha)$ is some $4\times4$ matrix in $C^1$. Hence by classical ODE theory, equation \eqref{example equation with R2 appendix} admits a unique solution in $C^4$ and (hence with bounded energy) when exactly 4 boundary conditions are given.
\newline

In summary, for every $R_2 > 0$, equation \eqref{example equation with R2 appendix} requires exactly 4 boundary conditions, and all 4 solution branches have finite energy (lie in $H^1$). When $R_2 =0$ however, only one of the two branches have finite energy. Solving the equation requires exactly one boundary condition, and the ``missing" boundary condition is replaced by the constraint that the solution has finite energy.
\newline

Equation \eqref{example equation with R2 appendix} is essentially equivalent to the linearized membrane equation at steady state. More generally, the exact same analysis holds for the more general equation 

\begin{equation}
    R_1y =  R_3(\alpha y')' + R_2y''''
    \label{general example equation with R2 appendix}
\end{equation}
Indeed, by the same argument as before, when $R_2 >0$ it has exactly 4 solution branches all with finite energy, and when $R_2 =0$, 

\begin{equation}
    R_1y =  R_3(\alpha y')'
    \label{general example equation appendix}
\end{equation}
The substitution $\tau = \sqrt{\frac{4R_1}{R_3}\alpha}$ transforms it into the Bessel equation

\begin{equation}
    \tau^2y'' + \tau y' + (\tau^2 -0^2) y =0
\end{equation}
which has fundamental pair of solutions ($J_0, Y_0$), the Bessel functions of the first and second kind. Hence, the general solution is given as the linear combination

\begin{equation}
    y(\alpha) = AJ_0(\sqrt{\frac{4R_1}{R_3}\alpha}) + BY_0(\sqrt{\frac{4R_1}{R_3}\alpha}) 
\end{equation}
Like before, $J_0(\sqrt{\frac{4R_1}{R_3}\alpha}) \in H^1$ but $Y_0(\sqrt{\frac{4R_1}{R_3}\alpha}) \notin H^1$. Indeed, for $\tau \ll 1$ we have $Y_0(\tau)\sim\frac{2}{\pi}\ln(\frac{\tau}{2})$ so its first derivative is not square integrable.
Hence in $H^1$ the general solution to equation \eqref{general example equation appendix} is 

\begin{equation}
    y(\alpha) = AJ_0(\sqrt{\frac{4R_1}{R_3}\alpha}).
\end{equation}
Consequently, equation \eqref{general example equation appendix} is well posed with a single boundary condition in the space of bounded energy functions. Naturally, the same also holds for the inhomogeneous problem 
\begin{equation}
    R_1y =  R_3(\alpha y')' + R_2y'''' + f.
    \label{general example inhomogeneous equation with R2 appendix}
\end{equation}
By separation of variables, the linearized membrane equation
\begin{equation}
    R_1\partial_{tt}y =  R_3(\alpha y')' + R_2y'''' + f,
    \label{unsteady general example inhomogeneous equation with R2 appendix}
\end{equation}
reduces to equations of the form \eqref{general example inhomogeneous equation with R2 appendix}, one for each eigenvalue, and hence only requires $4$ boundary conditions when $R_2 >0$ and exactly $1$ when $R_2 = 0$ in $H^1$ as well \cite{triantafyllou1991paradox}.
\newline

With these examples in mind, it is now intuitively clear why the full nonlinear equation
\begin{equation}
R_1\partial_{tt}\zeta  = \partial_\alpha\big(R_3(1 - |\partial_\alpha \zeta|^{-1})\partial_\alpha\zeta\big)+ \boldsymbol{f}
\end{equation}
with $\zeta = x+ iy$ can be solved in the weak formulation numerically by only specifying 3 boundary conditions (($
x,y$) position of the fixed end and vanishing tension at the free end). The key observation is that the tension $T = R_3(|\partial_\alpha \zeta| - 1)$ vanishes at the free end, and hence the equation degenerates there in the same manner as the linearized problems above. More precisely, suppose $\alpha\in[0,1]$, the fixed end is at $\alpha = 1$ and the free end is at $\alpha = 0$. Near the free end, the tension admits a Taylor expansion $T(\alpha) = T'(0)\alpha + \mathcal{O}(\alpha^2)$ so that to leading order each component of $\zeta$ satisfies an equation of the form \eqref{unsteady general example inhomogeneous equation with R2 appendix} for some constants $R_1,R_2,R_3$.
\newline

As in the model problems, this equation admits two local solution branches per component; one branch has bounded energy and the other has unbounded energy. Since the weak formulation is solved with a spectral Galerkin method whose trial functions are polynomials (and hence lie in $H^1$), the unbounded-energy branches cannot be represented, and the system is well-posed with a single boundary condition per component at the free end. It is possible to make the formal argument outlined above fully rigorous, but the details are rather technical.

\section{Convergence in the limit of zero bending rigidity}
\label{convergence in the limit of zero bending rigidity section}
In this section, we show that solutions of the elastic membrane equation with nonzero linearized bending rigidity converge, in the limit \(R_2\to 0\), to solutions of the corresponding zero-bending-rigidity membrane equation. Let \(\zeta^0\) satisfy
\begin{equation}
R_1\partial_{tt}\zeta
= \partial_\alpha\big((T_0+R_3(\partial_\alpha s-1))\hat{\boldsymbol{s}}\big)
+ \boldsymbol{f},
\label{membrane with force equation}
\end{equation}
and, for \(R_2>0\), let \(\zeta^{R_2}\) satisfy
\begin{equation}
R_1\partial_{tt}\zeta
= \partial_\alpha\big((T_0+R_3(\partial_\alpha s-1))\hat{\boldsymbol{s}}\big)
-R_2\partial_\alpha^4\zeta
+ \boldsymbol{f}.
\label{membrane with linearized bending rigidity with force equation}
\end{equation}
Here we define \(T_0\) to be the tension of the undeformed membrane, following \cite{mavroyiakoumou2020large}; $T_0 = 0$ for all the numerical results in the main body of this paper.
Under the same initial and boundary conditions, together with several mild assumptions specified below, we prove that

$$
    \zeta^{R_2}\to\zeta^0
    \qquad\text{in }H^1
    \quad\text{as }R_2\to0.
$$

For the fluid--structure problem considered in this study, the forcing is

$$
    \boldsymbol{f}
    = i[p]^+_-\,(\partial_\alpha s)^{-1}\partial_\alpha\zeta.
$$

Indeed, for $R_1 = 0.25,0.5,1$ we observe that in the limit of $R_2 \rightarrow 0$ our simulations converge to the simulation with $R_2 =0$ in both position $\zeta$ and velocity $\partial_t\zeta$  with scaling law $\mathcal{O}(R_2)$ at early times. Figure \ref{l2 convergence figure fixed-free} shows this convergence in the case $R_1 = 1, R_3 = 100$. Panels (a-b) in the top row show log-log plots of the $L^2$ errors in position and velocity respectively against $R_2$ at $t = 1$. The bottom row shows the same quantities at $t = 2$. At $t=1$ the convergence rate is approximately linear ($\mathcal{O}(R_2)$). By $t=2$ however, the convergence rate for $\partial_t\zeta$ deteriorates to approximately $\mathcal{O}(R_2^{1/2})$. We note that convergence only occurs above extremely small values of $R_2 \lessapprox 10^{-10}$. When $R_2 \gtrapprox  10^{-10}$, the error decreases only slightly with decreasing $R_2$.
\newline

For $R_1$ = 0.25 and 0.5, the convergence follows a similar pattern. It is linear in $R_2$ at early times, then deteriorates to between $\mathcal{O}(R_2^{1/2})$ and $\mathcal{O}(R_2)$ at moderate times. Over longer times, e.g. $t \gtrsim 3$ for the case in figure \ref{l2 convergence figure fixed-free}, convergence with respect to $R_2$ is difficult to observe in our simulations due to the chaotic evolution of the flapping membrane flag and vorticity distribution, as discussed for point vortices by \cite{aref1983integrable}. 

\begin{figure}[H]
    \centering
    \includegraphics[width=1\linewidth]{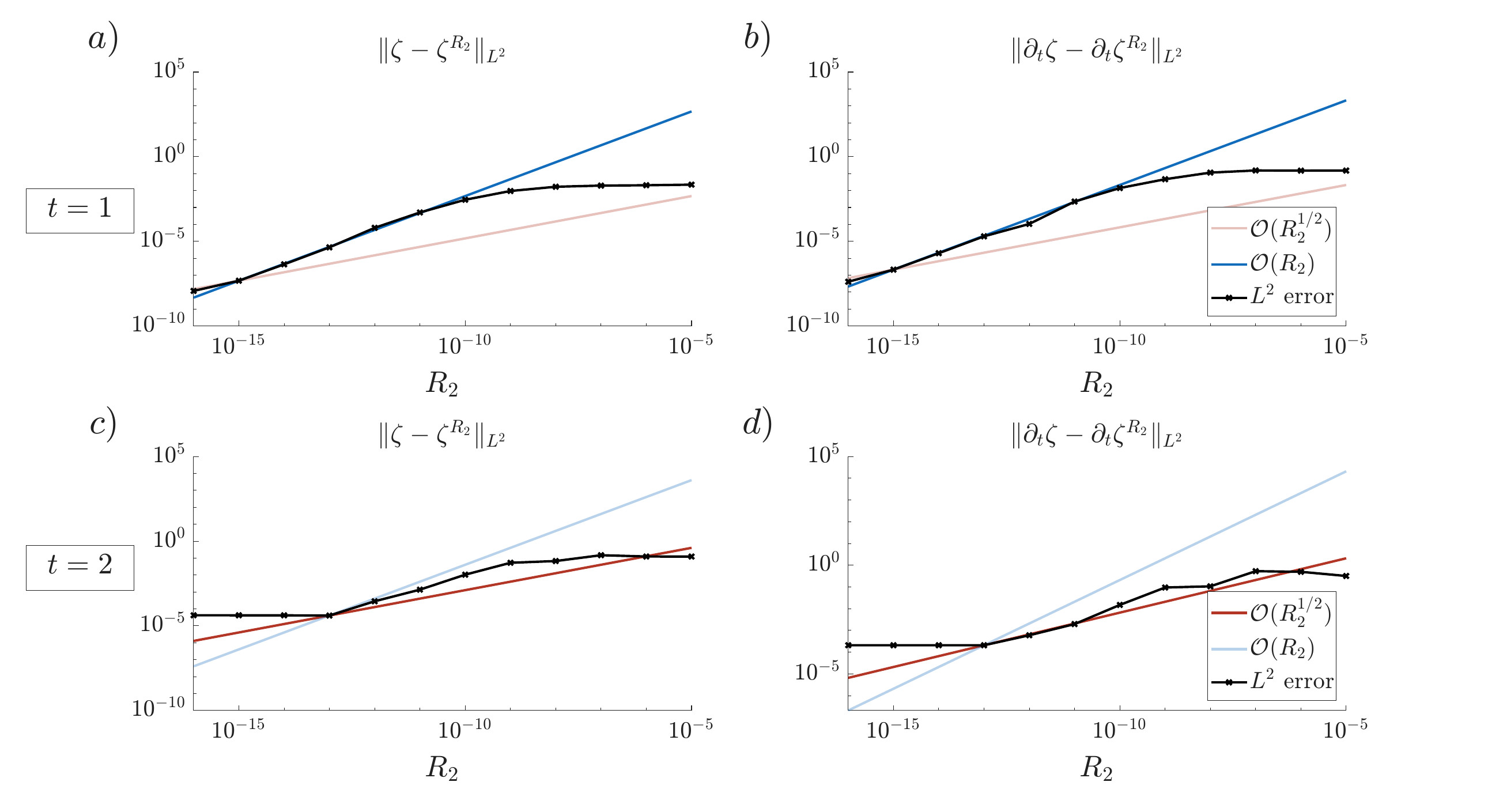}
    \caption{Log-log error plots for a fixed-free membrane with $R_1 = 1, R_3 = 100$ in a uniform background fluid flow. (a-b) $L^2$ error in position and velocity against $R_2$ at $t = 1$. (c-d) $L^2$ error in position and velocity against $R_2$ at $t = 2$.}
    \label{l2 convergence figure fixed-free}
\end{figure}

The numerical convergence rates observed in figure \ref{l2 convergence figure fixed-free} are consistent with the following theorem.

\begin{theorem}
    Let $\epsilon = R_2 >0$,  Suppose that $\zeta^0 = \zeta \in C^1_tH^2_\alpha$
    solves the membrane equation with Dirichlet boundary conditions:
    \begin{equation}
        R_1\partial_{tt}\zeta  = \partial_\alpha\big((T_0 +R_3(|\partial_\alpha \zeta| -1 ))\hat{\boldsymbol{s}}\big)+ \boldsymbol{f}.
        \label{membrane without epsilon force equation}
    \end{equation} where $\hat{\boldsymbol{s}} =\frac{\partial_\alpha\zeta}{|\partial_\alpha\zeta|}$ denotes the unit tangent. Suppose that for every $\epsilon > 0$ sufficiently small we have a family of solutions $\zeta^\epsilon \in C^1_tH^2_\alpha$ to 
    \begin{equation}
         R_1\partial_{tt}\zeta^\epsilon  = \partial_\alpha\big((T_0 +R_3(|\partial_\alpha \zeta^\epsilon| -1 ))\hat{\boldsymbol{s}}^\epsilon\big)- \epsilon\partial_\alpha^4\zeta^\epsilon + \boldsymbol{f}^\epsilon,
         \label{membrane with epsilon force equation}
    \end{equation}     
    that satisfy the same initial and Dirichlet boundary conditions as $\zeta$. Let $\tau$ denote the maximal time of existence of $\zeta \in C^1_tH^2_\alpha$, and let $\tau_f$ be such that $0 \leq t\leq\tau_f < \tau \leq \infty$ and the following assumptions hold:
    \newline
    
    (i) There are constants $c_0, C_0 >0$ such that for all $\epsilon \geq 0$ we have 
    \begin{equation}
      c_0 \leq |\partial_\alpha\zeta^\epsilon| \leq C_0
     \label{tension nonnegative assumption}
    \end{equation}
    (The elastic bodies never undergo infinite compression or expansion in the limit of zero bending rigidity.) 
    \newline
    
    (ii) $1-\frac{T_0}{R_3}< c_0$. This ensures $T(
    |\partial_\alpha\zeta^\epsilon|) := T_0 +R_3(|\partial_\alpha\zeta^\epsilon|-1) > 0$ so the bodies are under tension everywhere.
    \newline
    
    (iii) There exists constant $M>0$ such that for all $t\in [0,\tau_f]$ and $\epsilon \geq 0$ we have $\|\zeta^{\epsilon}\|_{H^2}\leq M$. This ensures the stored bending energy does not blow up in the limit of $\epsilon \rightarrow 0$.
    \newline

    (iv) $r_\epsilon(t) := \|\boldsymbol{f}^\epsilon - \boldsymbol{f}\|_{L^2_\alpha} \rightarrow 0 $ in $L^2_t$. as $\epsilon \rightarrow 0$. The external forcing is continuous at $\epsilon = 0$ with respect to the norm $\|\cdot\|_{L^2_t L^2_\alpha }=\| \|\cdot\|_{L^2_\alpha}\|_{L^2_t}$.
    \newline
    
    Then we have $\lim_{\epsilon \rightarrow0}\|\zeta^\epsilon- \zeta\|_{H^1} = \lim_{\epsilon \rightarrow0}\|\partial_t\zeta^\epsilon- \partial_t\zeta\|_{L^2} = 0$ uniformly on $t\in[0,\tau_f]$. In addition, there exists constant $C_{\tau_f}$ such that
    \begin{equation}
    \|\zeta - \zeta^\epsilon\|_{H^1} + \|\partial_t\zeta - \partial_t\zeta^\epsilon\|_{L^2} \leq C_{\tau_f}(\sqrt{\epsilon} +\|r_\epsilon\|_{L^2_t}).
\end{equation}

Moreover, if the membrane solution has additional regularity, $\zeta \in C^1_tH^2_\alpha \cap C^0_tH^4_\alpha$, and additional boundary agreement, $\zeta$ and $\zeta^\epsilon$ share the same tangents at endpoints $\partial_\alpha \zeta(t,\pm1) = \partial_\alpha \zeta^\epsilon(t,\pm1)$, the rate of convergence improves to
\begin{equation}
    \|\zeta - \zeta^\epsilon\|_{H^1} + \|\partial_t\zeta - \partial_t\zeta^\epsilon\|_{L^2} \leq C_{\tau_f}(\epsilon +\|r_\epsilon\|_{L^2_t}),
\end{equation}
    \label{convergence theorem in L2}
\end{theorem}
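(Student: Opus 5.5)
The plan is an energy estimate for the difference $w:=\zeta^\epsilon-\zeta$, which vanishes at $t=0$ together with $\partial_tw$. Since both solutions share the Dirichlet data, $w(\pm1,t)=0$ and $\partial_tw(\pm1,t)=0$. Write $G(q)=\big(T_0+R_3(|q|-1)\big)q/|q|$. Note that $G=\nabla\Phi$ with $\Phi(q)=T_0|q|+\tfrac{R_3}{2}(|q|-1)^2$. Its Hessian is
\begin{equation*}
D G(q)=\frac{T(|q|)}{|q|}\big(I-\hat{\boldsymbol{s}}\hat{\boldsymbol{s}}^T\big)+R_3\,\hat{\boldsymbol{s}}\hat{\boldsymbol{s}}^T,
\end{equation*}
which is symmetric. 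It is uniformly positive definite and bounded whenever $c_0\le|q|\le C_0$, by assumptions (i) and (ii). Subtracting the two equations gives
\begin{equation*}
R_1\partial_{tt}w=\partial_\alpha\big(A\,\partial_\alpha w\big)-\epsilon\partial_\alpha^4\zeta^\epsilon+(\boldsymbol{f}^\epsilon-\boldsymbol{f}),
\qquad
A(\alpha,t)=\int_0^1DG(q_\theta)\,d\theta,
\end{equation*}
where $q_\theta=\theta\partial_\alpha\zeta^\epsilon+(1-\theta)\partial_\alpha\zeta$. I then test with $\partial_tw$ and integrate over $(-1,1)$; the elastic boundary term drops because $\partial_tw(\pm1)=0$. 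The energy to control is
\begin{equation*}
E(t)=\tfrac{R_1}{2}\|\partial_tw\|_{L^2}^2+\tfrac12\int_{-1}^1A\,\partial_\alpha w\cdot\partial_\alpha w\,d\alpha .
\end{equation*}
Given uniform ellipticity of $A$ and Poincaré's inequality (since $w(\pm1)=0$), $E$ is equivalent to $\|w\|_{H^1}^2+\|\partial_tw\|_{L^2}^2$.

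The elastic part produces the commutator term $\tfrac12\int\partial_tA\,\partial_\alpha w\cdot\partial_\alpha w$. I would bound it by $C\|\partial_tA\|_{L^\infty}E$. Because $D^2G$ is bounded on $\{|q|\ge c_0\}$, $\|\partial_tA\|_{L^\infty}$ is controlled by $\|\partial_t\partial_\alpha\zeta\|_{L^\infty}$ and $\|\partial_t\partial_\alpha\zeta^\epsilon\|_{L^\infty}$. For $\zeta$ this quantity is bounded via $C^1_tH^2_\alpha\hookrightarrow C^0_tW^{1,\infty}_\alpha$. For $\zeta^\epsilon$ I need a bound uniform in $\epsilon$, which I would read into assumption (iii), taken to control $\partial_t\zeta^\epsilon$ in $H^2$ as well; otherwise it must be added as a hypothesis. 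The forcing contributes at most $r_\epsilon\|\partial_tw\|\le \tfrac12 r_\epsilon^2+CE$.

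The bending term is treated differently in the two statements. For the $O(\sqrt\epsilon)$ bound I integrate by parts twice in $\alpha$; the boundary terms vanish since $\partial_tw(\pm1)=0$ and $\partial_\alpha^2\zeta^\epsilon(\pm1)=0$. This gives
\begin{equation*}
-\epsilon\int\partial_\alpha^2\zeta^\epsilon\cdot\partial_\alpha^2\partial_tw
=-\tfrac{\epsilon}{2}\tfrac{d}{dt}\|\partial_\alpha^2\zeta^\epsilon\|^2+\epsilon\int\partial_\alpha^2\zeta^\epsilon\cdot\partial_\alpha^2\partial_t\zeta .
\end{equation*}
After integrating in time, the first piece is bounded by $\epsilon M^2$ using (iii). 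The second is bounded by $\epsilon M\int_0^{\tau_f}\|\partial_t\zeta\|_{H^2}\,dt$, finite because $\zeta\in C^1_tH^2_\alpha$. Gronwall then yields $E(t)\le C_{\tau_f}(\epsilon+\|r_\epsilon\|_{L^2_t}^2)$, and taking square roots gives the first rate.

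For the improved rate I instead split $\epsilon\partial_\alpha^4\zeta^\epsilon=\epsilon\partial_\alpha^4w+\epsilon\partial_\alpha^4\zeta$. The piece $\epsilon\int\partial_\alpha^4w\cdot\partial_tw$ equals $\tfrac{\epsilon}{2}\tfrac{d}{dt}\|\partial_\alpha^2w\|^2$ plus boundary terms $\epsilon[\partial_\alpha^3w\cdot\partial_tw-\partial_\alpha^2w\cdot\partial_\alpha\partial_tw]_{-1}^{1}$. The first of these vanishes by the Dirichlet data, and the second vanishes by the assumed tangent agreement $\partial_\alpha w(\pm1,t)=0$. I therefore add the nonnegative term $\tfrac{\epsilon}{2}\|\partial_\alpha^2w\|^2$ to $E$. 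The remaining piece is bounded by $\epsilon\|\zeta\|_{H^4}\|\partial_tw\|\le \epsilon^2\|\zeta\|_{H^4}^2+CE$, so Gronwall gives $E\le C(\epsilon^2+\|r_\epsilon\|^2_{L^2_t})$, i.e.\ the $O(\epsilon)$ rate. Uniform convergence then follows from (iv).

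The main obstacle I anticipate is the ellipticity of the averaged matrix $A$. The segment $q_\theta$ between two vectors of length at least $c_0$ can pass closer to the origin, where $T(|q_\theta|)$ may turn negative. If that happens, first try a smallness/continuity argument: for small $\epsilon$ and short times $\partial_\alpha\zeta^\epsilon$ stays close to $\partial_\alpha\zeta$, so the segment stays in $\{|q|\ge c_0/2\}$. This needs a slightly strengthened version of (ii) together with a bootstrap. Failing that, I would use strict monotonicity of $G$, i.e.\ convexity of $\Phi$ on the region where $T>0$, directly in place of the mean-value matrix. The second delicate point is the uniform-in-$\epsilon$ control of $\partial_tA$ discussed above.
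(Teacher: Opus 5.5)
Your plan has a genuine gap at the commutator term $\tfrac12\int_{-1}^1\partial_tA\,\partial_\alpha w\cdot\partial_\alpha w\,d\alpha$. Since $A=\int_0^1DG(q_\theta)\,d\theta$ depends on $\partial_\alpha\zeta^\epsilon$, its time derivative contains $\theta\,D^2G(q_\theta)\,\partial_t\partial_\alpha\zeta^\epsilon$. Your Gronwall constant therefore requires $\sup_\epsilon\|\partial_t\partial_\alpha\zeta^\epsilon\|_{L^1_tL^\infty_\alpha}<\infty$.

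The theorem supplies no such bound. Assumption (iii) controls only $\|\zeta^\epsilon\|_{H^2}$, and uniform higher-norm bounds on $\partial_t\zeta^\epsilon$ would need $\epsilon$-independent higher-order energy estimates that the hypotheses do not give. Reading a bound on $\|\partial_t\zeta^\epsilon\|_{H^2}$ into (iii) therefore proves a different, weaker theorem. Write $\partial_t\partial_\alpha\zeta^\epsilon=\partial_t\partial_\alpha\zeta+\partial_t\partial_\alpha w$. The offending piece is then the cubic term $\int_{-1}^1\int_0^1\theta\,D^2G(q_\theta)[\partial_t\partial_\alpha w,\partial_\alpha w,\partial_\alpha w]\,d\theta\,d\alpha$. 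Integrating it by parts in time does not remove it; it only regenerates terms of the same type.

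The missing idea, which is what the paper uses, is the relative energy (Bregman divergence) of the stretching density $W=\Phi$. Replace $\tfrac12\int A\,\partial_\alpha w\cdot\partial_\alpha w$ by $\int_{-1}^1W(\partial_\alpha\zeta^\epsilon;\partial_\alpha\zeta)\,d\alpha$, where
\[
W(\xi_1;\xi_2)=W(\xi_1)-W(\xi_2)-DW(\xi_2)(\xi_1-\xi_2)=\int_0^1(1-\theta)D^2W(q_\theta)\,d\theta\,[\xi_1-\xi_2,\xi_1-\xi_2].
\]
The weight $1-\theta$, in place of your uniform $\tfrac12$, is exactly what is needed. Indeed
\[
\tfrac{d}{dt}W(\xi_1;\xi_2)=(DW(\xi_1)-DW(\xi_2))\cdot\dot\xi_1-D^2W(\xi_2)(\xi_1-\xi_2)\cdot\dot\xi_2 .
\]
Adding the kinetic identity obtained by testing with $\partial_tw$, every $\partial_t\partial_\alpha\zeta^\epsilon$ cancels, leaving
\[
\int_{-1}^1\big[DW(\partial_\alpha\zeta^\epsilon)-DW(\partial_\alpha\zeta)-D^2W(\partial_\alpha\zeta)\,\partial_\alpha w\big]\cdot\partial_t\partial_\alpha\zeta\,d\alpha .
\]
This is $O(\|\partial_t\partial_\alpha\zeta\|_{L^\infty}\|\partial_\alpha w\|_{L^2}^2)$ and needs only $\zeta\in C^1_tH^2_\alpha$. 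Coercivity of $W(\cdot;\cdot)$ replaces the ellipticity of $A$. Your bending and forcing estimates (both versions) and the Poincar\'e step then carry over unchanged.

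For your other obstacle, no strengthening of (ii) is needed. It is a strict inequality, so there is $\delta\in(0,c_0)$ with $T(c_0-\delta)>0$. While $\|\partial_\alpha w\|_{L^\infty}\le\delta$, every $q_\theta$ satisfies $|q_\theta|\ge c_0-\delta$. The bootstrap must close on all of $[0,\tau_f]$, not only for short times. This follows from the interpolation $\|\partial_\alpha w\|_{L^\infty}\le C\|\partial_\alpha w\|_{L^2}^{1/2}\|\partial_\alpha w\|_{H^1}^{1/2}$ with $\|\partial_\alpha w\|_{H^1}\le 2M$ from (iii). Combined with the energy bound, this gives $\|\partial_\alpha w\|_{L^\infty}\le\delta/2$ for small $\epsilon$.
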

\begin{remark}
   It is possible to show that (iii) follows from (ii). That's intuitively clear. Indeed, if the tension is uniformly bounded below, the membrane is always uniformly taut, so the curvature is uniformly bounded above.
\end{remark}
The proof of theorem \ref{convergence theorem in L2} is somewhat technical. However, the main ideas here are well known \cite{kato1966perturbation}. Before we begin the proof, we fix the notation and explain some important facts that we will use:
\newline

Fix $T_0,R_1,R_3$. Define the stretching energy density as 

\begin{equation}
    W(\xi) = T_0|\xi| + \frac{1}{2}R_3(|\xi|-1)^2.
\end{equation}
Then the associated stretching energy is $\mathcal{U}_{stretch} = \int_{-1}^1W(\partial_\alpha\zeta)d\alpha$. Also, \begin{equation}
    (T_0 +R_3(|\partial_\alpha \zeta^\epsilon| -1 ))\hat{\boldsymbol{s}}^\epsilon = DW(\zeta^\epsilon).
\end{equation}Correspondingly, define the relative bending energy density 

\begin{equation}
    W(\xi_1;\xi_2) = W(\xi_1) - W(\xi_2) - DW(\xi_2)(\xi_1 -\xi_2).
\end{equation}

The following result holds:
\begin{lemma}
    Suppose that \begin{equation}
    \max\{1-\frac{T_0}{R_3},0\}< c_0 \leq |\xi_1|,|\xi_2| \leq C_0
    \end{equation}
    
    There exist $c_1$ and $\delta: 0<\delta<c_0$ such that 
    $$ 0 <c_1 \leq T(c_0 -\delta).$$
    Moreover, we may find constants $\tilde{c}_0, \tilde{C}_0$ such that for all $|\xi_1 - \xi_2|\leq\delta$ we have 
    \begin{equation}
        \tilde{c}_0|\xi_1 - \xi_2|^2 \leq W(\xi_1;\xi_2) \leq  \tilde{C}_0|\xi_1 - \xi_2|^2.
        \label{stretching energy asymptotics}
    \end{equation}
    \label{hessian lemma}
\end{lemma}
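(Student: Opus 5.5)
The plan is to treat $W$ as a function on $\mathbb R^2\setminus\{0\}$ and control the relative energy $W(\xi_1;\xi_2)$ through its Hessian. First, the constant $c_1$ and the margin $\delta$. The tension $T(r)=T_0+R_3(r-1)$ is affine and increasing in $r$, and the hypothesis $\max\{1-T_0/R_3,0\}<c_0$ says exactly that $T(c_0)>0$. By continuity we can choose $\delta$ with $0<\delta<c_0$ so small that $c_0-\delta>1-T_0/R_3$. For instance, take $\delta=\tfrac12\bigl(c_0-\max\{1-T_0/R_3,0\}\bigr)$. Then $c_1:=T(c_0-\delta)>0$, and $T(r)\ge c_1$ for every $r\ge c_0-\delta$.

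Second, the Hessian of $W$. For $\xi\neq0$, write $\hat\xi=\xi/|\xi|$. We have $DW(\xi)=T(|\xi|)\hat\xi$, and differentiating once more gives
\begin{equation*}
D^2W(\xi)=R_3\,\hat\xi\otimes\hat\xi+\frac{T(|\xi|)}{|\xi|}\bigl(I-\hat\xi\otimes\hat\xi\bigr).
\end{equation*}
The eigenvalues are $R_3$ in the tangential direction and $T(|\xi|)/|\xi|$ in the normal direction. We then restrict to $|\xi|\in[c_0-\delta,\,C_0+\delta]$. On this annulus the smallest eigenvalue is at least $\min\{R_3,\,c_1/(C_0+\delta)\}=:2\tilde c_0>0$. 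The largest is at most $\max\{R_3,\,T(C_0+\delta)/(c_0-\delta)\}=:2\tilde C_0$. Here $R_3>0$ is needed; it is implicit in the hypothesis involving $T_0/R_3$.

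Third, Taylor's formula with integral remainder along the segment. Set $\xi_\theta=\xi_2+\theta(\xi_1-\xi_2)$. Then
\begin{equation*}
W(\xi_1;\xi_2)=\int_0^1(1-\theta)\,(\xi_1-\xi_2)^{T}D^2W(\xi_\theta)(\xi_1-\xi_2)\,d\theta .
\end{equation*}
If every $\xi_\theta$ lies in the annulus where the Hessian bounds hold, integrating those bounds against $\int_0^1(1-\theta)\,d\theta=\tfrac12$ gives \eqref{stretching energy asymptotics} with the constants $\tilde c_0,\tilde C_0$ defined above.

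The main obstacle is making sure the segment stays in the annulus, since $|\xi_\theta|$ can dip below $\min(|\xi_1|,|\xi_2|)$ along the chord. The upper bound is immediate by convexity of the norm: $|\xi_\theta|\le C_0$. For the lower bound, I would use that the distance from $\xi_\theta$ to the nearer endpoint is at most $\tfrac12|\xi_1-\xi_2|\le\delta/2$. This gives $|\xi_\theta|\ge c_0-\delta/2\ge c_0-\delta$, which also keeps the segment away from the origin where $W$ fails to be smooth. This is precisely why the smallness condition $|\xi_1-\xi_2|\le\delta$ appears, and it is where the proof needs the most care.
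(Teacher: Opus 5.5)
Your proof is correct and follows essentially the same route as the paper: choose $\delta$ from the strict inequality so that $T\ge c_1>0$ for $|\xi|\ge c_0-\delta$, check that the segment stays in the annulus, and bound the Taylor remainder using the Hessian eigenvalues $R_3$ and $T(|\xi|)/|\xi|$ (the paper uses the Lagrange form of the remainder rather than the integral form). Your bookkeeping is slightly tighter than the paper's, since you track the factor $\tfrac12$ coming from the remainder and put $c_0-\delta$ rather than $c_0$ in the denominator of the upper eigenvalue bound.
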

\begin{proof}
    Since $\max\{1-\frac{T_0}{R_3},0\} < c_0$, we may find $0 <\delta < c_0$ and $0 <c_1$ such that $1-\frac{T_0}{R_3} +\frac{c_1}{R_3} +\delta \leq c_0$. By rearranging we have $$0 < c_1 \leq T(c_0-\delta).$$ Consider the line segment $\ell(z) = \xi_2 + z(\xi_1 -\xi_2),\ 0\leq z \leq 1$. Then $|\ell(z)| \geq |\xi_1| - |\xi_1 - \xi_2| \geq c_0 - \delta$ since $c_0 \leq |\xi_1|,|\xi_2|$. Hence the line segment $\ell(z)$ lies on the annulus $c_0-\delta \leq \ell(z) \leq C_0$ where the tension $T(\ell(z))\geq T(c_0-\delta)\geq c_1$ is bounded below uniformly by $c_1$.
    \newline
    
    By Taylor expansion with Lagrange remainder, $$W(\xi_1;\xi_2) = W(\xi_1) - (W(\xi_2) + DW(\xi_2)(\xi_1 -\xi_2)) = \frac{1}{2}(\xi_1 - \xi_2)^TD^2W(\ell(z_0))(\xi_1 - \xi_2)$$ for some $0 < z_0 < 1$. Thus, we need only show that for some $\tilde{c}_0,\tilde{C}_0$ we have 
    \begin{equation}
        \tilde{c}_0 I \leq \frac{1}{2}D^2W \leq \tilde{C}_0 I
    \end{equation}
    on the annulus $0 <c_0 -\delta\leq \xi \leq C_0$. This is immediate from a direct computation in polar coordinates. Indeed, $W(\xi)$ is a radial function. It is easy to check that in orthonormal polar coordinates the Hessian $D^2W(\xi)$ is given by 
    \begin{equation}
        D^2W(\xi) = \begin{bmatrix} 
        R_3 & 0 \\
        0 & T(\xi)/|\xi|
        \end{bmatrix}
    \end{equation}
    Hence we may take $\tilde{C}_0 = \max(R_3, T(C_0)/c_0)$ and $\tilde{c}_0 = \min(R_3,c_1/C_0)$.
\end{proof}

We also need the following version of the Gagliardo-Nirenberg interpolation inequality:
\begin{lemma}
    Let $f \in H^1_\alpha[-1,1]$. Then
    \begin{equation}
        \|f\|_{L^\infty} \leq C\|f\|_{L^2}^{1/2}\|f\|_{H^1}^{1/2}
    \end{equation}
    \label{gagliardo lemma}
\end{lemma}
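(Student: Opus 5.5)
The plan is to prove the one-dimensional Gagliardo--Nirenberg (Agmon-type) inequality of Lemma \ref{gagliardo lemma} in two stages. First I will establish it for smooth $f$ on $[-1,1]$ through the fundamental theorem of calculus applied to $f^2$. Then I will pass to general $f\in H^1$ by density of $C^1([-1,1])$ in $H^1(-1,1)$. The continuous embedding $H^1\hookrightarrow C^0(\overline I)$ recalled in the appendix preliminaries makes $\|f\|_{L^\infty}$ meaningful and continuous with respect to the $H^1$ norm, so this limit is harmless.

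For smooth $f$, the key identity is
$$
f(\alpha)^2 = f(\alpha_0)^2 + 2\int_{\alpha_0}^{\alpha} f f'\,d\alpha',
$$
valid for any $\alpha_0,\alpha\in[-1,1]$. I choose $\alpha_0$ so that $f(\alpha_0)^2$ equals the mean value of $f^2$, namely $\tfrac12\|f\|_{L^2}^2$; this is possible by the mean value theorem for integrals. Cauchy--Schwarz then bounds the integral term:
$$
\|f\|_{L^\infty}^2 \leq \tfrac12\|f\|_{L^2}^2 + 2\|f\|_{L^2}\|f'\|_{L^2}.
$$

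To reach the stated product form, I use $\|f\|_{L^2}\leq\|f\|_{H^1}$ in the first term, which gives $\|f\|_{L^2}^2\leq \|f\|_{L^2}\|f\|_{H^1}$. In the second term I use $\|f'\|_{L^2}\leq \|f\|_{H^1}$. Together these yield
$$
\|f\|_{L^\infty}^2\leq \tfrac52\|f\|_{L^2}\|f\|_{H^1}.
$$
Taking square roots gives the lemma with $C=\sqrt{5/2}$. For vector-valued $f$ the same argument applies componentwise, or directly to $|f|^2$, with a harmless change of constant.

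There is no real obstacle here. The only point needing care is on a bounded interval: the pure homogeneous form $\|f\|_{L^2}^{1/2}\|f'\|_{L^2}^{1/2}$ fails for constants, which is why the statement uses the full $H^1$ norm. The lower-order term produced by the averaging point $\alpha_0$ is absorbed exactly as above. The density step is standard: take smooth approximants $f_n\to f$ in $H^1$, so that $f_n\to f$ uniformly by the embedding and both sides of the inequality converge.
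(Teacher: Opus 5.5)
Your proof is correct and follows the paper's argument: you apply the fundamental theorem of calculus to $f^2$ from a point $\alpha_0$ chosen by the mean value theorem so that $f(\alpha_0)^2=\tfrac12\|f\|_{L^2}^2$, bound $\int ff'$ by Cauchy--Schwarz, and absorb the lower-order term using $\|f\|_{L^2}^2\le\|f\|_{L^2}\|f\|_{H^1}$. Your version is slightly more careful than the paper's: you use the correct factor $2$ in $f(\alpha)^2-f(\alpha_0)^2=2\int_{\alpha_0}^{\alpha}ff'$ (the paper writes $\tfrac12$, which its generic constant absorbs), and you supply the density step that the paper leaves implicit.
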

\begin{proof}
    By the fundamental theorem of calculus,
    \begin{equation}
        f^2(\alpha) - f^2(b) = \frac{1}{2}\int_b^\alpha f'(\beta)f(\beta)d\beta \leq B\|f\|_{L^2}\|f\|_{H^1}
    \end{equation}
    for some $B$. Now choose $b$ such that $f^2(b) = \frac{1}{2}\|f\|_{L^2}^2$, which can always be done by the mean value theorem. Then by relabeling $B$,
    \begin{align}
        f^2(\alpha) &\leq B(\|f\|_{L^2}\|f\|_{H^1} +  \|f\|_{L^2}^2) \\ 
        &\leq B\|f\|_{L^2}\|f\|_{H^1} 
    \end{align}
    Hence for some $B$, $\|f\|_{L^\infty} \leq B\|f\|_{L^2}^{1/2}\|f\|_{H^1}^{1/2}$
    
\end{proof}

We may now prove theorem \ref{convergence theorem in L2}.

\begin{proof}
      Let $w(\alpha, t;\epsilon) =\zeta(\alpha, t) - \zeta^\epsilon(\alpha, t)$ be the error function. By assumption $w(0,\alpha;\epsilon) = 0$ since $\zeta$ and $\zeta^\epsilon$ have the same initial conditions. Hence since $w(\alpha, t;\epsilon) \in C^1_tH^2_\alpha$ the following is nonzero: 
      \begin{equation}
          \tau_\epsilon :=\sup\{t < \tau_f :\|\partial_\alpha\zeta(t,\cdot) -\partial_\alpha\zeta^\epsilon(r,\cdot)\|_{L^\infty} \leq\delta\} >0
      \end{equation}
      We prove the result first on $[0,\tau_\epsilon]$. Later we bootstrap the result to show for sufficiently small $\epsilon$ we have $\tau_\epsilon = \tau_f$. Here $\delta >0$ is chosen as in lemma \ref{hessian lemma}. 
      Define the \textit{relative energy} between $\zeta,\zeta^\epsilon$ to be 
    \begin{equation}
       E(t) = \frac{R_1}{2}\|\partial_tw\|^2_{L^2}
            +\int_{-1}^1W(\partial_\alpha\zeta^\epsilon;\partial_\alpha\zeta)d\alpha
            + \frac{\epsilon}{2}\|\partial_{\alpha\alpha}\zeta^\epsilon\|^2_{L^2}.
    \end{equation} 
    
    By lemma \ref{hessian lemma} we know that for all $t \leq \tau_\epsilon$ we have
    \begin{equation}
         \tilde{c}_0|\partial_\alpha\zeta -\partial_\alpha\zeta^\epsilon|^2\leq W(\partial_\alpha\zeta^\epsilon;\partial_\alpha\zeta) \leq \tilde{C}_0|\partial_\alpha\zeta -\partial_\alpha\zeta^\epsilon|^2
    \end{equation} which also guarantees that our relative energy is always non-negative $E(t)\geq 0$. 
    \newline
    
    Notice that 
    \begin{multline}
        \frac{d}{dt}E(t) = R_1\int_{-1}^1(\partial_tw)(\partial_t^2w)d\alpha  \\ + \int_{-1}^1\bigg((DW(\partial_\alpha\zeta^\epsilon) -DW(\partial_\alpha\zeta))\partial_{\alpha t}\zeta^\epsilon - \partial_{\alpha t}\zeta \cdot D^2W(\partial_\alpha\zeta)w\bigg)\, d\alpha  + \epsilon\int_{-1}^1\partial_{\alpha\alpha}\zeta^\epsilon\partial_{\alpha\alpha t}\zeta^\epsilon  d\alpha
        \label{time derivative energy step 1 equation}
    \end{multline}

    Furthermore, by subtracting equations \eqref{membrane without epsilon force equation} and \eqref{membrane with epsilon force equation} we may obtain 
    \begin{equation}
         R_1\partial_{tt}w= \partial_\alpha DW(\partial_\alpha\zeta) - \partial_\alpha DW(\partial_\alpha\zeta^\epsilon) + \epsilon\partial_{\alpha\alpha\alpha\alpha} \zeta^\epsilon+(\boldsymbol{f} -\boldsymbol{f}^\epsilon)
         \label{strong error equation}
    \end{equation}
    Hence after multiplying \eqref{strong error equation} by $\partial_tw$ and integrating by parts we have
     \begin{equation}
         R_1\int_{-1}^1(\partial_tw)(\partial_t^2w)d\alpha = \int_{-1}^1\big(DW(\partial_\alpha\zeta^\epsilon) -DW(\partial_\alpha\zeta)\big)\partial_{\alpha t}w\, d\alpha +\epsilon\int_{-1}^1\partial_{\alpha\alpha}\zeta^{\epsilon}\partial_{\alpha\alpha t}wd\alpha + \int_{-1}^1(\boldsymbol{f} -\boldsymbol{f}^\epsilon)\partial_tw\, d\alpha.
     \end{equation}

     Note that the boundary terms incurred by integrating by parts vanish exactly because $\zeta$ and $\zeta^\epsilon$ share the same Dirichlet boundary conditions so that $\partial_tw(t,\pm1) = 0$. Substituting this identity into equation \eqref{time derivative energy step 1 equation} yields

     \begin{align}
        \frac{d}{dt}E(t) &= \int_{-1}^1\bigg((DW(\partial_\alpha\zeta^\epsilon) -DW(\partial_\alpha\zeta))\partial_{\alpha t}\zeta - \partial_{\alpha t}\zeta \cdot D^2W(\partial_\alpha\zeta)w\bigg)\, d\alpha \nonumber \\ & \ \ \ \ \ \ \ \ \ \ \ \ \ \ \ + \int_{-1}^1(\boldsymbol{f} -\boldsymbol{f}^\epsilon)\partial_tw\, d\alpha
         + \epsilon\int_{-1}^1\partial_{\alpha \alpha}\zeta^\epsilon\partial_{\alpha\alpha t}\zeta d\alpha \\
        &=: J_1 + J_2 + J_3 
        \label{time derivative energy step 2 equation}
    \end{align}

    By the choice of $\delta$, by Taylor expansion we know that $$DW(\partial_\alpha\zeta^\epsilon) - DW(\partial_\alpha\zeta) - D^2W(\partial_\alpha\zeta^\epsilon-\partial_\alpha\zeta) = \mathcal{O}(|\partial_\alpha\zeta^\epsilon-\partial_\alpha\zeta|^2) = \mathcal{O}(W(\partial_\alpha\zeta^\epsilon;\partial_\alpha\zeta))$$
    where the last equality follows again from lemma \ref{hessian lemma}.
    
    Hence $$J_1 \leq  B_0\int_{-1}^1W(\partial_\alpha\zeta^\epsilon;\partial_\alpha\zeta)d\alpha \leq B_1E(t)$$
    for some $B_0,B_1 > 0$. Similarly, 
    $$J_2 \leq \|\boldsymbol{f}-\boldsymbol{f}^\epsilon\|_{L^2}\|w_t\|_{L^2} \leq \frac{1}{2}\|\boldsymbol{f}-\boldsymbol{f}^\epsilon\|^2_{L^2} + \frac{1}{2}\|w_t\|_{L^2}^2 
 \leq r_\epsilon(t)^2 + E(t)$$
     Finally, 
    $$J_3= \epsilon\int_{-1}^1\partial_{\alpha \alpha}\zeta^\epsilon\partial_{\alpha\alpha t}\zeta d\alpha \leq \epsilon\|\zeta^\epsilon\|_{H^2}\|\partial_{t}\zeta\|_{H^2} = \mathcal{O}(\epsilon)$$ 

    since $\zeta \in C^1_tH^2_\alpha$. Hence we may find constants $B_1,B_2$ such that on the closed interval $[0,\tau_\epsilon]$ we have
    \begin{equation}
        \frac{d}{d}E(t) \leq B_1 E(t) + r_\epsilon(t)^2 + B_2\epsilon
    \end{equation}

    This is exactly the set up in Gr\"onwall's inequality. By using integrating factors we see that on the interval $t \in[0,\tau_\epsilon]$
    \begin{equation}
        \|\partial_t\zeta - \partial_t\zeta^\epsilon \|^2_{L^2} + \|\partial_\alpha\zeta - \partial_\alpha\zeta^\epsilon \|_{L^2}^2\leq A_1E(t) \leq A_2e^{B_1t}(\epsilon t + \int_{-1}^tr_\epsilon^2(t')\,dt') 
    \end{equation}
    For some constants $A_1$, $A_2$. Hence we may find constant $C_{\tau_f}$ such that 
    \begin{equation}
        \|\partial_t\zeta - \partial_t\zeta^\epsilon \|^2_{L^2} + \|\partial_\alpha\zeta - \partial_\alpha\zeta^\epsilon \|_{L^2}^2\leq C_{\tau_f}(\epsilon + \|r_\epsilon\|^2_{L^2_t}) \rightarrow 0 \text{ as $\epsilon \rightarrow 0$.}
        \label{dt da inequality equation}
    \end{equation}
    \newline
    
    Now we bootstrap the inequality. Let $t = \tau_\epsilon$. Towards a contradiction assume $\tau_\epsilon <\tau_f$. By lemma \ref{gagliardo lemma}, for sufficiently small $\epsilon$ we have 
    \begin{align}
        \delta &\leq \|\partial_\alpha\zeta(\tau_\epsilon,\cdot) - \partial_\alpha\zeta^\epsilon(\tau_\epsilon,\cdot)\|_{L^\infty} \\ 
        &\leq B\|\partial_\alpha\zeta(\tau_\epsilon,\cdot) - \partial_\alpha\zeta^\epsilon(\tau_\epsilon,\cdot)\|_{L^2}^{1/2}\|\partial_\alpha\zeta(\tau_\epsilon,\cdot) - \partial_\alpha\zeta^\epsilon(\tau_\epsilon,\cdot)\|_{H^1}^{1/2} \\ &\leq 2BMC_{\tau_f}^{1/2}(\epsilon + \|r_\epsilon\|^2_{L^2_t}) ^{1/4} \\ 
        &\leq \delta/2.
    \end{align}
The first inequality follows from the definition of $\tau_\epsilon$ and the second inequality follows from lemma \ref{gagliardo lemma}. The third inequality follows from equation \eqref{dt da inequality equation} and assumption (iii) while the last inequality follows from assumption (iv) and sufficiently small $\epsilon$. This is a contradiction, hence $\tau_\epsilon =\tau_f$ for sufficiently small $\epsilon$. 
\newline

Thus, for any $\tau_f <\tau \leq \infty$ there exists constant $C_{\tau_f}$ such that the inequality 
\begin{equation}
        \|\partial_t\zeta - \partial_t\zeta^\epsilon \|^2_{L^2} + \|\partial_\alpha\zeta - \partial_\alpha\zeta^\epsilon \|^2_{L^2}\leq C_{\tau_f} (\epsilon t + \|r_\epsilon\|^2_{L^2_t})
    \end{equation}
holds for all $t\in[0,\tau_f]$.
\newline

It remains to upgrade this convergence to $\|\zeta - \zeta^\epsilon\|_{L^2}\rightarrow 0$. To this end, define 
$$I(t) = \frac{1}{2}\int_{-1}^1w^2d\alpha =\frac{1}{2}\|\zeta - \zeta^\epsilon\|_{L^2}^2 $$
Then \begin{align}
    \frac{d}{dt}I(t) &= \int_{-1}^1(\zeta -\zeta^\epsilon)(\partial_t\zeta - \partial_t\zeta^\epsilon)d\alpha \\
    &\leq \|\zeta -\zeta^\epsilon\|_{L^2}\|\partial_t\zeta -\partial_t\zeta^\epsilon\|_{L^2} \\
    &\leq \frac{1}{2}\|\zeta -\zeta^\epsilon\|^2_{L^2}\| + \frac{1}{2}\|\partial_t\zeta -\partial_t\zeta^\epsilon\|^2_{L^2} \\
    &= \frac{1}{2}I(t) +\frac{1}{2}C_{\tau_f} (\epsilon t + \|r_\epsilon\|^2_{L^2_t})
\end{align}

So by using integrating factors again, we obtain yet another Gronwall inequality

\begin{align}
    I(t) &\leq \frac{e^{1/2t} C_{\tau_f}}{2}(\epsilon \frac{t^2}{2} + t \|r_\epsilon\|^2_{L^2_t}) \\ 
    &\leq \tilde{C}_{\tau_f}(\epsilon  + \|r_\epsilon\|^2_{L^2_t})
\end{align}
for some $\tilde{C}_{\tau_f}$. Hence by re-labeling constants, we may find $C_{\tau_f}$ such that
\begin{equation}
    \|\zeta - \zeta^\epsilon\|_{H^1} + \|\partial_t\zeta - \partial_t\zeta^\epsilon\|_{L^2} \leq C_{\tau_f}(\sqrt{\epsilon} +\|r_\epsilon\|_{L^2_t}) \rightarrow 0 \text{ as $\epsilon \rightarrow 0$,}
\end{equation}

which concludes the main argument.
\newline

We now sketch how to improve the rate of convergence in $\epsilon$ from $\sqrt{\epsilon}$ to $\epsilon$. To do so, we need to modify the relative energy. Define the \textit{modified relative energy} to be 
\begin{equation}
           \tilde{E}(t) = \frac{R_1}{2}\|\partial_tw\|^2_{L^2}
            +\int_{-1}^1W(\partial_\alpha\zeta^\epsilon;\partial_\alpha\zeta)d\alpha
            + \frac{\epsilon}{2}\|\partial_{\alpha\alpha}w\|^2_{L^2}.
\end{equation}

The only difference is that the relative bending energy  $\frac{\epsilon}{2}\|\partial_{\alpha\alpha}\zeta^\epsilon\|^2_{L^2}$ is replaced with $\frac{\epsilon}{2}\|\partial_{\alpha\alpha}w\|^2_{L^2}.$ The same argument goes through unchanged, except that in this case 

\begin{align}
    J_3 &= \frac{d}{dt}\frac{\epsilon}{2}\|\partial_{\alpha\alpha}w\|^2_{L^2}  + \epsilon\int_{-1}^1
\partial_{\alpha\alpha\alpha\alpha}\zeta^\epsilon\partial_{t}w\,d\alpha \\
&=\epsilon\int_{-1}^1 \partial_{\alpha\alpha\alpha\alpha}\zeta\partial_{t}w \,d\alpha \\
&\leq \epsilon\|\zeta\|_{H^4}\|w_t\|_{L^2} \\ 
&\leq\epsilon^2 \frac{1}{2}\|\zeta\|^2_{H^4} + \frac{1}{2}\|w_t\|^2_{L^2} \\
&\leq B_3(\epsilon^2 + \tilde{E}(t))
\end{align}
for some $B_3$. The boundary terms from integration by parts vanish exactly because of the additional assumption $\partial_\alpha \zeta(t,\pm1) = \partial_\alpha \zeta^\epsilon(t,\pm1)$. Hence after relabeling constants we obtain 
\begin{equation}
    \frac{d}{d}\tilde{E}(t) \leq B_1 \tilde{E}(t) + r_\epsilon(t)^2 + B_2\epsilon^2
\end{equation}

 The rest of the proof goes through exactly the same as before but with $\epsilon^2$ instead of $\epsilon$, improving the convergence rate in $\epsilon$ to $\mathcal{O}(\epsilon)$ from $\mathcal{O}(\epsilon^{1/2})$.
 \end{proof}

In summary, we showed that under some mild assumptions we can expect solutions to the elastic equation with small $R_2$ to converge to solutions to the membrane equation with rate $\mathcal{O}(R_2^{1/2} + \|r_\epsilon\|_{L^2_t})$. In addition, given some additional regularity and boundary assumptions, this convergence can be improved to $\mathcal{O}(R_2+ \|r_\epsilon\|_{L^2_t})$. This linear convergence rate was observed numerically at early times in figure \ref{l2 convergence figure fixed-free}. We note that an important condition for convergence is that the force $\boldsymbol{f}^\epsilon$ is continuous in $\epsilon = R_2$. As with Theorem \ref{Thm1}, the force is not that due to the pressure jump but models its behavior with respect to $R_2$. From a purely physical point of view, that the force induced by the inviscid fluid $\boldsymbol{f}^\epsilon = [p^\epsilon]^+_-\partial_\alpha s^\epsilon\,\hat{\boldsymbol{n}}^\epsilon$ is continuous in $\epsilon$ is a reasonable assumption, and the numerical results support it. 
\newline

Note that the theorem as stated does not apply to the fixed-free problem. Indeed, at the free endpoint, the tension must vanish, and assumption (iii) in theorem \ref{convergence theorem in L2} fails. Nonetheless, the numerical results in figure \ref{l2 convergence figure fixed-free} suggests that a similar result holds. Indeed, the numerical results suggest that the string is uniformly tensioned away from the free end. By using smooth cutoff functions, in the same argument above it is possible to prove a similar convergence result in the fixed-free case on any closed material interval $\alpha \in [-1,a], a<1$ where the bodies are uniformly in tension $T(\partial_\alpha\zeta^\epsilon) \geq c_1 >0$ on $[-1,a]$ for all $\epsilon \geq 0$ at $t\in[0,\tau_f]$. The estimates however become more technical.
\newline

$R_2$ is sometimes used as a regularization parameter in fluid-membrane simulations \cite{connell2007flapping,zhang2026flapping}.
To approximate the membrane regime of $R_2 = 0$ with nonzero $R_2$, $R_2$ has to be extremely small. Indeed, generically, the rate of convergence is extremely slow at $\mathcal{O}(R_2^{1/2})$. Moreover, the constant increases exponentially in the final time. Any small perturbation caused by shrinking $R_2$ is ultimately magnified exponentially in time. This is unsurprising, since for small $R_2$ the system is chaotic in general. As figure
\ref{l2 convergence figure fixed-free} suggests, practically speaking, convergence only begins when $R_2$ decreases below $10^{-10}$. 
In figures \ref{convergence snapshots figure} and \ref{convergence close snapshots figure} we show snapshots of a flapping elastic body in a uniform flow (left to right) from $t = 2.5$ to $t = 5$ and $t =2.5$ to $t =3$ respectively. In both figures we consider $0\leq R_2 \leq10^{-5}$. When $R_2 \leq 10^{-11}$ the snapshots are visually indistinguishable. When $R_2 \geq 10^{-8}$ many of the high frequency membrane perturbations have been smoothed out. Most notably, the tip is much less erratic and doesn't curl up nearly as tightly as can be seen in figure \ref{convergence close snapshots figure}. By $R_2 = 10^{-5}$, the snapshots are significantly different from those with $R_2 \leq 10^{-11}$. In figure \ref{convergence close snapshots figure}, the body does not reverse direction near the tip. This slow rate of convergence was also observed in \cite{manela2017hanging} where they consider a simplified version of the elastic equation and vortex shedding model. However, qualitatively, when $R_2\leq 10^{-10}$ the resulting dynamics are visually identical.

\begin{figure}[H]
    \centering
    \includegraphics[width=1\linewidth, trim = 0cm 2cm 0cm 0cm, clip]{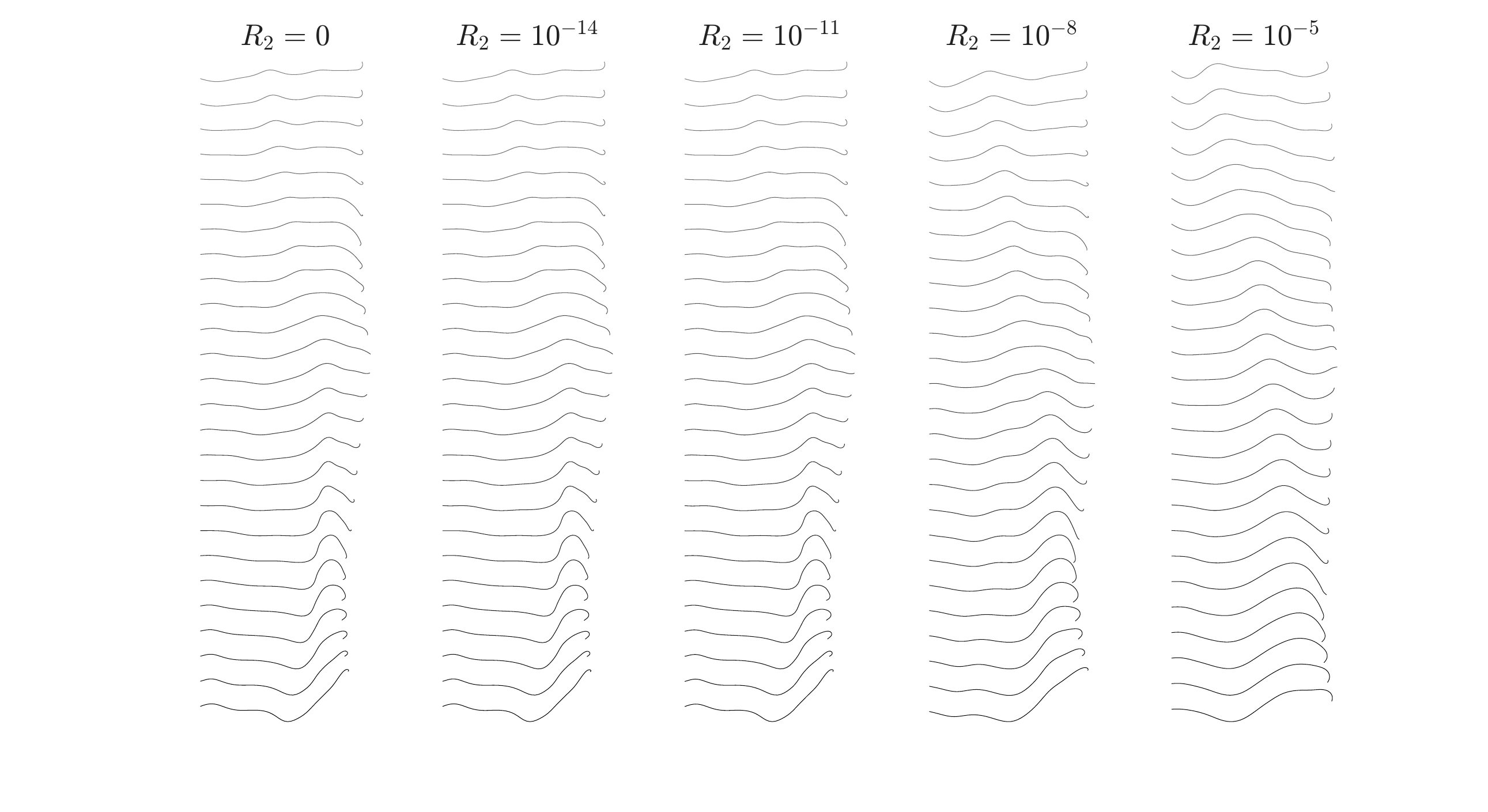}
    \caption{Snapshots of a fixed-free elastic body with $R_2 = 0,\ 10^{-14},\ 10^{-11},\ 10^{-8},\ 10^{-5}$ from $t = 2.5$ to $t = 5$. In all cases $R_1=1$ and $R_3 = 100$. The color changes from light gray to black as $t$ increases.}
    \label{convergence snapshots figure}
\end{figure}

\begin{figure}[H]
    \centering 
    \includegraphics[width=1\linewidth, trim = 0cm 3cm 0cm 0cm, clip]{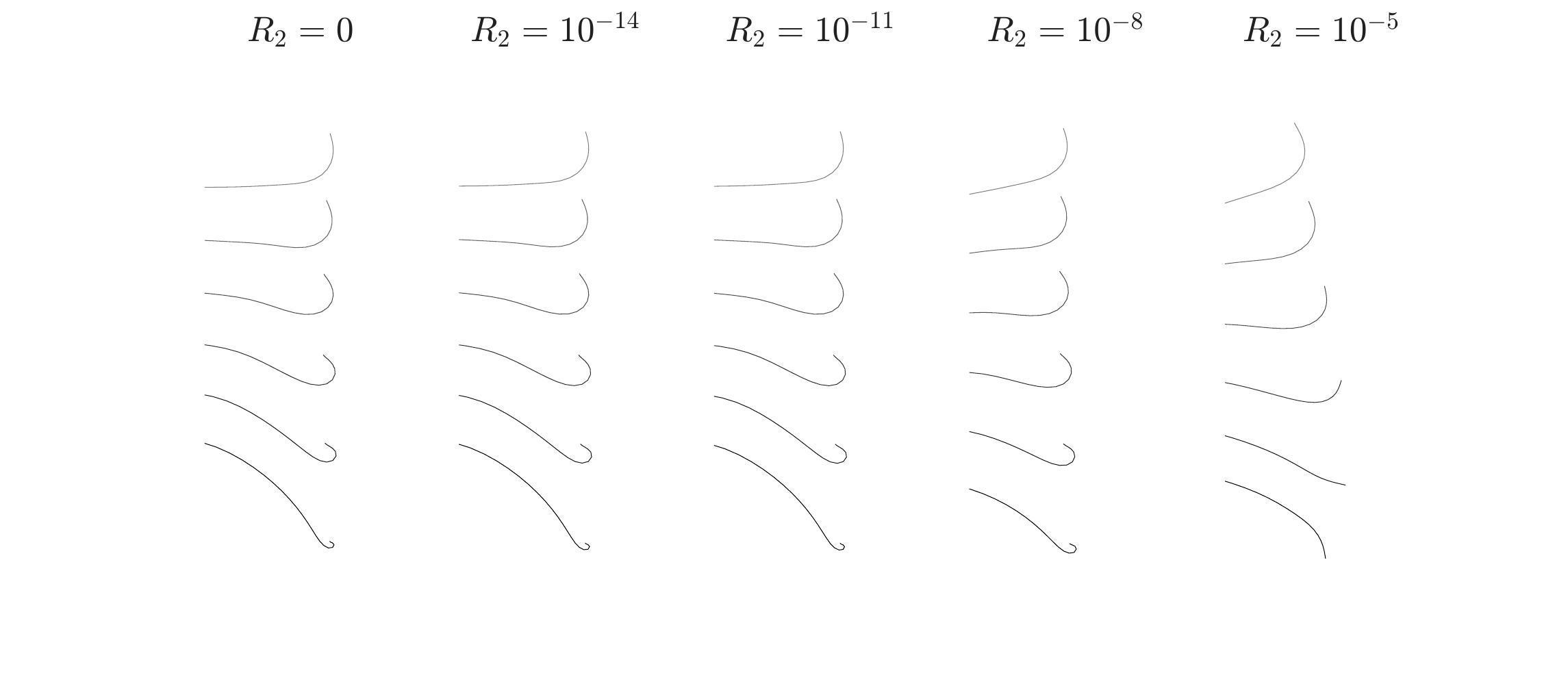}
    \caption{Close-up snapshots of the elastic body in figure \ref{convergence snapshots figure} from $t = 2.5$ to $t = 3$. The color changes from light gray to black as $t$ increases.}
    \label{convergence close snapshots figure}
\end{figure}

\section{The pressure jump equation as a projection of a conservation law}
\label{pressure jump derivation appendix}

Recall the Euler equation given by
\begin{equation}
    \partial_t\boldsymbol{u} + \boldsymbol{u}\cdot\boldsymbol{\nabla}\boldsymbol{u} + \boldsymbol{\nabla}p = 0
\end{equation}
Consider the velocity  $\boldsymbol{u}(\zeta,t)$ of a fluid parcel in an inviscid fluid that follows a material point of the membrane with position $\zeta(\alpha, t)$. Let $\boldsymbol{v} = \partial_t\zeta$. Then from the Euler equation,
\begin{align}
    \frac{d}{dt}(\boldsymbol{u}(\zeta,t)) &= \partial_t\boldsymbol{u} +\boldsymbol{v}\cdot\boldsymbol{\nabla}\boldsymbol{u} \\&= -\boldsymbol{u}\cdot\boldsymbol{\nabla}\boldsymbol{u} + \boldsymbol{v}\cdot\boldsymbol{\nabla}\boldsymbol{u} -\boldsymbol{\nabla}p \\
    &= -(\boldsymbol{u}- \boldsymbol{v})\cdot\boldsymbol{\nabla}\boldsymbol{u} - \boldsymbol{\nabla}p \\
    &=-(\boldsymbol{u}- \boldsymbol{v})\cdot\boldsymbol{\nabla}(\boldsymbol{u}-\boldsymbol v) - \boldsymbol{\nabla}p\\
    &= -\boldsymbol{\nabla}\cdot\big(\boldsymbol{w}\otimes\boldsymbol{w} + pI\big) 
\end{align}
Here, $\boldsymbol{w} =\boldsymbol{u} - \boldsymbol{v}$.
Hence we may write the Euler equation for the velocity of a fluid parcel following a point on the membrane in the form of a conservation law with a time-dependent flux:

\begin{equation}
    \frac{d}{dt}(\boldsymbol{u}(\zeta,t))  + \boldsymbol{\nabla}\cdot\big(\boldsymbol{w}\otimes\boldsymbol{w} + pI\big) = \boldsymbol{0}
    \label{conservation law form of euler }
\end{equation}
This is the starting point of our derivation of the pressure jump equation. As we will now show, projecting this conservation law onto the local tangent vector of the membrane $\hat{\boldsymbol{s}}$ and considering the jump across the membrane yields the pressure jump equation in conservation law form. 
Let $\boldsymbol{w} = w_s\hat{\boldsymbol{s}} + w_n\hat{\boldsymbol{n}}$ and likewise let $\boldsymbol{u} = u_s\hat{\boldsymbol{s}} + u_n\hat{\boldsymbol{n}}$. For convenience, let the superscript $(\cdot)^\pm$ denote the limit of $(\cdot)$ on the $\pm$ side of the membrane.
Then 
\begin{align}  \hat{\boldsymbol{s}}\cdot\bigg(\frac{d}{dt}(\boldsymbol{u}(\zeta,t))  + \boldsymbol{\nabla}\cdot\big(\boldsymbol{w}\otimes\boldsymbol{w} + pI\big)\bigg)^\mp = \frac{d}{dt}u_s^\mp + \hat{\boldsymbol{s}}\cdot\bigg(\boldsymbol{\nabla}\cdot\big(\boldsymbol{w}\otimes\boldsymbol{w} + pI\big)\bigg)^\mp
\end{align}
To obtain the first term on the right-hand side of the previous equation, we used the fact that $d\hat{\boldsymbol{s}}/dt$ is proportional to $\hat{\boldsymbol{n}}$.
Observe that with respect to the basis $\{\hat{\boldsymbol{s}},\hat{\boldsymbol{n}}\}$,
\begin{equation}
    \boldsymbol{w}\otimes\boldsymbol{w} + pI = \begin{bmatrix}
        w_s^2 +p & w_sw_n \\
        w_nw_s & w_n^2 + p
    \end{bmatrix}
\end{equation}
Hence 
\begin{align}
    \hat{\boldsymbol{s}}\cdot\boldsymbol{\nabla}\cdot(\boldsymbol{w}\otimes\boldsymbol{w} + pI) &= 
    \hat{\boldsymbol{s}}\cdot\begin{bmatrix}
        \partial_s(w_s^2 +p) + \partial_n(w_sw_n) \\
        \partial_s(w_nw_s) + \partial_n((w_n^2 + p)
    \end{bmatrix} \\ 
    &=\partial_s(w_s^2 +p) + \partial_n(w_sw_n)\\
    &=\partial_s(w_s^2 +p) + \partial_nw_sw_n + w_s\partial_nw_n \\ 
    &=\partial_s(w_s^2 +p) + \partial_nw_sw_n - w_s\partial_sw_s \\
    &= \partial_s(\frac{1}{2}w_s^2 +p) + \partial_nw_sw_n
\end{align}
where the second to last equality follows from the incompressibility of both $\boldsymbol{u}$ and hence $\boldsymbol{w}$, since $\boldsymbol{v}$ is constant in space. By the no-penetration condition, the normal velocity of the fluid and membrane match on the surface of the membrane, hence $(w_n)^\mp =0$.

Thus 
\begin{equation}
\bigg(\hat{\boldsymbol{s}}\cdot\boldsymbol{\nabla}\cdot(\boldsymbol{w}\otimes\boldsymbol{w} + pI)\bigg)^\mp = \partial_s(\frac{1}{2}(w_s^2)^\mp +p^\mp)
\end{equation}

Furthermore, by taking difference of
\begin{equation}
    \frac{d}{dt}u_s^\mp  + \partial_s(\frac{1}{2}(w_s^2)^\mp +p^\mp) = 0.
\end{equation}
across the $\mp$ sides of the membrane and noting that $(w_s^2)^- - (w_s^2)^+ = (w_s^- + w_s^+ )(w_s^- - w_s^+)$ we obtain
\begin{equation}
    \frac{d}{dt}[u_s]^-_+  + \partial_s(\frac{w_s^+ +w_s^-}{2}[w_s]^-_+ +[p]^-_+) = 0.
\end{equation}
Note that $\mu -\tau = \frac{w_s^+ +w_s^-}{2}$ is the average relative velocity across the membrane. Since $\boldsymbol{v}$ is continuous across the membrane, $[u_s]^-_+ =[w_s]^-_+ = \gamma$ is the vortex sheet strength. Hence this equation is exactly:
\begin{equation}
    \partial_t\gamma  + \partial_s((\mu -\tau)\gamma+[p]^-_+) = 0.
\end{equation}
which is the pressure jump equation in conservation law form for the evolution of $\gamma$.
\newline

In summary, the relative velocity of a fluid parcel following a material point on the membrane admits the structure of a conservation law with a time-dependent flux. Projecting this equation onto the local tangent vector and taking its difference yields a similar conservation law for the vortex sheet strength, since it is equal to the tangential jump in fluid momentum across the membrane. Just as fluid momentum travels from regions of high pressure to low pressure, vortex sheet strength is advected by the average slip velocity of the flow relative to the body and with the arc length derivative of the pressure jump as a source term.

\printbibliography[
heading=bibintoc,
title={References}
]
\end{document}